\documentclass[twocolumns,9pt]{IEEEtran}
\usepackage{amsmath}
\usepackage{amsfonts}
\usepackage{mathrsfs}
\usepackage{pifont}
\usepackage{amssymb}
\usepackage{verbatim}
\usepackage{upgreek}
\usepackage{color}
\usepackage{graphicx}
\usepackage{algorithmic}
\usepackage{algorithm}
\usepackage{epsfig}

\usepackage{fancybox}

\usepackage{amsmath,epsfig,amsfonts,amssymb,graphics,theorem,calc,url,bm}
\usepackage{array}
\usepackage{calc}
\usepackage{verbatim}

\usepackage{amsfonts}
\usepackage{mathrsfs}
\usepackage{pifont}

\usepackage{verbatim}
\usepackage{upgreek}
\usepackage{color}

\usepackage{algorithmic}
\usepackage{algorithm}

\usepackage{fancybox}

\newcommand{\bzero}{\mbox{\boldmath{$0$}}}

\newcommand{\bA}{\mbox{\boldmath{$A$}}}
\newcommand{\ba}{\mbox{\boldmath{$a$}}}
\newcommand{\bB}{\mbox{\boldmath{$B$}}}

\newcommand{\bI}{\mbox{\boldmath{$I$}}}
\newcommand{\ii}{\mbox{\boldmath $i$}}%%%%%%%%%%%%%%%%%%

\newcommand{\bn}{\mbox{\boldmath{$n$}}}

\newcommand{\bQ}{\mbox{\boldmath{$Q$}}}

\newcommand{\bR}{\mbox{\boldmath{$R$}}}

\newcommand{\bs}{\mbox{\boldmath{$s$}}}

\newcommand{\bU}{\mbox{\boldmath{$U$}}}
\newcommand{\bu}{\mbox{\boldmath{$u$}}}
\newcommand{\bV}{\mbox{\boldmath{$V$}}}
\newcommand{\bv}{\mbox{\boldmath{$v$}}}

\newcommand{\bw}{\mbox{\boldmath{$w$}}}
\newcommand{\bX}{\mbox{\boldmath{$X$}}}
\newcommand{\bx}{\mbox{\boldmath{$x$}}}
\newcommand{\bY}{\mbox{\boldmath{$Y$}}}
\newcommand{\by}{\mbox{\boldmath{$y$}}}

\newcommand{\bSigma}{\mbox{\boldmath{$\Sigma$}}}

\newcommand{\bdelta}{\mbox{\boldmath{$\delta$}}}
\newcommand{\bDelta}{\mbox{\boldmath{$\Delta$}}}

\newcommand{\tr}{\mbox{\rm tr}\, }

\newcommand{\rank}{\mbox{\rm Rank}\, }

\newcommand{\vect}{\mbox{\rm vec}\, }

\newcommand{\ex}{{\bf\sf E}}

\def\proof {{\bf Proof:} }
\def\endproof{\hfill $\Box$}

\newtheorem{theorem}{Theorem}[section]
\newtheorem{corollary}[theorem]{Corollary}

\newtheorem{lemma}[theorem]{Lemma}
\newtheorem{proposition}[theorem]{Proposition}
\begin{document}
%\markboth{Submitted to IEEE Trans. on Signal Processing}{...}

%\title{Probabilistic Methods for the Global Optimal Solutions of QCQP}
\title{Optimal Robust Adaptive Beamforming for a General-Rank Signal Model via Equivalence of Maximin and Minimax SINR Problems}
%\footnote{
%This paper will be presented in
%part at the 2026 IEEE International Conference on Acoustics, Speech, and
%Signal Processing (ICASSP 2026), 4-8 May 2026, Barcelona, Spain. In the conference version, only Proposition III.1, a simpler version of Theorem III.10, and Corollary III.11 are presented without any detailed proofs, and no Sec. IV and the simplified Sec. II are presented. Also, only two performance metrics are compared in the simulation part. The conference paper has five pages. In the journal paper, we have enriched Secs. I and II (telling the complete story of why solving the problem is interesting), presented all proofs for Proposition III.1 to Theorem III.12, added Sec. IV, and three more performance metrics are compared in simulation part. The journal paper includes twelve pages, substantially longer than the 5-page conference paper.}

\vspace{2cm}

\author{
Yongwei Huang, {\it Senior Member, IEEE}
\thanks{This paper has been presented in
part at the 2026 IEEE International Conference on Acoustics, Speech, and
Signal Processing, 4-8 May 2026, Barcelona, Spain.}
\thanks{Y. Huang is with School of Computer Science and Engineering, Guangdong Polytechnic Normal University, Tianhe, Guangzhou, Guangdong 510665, China (email: ywhuang@gpnu.edu.cn).},
Zhenhui Huang\thanks{Z. Huang is with School of Information Engineering, Guangdong University of Technology, University Town, Guangzhou, Guangdong 510006, China (email: 2112403173@mail2.gdut.edu.cn).},
Sergiy A. Vorobyov, {\it Fellow, IEEE}\thanks{S. A. Vorobyov is with the Department of Information $\&$ Communications Engineering, School of Electrical Engineering, Aalto University, KIDE, Konemiehentie 1, 02150 Espoo, Finland (e-mail: svor@ieee.org).},
Zhi-Quan Luo, {\it Fellow, IEEE}\thanks{Z.-Q. Luo is with School of Science and Engineering, Chinese University of Hong Kong (Shenzhen), Longgang, Shenzhen, Guangdong 518172, China (email: luozq@cuhk.edu.cn).}
}

\maketitle
\begin{abstract}
The globally optimal robust adaptive beamforming (RAB) solution is studied for worst-case signal-to-interference-plus-noise ratio (SINR) maximization (the maximin SINR problem) under convex and closed uncertainty sets for the desired signal covariance and interference-plus-noise covariance (INC) matrices, considering a general-rank signal model. First, the corresponding minimax SINR problem is reformulated as a convex optimization problem. In particular, this problem becomes a semidefinite programming (SDP) problem when the uncertainty sets can be represented by finitely many linear matrix inequality constraints.
It is then shown that, for a general-rank signal model, the maximin and minimax SINR problems are equivalent when the uncertainty sets are convex and closed, in the sense that they share the same optimal value and the same set of optimal solutions. The requirement of closedness is weaker than the compactness assumption previously used to establish the equivalence between minimax and maximin SINR problems for the rank-one signal model, a state-of-the-art result reported approximately two decades ago.
Consequently, an optimal solution to the minimax SINR problem is also globally optimal for the maximin SINR problem, and this solution can be obtained by solving the equivalent SDP of the minimax problem in a single step. In contrast, existing iterative approximation algorithms for the maximin SINR problem yield only locally optimal solutions. Simulation results demonstrate that these approximation algorithms return suboptimal values that can be strictly smaller than the optimal value of the minimax problem, and that the beamformer output SINR obtained via the minimax formulation is higher than that achieved by beamformers derived from the maximin problem using approximation algorithms.
%Then, the maximin SINR and minimax SINR problems for general-rank signal model are shown equivalent to each other as the counter-parties share the equal optimal value and the same set of optimal solutions, when the uncertainty sets are convex and closed. The condition of closedness is looser than the assumption of compactness for the equivalence between the minimax and maximin SINR problems for rank-one signal model, a state-of-the-art result around two decades ago. Therefore, an optimal solution for the minimax SINR problem is globally optimal for the maximin SINR problem, and the optimal solution can be obtained by solving the equivalent SDP of the minimax problem in a single shot. In contrast, the existing iterative approximation algorithms for solving the maximin SINR problem output only a locally optimal solution. Our simulation examples demonstrate that the approximation algorithms for the maximin problem return only suboptimal values that can be strictly smaller than the optimal value of the minimax problem, and the beamformer output SINR  computed via the minimax problem is higher than that of the beamformers via the maximin problem solved by the approximation algorithms.
\end{abstract}

\vspace{0.5cm}

%\begin{center}
%\begin{small}
%{\bf Keywords.}
%\end{small}
%\end{center}

%\begin{keywords}
%Worst-case SINR maximization, general-rank signal model, equivalence between maximin and minimax SINR, convex and closed uncertainty sets, robust adaptive beamforming.
%\end{keywords}
{\it {Index Terms--}}{\bf Worst-case SINR maximization, general-rank signal model, equivalence between maximin and minimax SINR, convex and closed uncertainty sets, robust adaptive beamforming.}
%\newpage

\section{Introduction}
There has been a large number of works on robust adaptive beamforming (RAB) in the past three decades, accompanying the advances of convex optimization theory, algorithms, and solvers, as well as machine learning techniques (see \cite{icassp2026,Stoica-Li-book-beamforming, GSSBO09, Voro13survey, EMVH23, HGMW23} references therein). Considering, for example, receive beamforming, the traditional minimum variance distortionless response (MVDR) beamformer, which is simply a solution of the signal-to-interference-plus-noise ratio (SINR) maximization problem, may lead to substantially degraded performance due to uncertain parameters of the desired signal covariance for general-rank signals and the interference-plus-noise covariance (INC). This is because of a lack of perfect knowledge of the desired signal(s) and interference, distortion of the sensor array, imperfect array calibration, and others. Accordingly, RAB techniques are designed to combat the uncertainties of the parameters and improve the array performance (in terms of, e.g., the array output SINR).

Among the existing RAB approaches, the beamforming solutions for the worst-case SINR maximization (i.e., the maximin SINR) problem (for the rank-one signal covariance \cite{VGL03tsp, Lorenz-Boyd05, HFSL-23-tsp} and the general-rank signal covariance \cite{SGLW03, Amin-tsp19}) and the minimax SINR problem \cite{Kim08, Kim06-bc, Kim08-siamopt, Beck-Eldar-tsp07, HZV2019tsp} are important.\footnote{We are not able to mention other milestone RAB works developing other approaches because of space limitations, unfortunately.} The corresponding optimization problem formulations are intuitive and simple, but challenging too. In a seminal paper \cite{Kim08}, the equivalence between the minimax SINR problem and the maximin SINR problem (for a point source or far-field signal, i.e., for the rank-one signal covariance scenario) is proved under the assumption that the uncertainty sets of the two parameters (one being the actual steering vector of the desired signal and the other being the INC matrix) are convex and compact. In addition, the two problems can be turned into a semidefinite programming (SDP) problem when the uncertainty sets can be represented by finitely many linear matrix inequality (LMI) constraints. However, in most applications, not all uncertainty sets are convex and compact. For example, the steering vector Euclidean norm must equal the square root of the number of array sensors, but such an equality constraint is nonconvex \cite{HFSL-23-tsp}. In this case, both the maximin SINR and minimax SINR problems are nonconvex and hard to solve in general.

For the general-rank signal model, the equivalence between the maximin SINR and minimax SINR problems remains vague under some conditions on the uncertainty sets. The existing works on RAB designs are based on solving the worst-case SINR maximization problem, i.e., the maximin SINR problem with convex uncertainty sets \cite{SGLW03}. In \cite{SGLW03}, an estimate of the covariance matrix of the desired signal is decomposed into a tall matrix times its Hermitian, and then an error term is included in the tall matrix (and also its Hermitian). In this way, the matrix rank remains equal to or smaller than that of the estimated covariance of the desired signal. Subsequently, more and more efficient algorithms have been proposed \cite{chen-gershman08, KV13-tsp, HV18-spl}, where, in the former two papers, SDP-based iterative approximation algorithms have been developed, while the last paper has built a second-order cone programming (SOCP)-based iterative procedure, which runs faster but achieves only a locally optimal value. However, it has never been proved that the locally optimal solutions obtained by the methods proposed in the previous references are globally optimal. In \cite{Amin-tsp19}, sparse beamforming has been considered for multiple desired signals (which results in a general-rank covariance) without uncertainty considerations on the parameters.

In this paper, we finally answer the question under what general condition the maximin SINR and minimax SINR problems for the general-rank desired signal model are equivalent to each other, that is, under what condition they share the same optimal value and the same set of optimal solutions. Toward this end, we study the maximin SINR problem discussed in \cite{SGLW03, chen-gershman08, KV13-tsp, HV18-spl}, and show that if the uncertainty sets of the desired signal covariance and the INC are both convex and closed, the two maximin and minimax SINR problems are equivalent to each other. On the other hand, the minimax SINR problem can be reformulated into a convex problem without an explicit beamvector optimization variable, and then an optimal beamvector can be retrieved for the minimax SINR problem from the optimal solution of the equivalent convex problem. Such solution is then optimal for the maximin SINR problem too. To showcase the equivalence for any convex and closed uncertainty sets for the covariance matrices in the SINR formula, we present two types of uncertainty sets for the INC matrix. One is the set defined by a similarity constraint and a positive semidefinite (PSD) constraint, and the other set includes one more double-sided linear constraint on the trace of the INC (which is due to a radar application \cite{ADHP-16TSP}). Also, we discuss how to handle different problem reformulations of the minimax SINR problem when the matrix norm is the spectral norm or the Frobenius norm.

We summarize the contributions of this work as follows.
\begin{itemize}
\item We show the equivalence between the maximin SINR and minimax SINR problems for the general-rank signal model under the condition that the uncertainty sets of the desired signal covariance matrix and the INC matrix are convex and closed. Further, the optimal solution sets for the minimax and maximin SINR problems are the same, which implies that the globally optimal RAB solution for the maximin SINR (i.e., the worst-case SINR maximization problem) can be obtained by solving its counterpart minimax problem.
\item We prove that the minimax SINR problem is convex, provided that the uncertainty sets of the two covariances are convex, and it is an SDP when the uncertainty sets can be characterized by finitely many LMI constraints. Further, the RAB solution for the minimax SINR problem can be constructed from an optimal solution for the SDP in some way. Therefore, we can conclude that this RAB solution is also a globally optimal solution for the maximin SINR problem and can be obtained by solving an SDP in a single shot, rather than by using iterative approximation algorithms (see \cite{SGLW03, chen-gershman08, KV13-tsp, HV18-spl} and references therein). %studying the RAB designs via the maximin SINR criterion.
\item We can also reexpress an optimal solution for the minimax SINR problem as a saddle point of the SINR function, via the equivalence between the maximin and minimax SINR problems for the general-rank signal model under the assumption of convex and closed uncertainty sets. This result extends the earlier study about the equivalence between the two counterpart problems for the rank-one signal model \cite{Kim08, Kim06-bc, Kim08-siamopt}, and relaxes the compactness requirement to closedness for the uncertainty sets of the two parameters in the SINR formula, i.e., the desired signal and INC covariance matrices.
\item In order to show how the uncertainty set of the INC affects the performance of the RAB solution, two practical uncertainty sets of the INC are considered, and finally our simulation examples validate the theoretical results.
\end{itemize}

   The rest of the paper is organized as follows. The signal model and problem formulation for the worst-case SINR maximization (i.e., the maximin SINR problem) are stated for the general-rank signal model in Section~II. In Section~III, we obtain the globally optimal RAB solution for the maximin SINR problem via the equivalence between the maximin and minimax SINR problems. The latter problem is reformulated as a convex problem under the condition of convex and closed uncertainty sets. In Section~IV, we show how to reformulate the minimax SINR problem into an explicit SDP problem when there is one additional constraint on the energy of the interference and noise, and when the matrix norm in the similarity constraint is either the Frobenius norm or the spectral norm. Simulation examples are presented in Section~V to compare the performance of the RAB solutions obtained by methods in the literature with the proposed methods. Finally, Section~VI draws our conclusions.

{\it Notation}: We adopt the notation of using boldface for vectors
$\ba$ (lower case), and matrices $\bA$ (upper case). The transpose
operator and the conjugate transpose operator are denoted by the
superscripts $(\cdot)^T$ and $(\cdot)^H$, respectively. The notation $\mbox{tr}(\cdot)$ stands for the trace of a square matrix argument; $\bI$ and ${\bf 0}$
denote respectively the identity matrix and the matrix (or the row
vector or the column vector) with zero entries (their size is
determined from the context). The letter $j$ represents the
imaginary unit (i.e., $j=\sqrt{-1}$), while the letter $i$ often
serves as an index. For any complex number $x$, we use
$\Re(x)$ and $\Im(x)$ to denote respectively the real and
imaginary parts of $x$, $|x|$ and $\mbox{arg}(x)$ represent the
modulus and the argument of $x$, and $x^*$ ($\bx^*$ or $\bX^*$)
stands for the component-wise conjugate of $x$ ($\bx$ or $\bX$).
 %$\lambda_{\max}(\bA)$
%stands for the largest eigenvalue of $\bA$.
%The symbols $\odot$ and $\otimes$ represent
%the Hadamard element-wise and the Kronecker product,
% respectively \cite{Horn85}.
The Euclidean norm of vector $\bx$ is denoted by $\|\bx\|$, and the Frobenius norm (the spectral norm) of
matrix $\bX$ by $\|\bX\|_F$ ($\|\bX\|_2$).
%The symbol $\odot$ represents the Hadamard element-wise product.
The curled inequality symbol $\succeq$ (and its strict form $\succ$)
is used to denote generalized inequality: $\bA\succeq\bB$ meaning that
$\bA-\bB$ is a Hermitian positive semidefinite matrix
($\bA\succ\bB$ for positive definiteness).  The space of Hermitian $N\times N$ matrices
(the space of real-valued symmetric $N\times N$ matrices) is denoted by ${\cal
H}^N$ (${\cal S}^N$), the set of all positive semidefinite
matrices by ${\cal H}_+^N$ (${\cal S}_+^N$), and the set of all positive definite matrices by ${\cal H}_{++}^N$ (${\cal S}_{++}^N$).
%Assume always that $\lambda_1(\bA)\ge\lambda_2(\bA)\ge\cdots\ge\lambda_N(\bA)$ for a Hermitian matrix $\bA\in{\cal H}^N$, where $\lambda_n(\bA)$, $n=1,\ldots,n$ are the eigenvalues; namely, they are placed in a descent manner conventionally.
The notation $\ex[\cdot]$ represents the
statistical expectation. We denote by $\mathbb{R}_+^M$ the set of $M$-dimension nonnegative vectors. $\vect(\bX)$ denotes the vector stacked by the columns of $\bX$. %, and $\diag(\bx)$ denotes the diagonal matrix with the diagonal elements being the components of $\bx$.
%The notations $\range(\bX)$ and $\nul(\bX)$ stand for the range space and the null space, respectively.
The notation $\rank(\bX)$ stands for the rank of a matrix.
%Finally, $v^\star(\cdot)$  represents the optimal value of problem $(\cdot)$.
%$\bX^{1/2}$ stands for the square root matrix if $\bX\succeq\bzero$.

\section{Signal models and problem formulation}\label{Motivation-Applications}

\subsection{Signal Model}

The narrowband signal observed by a uniform linear array (ULA) with $N$ sensors is given by
\begin{equation}\label{array-observation-vector}
\by(t)=\bs(t)+\ii(t)+\bn(t),
\end{equation}
where $\bs(t)$, $\ii(t)$, and $\bn(t)$ are vectors, respectively, corresponding to the desired signal, interference, and array noise. Assume that the signal and the interference-plus-noise components are statistically independent. The beamformer output signal can be expressed as
\begin{equation}\label{output-signal}
x(t)=\bw^H\by(t),
\end{equation}
where $\bw\in\mathbb{C}^N$ is the vector of weight coefficients (termed as beamvector). Therefore, the array output SINR is written as
\begin{equation}\label{SINR-def-general-rank}
\mbox{SINR}=\frac{\bw^H\bR_s\bw}{\bw^H\bR_{i+n}\bw}
\end{equation}
where $\bR_s=\ex[\bs(t)\bs^H(t)]$ is the covariance matrix of the desired signal and $\bR_{i+n}=\ex[(\ii(t)+\bn(t))(\ii(t)+\bn(t))^H]$ is the INC %interference-plus-noise
 matrix. Note that covariance matrix $\bR_s$ is of general rank, i.e., $\rank(\bR_s)\in\{1,2,\ldots,N\}$.

In particular, when the desired signal is from a point source or a far-field source, and can be cast as $\bs(t)=s(t)\ba$ (where $s(t)$ is the desired signal waveform and $\ba$ is its steering vector) in \eqref{array-observation-vector}, the corresponding desired signal covariance matrix
$\bR_s=\ba\ba^H$
is of rank one, where, without loss of generality, the power of the desired signal is assumed to be one. Thereby, the array output SINR is rewritten as
\begin{equation}\label{SINR-def-rank-1}
\mbox{SINR}=\frac{\bw^H\ba\ba^H\bw}{\bw^H\bR_{i+n}\bw}.
\end{equation}However, in the scattered source signal case or multiple source signals case, $\bR_s$ often is no longer of rank one.

In order to find an optimal beamvector $\bw^\star$, a typical optimization problem of maximizing the SINR is adopted and can be formulated as
\begin{equation}\label{SINR-max-nonrobust}
\underset{\bw\ne\bzero}{\sf{maximize}}~\frac{\bw^H\bR_s\bw}{\bw^H\bR_{i+n}\bw},
\end{equation} as long as both the covariances $\bR_s$ and $\bR_{i+n}$ are available to the array. It is known that the optimal value for the SINR maximization problem \eqref{SINR-max-nonrobust} is the maximal eigenvalue $\lambda_{1}(\bR_{i+n}^{-\frac{1}{2}}\bR_s\bR_{i+n}^{-\frac{1}{2}})$\footnote{Throughout the paper, all eigenvalues $\lambda_n(\bX)$, $n= 1, \cdots, N$ of Hermitian matrix $\bX\in\mathbb{C}^{N\times N}$ are placed in a descending order, namely, $\lambda_1(\bX) \ge \lambda_2(\bX) \ge \cdots \ge \lambda_N(\bX)$.}, and an optimal solution for the problem is $\bw^\star=\bR_{i+n}^{-\frac{1}{2}}\bv$, where $\bv$ is an eigenvector associated with the largest eigenvalue (i.e., the principal eigenvector), provided that $\bR_{i+n}$ is positive definite.

In the rank-one signal case, the SINR maximization problem \eqref{SINR-max-nonrobust} reduces to
\begin{equation}\label{SINR-max-nonrobust-rank-1}
\underset{\bw\ne\bzero}{\sf{maximize}}~\frac{\bw^H\ba\ba^H\bw}{\bw^H\bR_{i+n}\bw}.
\end{equation}
The optimal value for problem \eqref{SINR-max-nonrobust-rank-1} is equal to $\lambda_{1}(\bR_{i+n}^{-\frac{1}{2}}\ba\ba^H\bR_{i+n}^{-\frac{1}{2}})=\ba^H\bR_{i+n}^{-1}\ba$ and the optimal solution is $\bw^\star=\bR_{i+n}^{-\frac{1}{2}}\bv=\bR_{i+n}^{-1}\ba/\|\bR_{i+n}^{-\frac{1}{2}}\ba\|$ (provided that $\ba$ is a prior information about the array), which is known as an MVDR beamvector.

However, in engineering applications, both $\bR_s$ and $\bR_{i+n}$ are always imperfectly known. In other words, to obtain an optimal beamvector for \eqref{SINR-max-nonrobust}, a presumed signal covariance matrix $\hat\bR_s$ is utilized to replace the actual signal covariance matrix $\bR_s$, and the sample covariance matrix
%\begin{equation}\label{sample-covariance}
$\hat\bR=\frac{1}{T}\sum_{t=1}^T\by(t)\by^H(t)$
%\end{equation}
is often employed to estimate $\bR_{i+n}$, as a compromise, but also more sophisticated estimates of INC matrix are widely used (see, e.g., \cite{covr-est-1,covr-est-2}). Evidently, there are certain mismatches between $\hat\bR_s$ and $\bR_s$ and between $\hat\bR$ and $\bR_{i+n}$. Therefore, applying the beamvector solution $\bw^\star=\hat\bR^{-\frac{1}{2}}\bv$ to the array, with $\bv$ being the principal eigenvector for $\hat\bR^{-\frac{1}{2}}\hat\bR_s\hat\bR^{-\frac{1}{2}}$, will lead to substantially degraded output performance of the array, which has been known for decades (see, e.g., \cite{Stoica-Li-book-beamforming}).
%is obtained by computing the SINR maximization problem
%\begin{equation}\label{SINR-max-nonrobust-presumed}
%\underset{\bw\ne\bzero}{\sf{maximize}}~\frac{\bw^H\hat\bR_s\bw}{\bw^H\hat\bR\bw},
%\end{equation}and equal to $\hat\bR^{-\frac{1}{2}}\bv$ with .

%or, for the rank-one desired signal model,
%\begin{equation}\label{SINR-max-nonrobust-presumed-rank-1}
%\underset{\bw\ne\bzero}{\sf{maximize}}~\frac{\bw^H\hat\ba\hat\ba^H\bw}{\bw^H\hat\bR\bw}.
%\end{equation}

Therefore, in order to improve beamformer performance, many RAB techniques have been proposed in the past (see, e.g., \cite{EMVH23, HGMW23}), based on advances in convex optimization techniques and algorithms (see, e.g., \cite{GSSBO09, Voro13survey, EMVH23}). Among these techniques, the worst-case SINR maximization-based RAB technique is particularly interesting and popular, either for the general-rank or the rank-one signal cases. In such RAB technique, an optimization problem of maximizing the minimal SINR is studied:
\begin{equation}\label{max-min-2008-general-rank}
\begin{array}[c]{cl}
\underset{\bw\ne\bzero}{\sf{maximize}} &\underset{\bDelta_1\in{\cal
B}_1,\bDelta_2\in{\cal B}_2}{\sf{minimize}}
\begin{array}[c]{c}\displaystyle\frac{\bw^H(\hat\bR_s+\bDelta_2)\bw}{\bw^H(\hat\bR+\bDelta_1)\bw},
\end{array}
\end{array}%\right.
\end{equation}
where the error sets ${\cal B}_1$ and ${\cal B}_2$ are predefined. For example,
%\begin{equation}\label{example-A}
%{\cal A}=\{\bdelta~|~\|\bdelta\|^2\le\epsilon,\,\|\hat\ba+\bdelta\|^2=N \}
%\end{equation}which is a nonconvex set;
\begin{equation}\label{example-B1}
{\cal B}_1=\{\bDelta_1\in\mathbb{C}^{N\times N}~|~\|\bDelta_1\|_F^2\le\gamma_1,\,\hat\bR+\bDelta_1\succeq\bzero\},
\end{equation}
and
\begin{equation}\label{example-B2}
{\cal B}_2=\{\bDelta_2\in\mathbb{C}^{N\times N}~|~\|\bDelta_2\|_F^2\le\gamma_2,\,\hat\bR_{s}+\bDelta_2\succeq\bzero\}.
\end{equation}%where $\|\cdot\|_F$ represents the Frobenius norm.
However, finding a globally optimal solution $\bw^\star$ for the maximin problem \eqref{max-min-2008-general-rank}, even when the sets ${\cal B}_1$ and ${\cal B}_2$ are convex, remains an open problem. In particular, a simple and practical sufficient condition for achieving global optimality has yet to be established (cf. \cite{Verdu-1984, HVL-20-tsp}).

Alternatively, the following worst-case SINR maximization problem for the general-rank signal case is considered:
\begin{equation}\label{max-min-2018-general-rank-0}
\begin{array}[c]{cl}
\underset{\bw\ne\bzero}{\sf{maximize}} &\underset{\bDelta_0\in{\cal
B}_0,\bDelta_1\in{\cal B}_1}{\sf{minimize}}
\begin{array}[c]{c}\displaystyle\frac{\bw^H(\hat\bQ+\bDelta_0)(\hat\bQ+\bDelta_0)^H\bw}{\bw^H(\hat\bR+\bDelta_1)\bw},
\end{array}
\end{array}%\right.
\end{equation}where $\hat\bR_s=\hat\bQ\hat\bQ^H$, $\hat\bQ\in\mathbb{C}^{N\times M}$, $N\ge M=\rank(\hat\bR_s)$, and ${\cal B}_0$ is a given error set. For instance,
\begin{equation}\label{example-B0}
{\cal B}_0=\{\bDelta_0\in\mathbb{C}^{N\times M}~|~\|\bDelta_0\|_F^2\le\epsilon\},
\end{equation}or, more generally,
\begin{equation}\label{example-B0-general}
{\cal B}_0=\{\bDelta_0\in\mathbb{C}^{N\times M}~|~\tr(\bDelta_0\bA_i\bDelta_0^H)\le\epsilon_i,\,i=1,\ldots,I\},
\end{equation}where $\bA_i\in\mathbb{C}^{M\times M}$, $i=1,\ldots,I$, are Hermitian matrices. However, the error set for $\bDelta_1$ remains unchanged. Therefore, one of the advantages of studying the problem formulation in \eqref{max-min-2018-general-rank-0} is that the rank of the signal covariance matrix is guaranteed to be no greater than that of $\hat{\bR}_s$.

%{\color{red}Consider the case of $\{\bQ\bQ^H~|~\bQ=\hat\bQ+\bDelta,\,\bDelta=[\bdelta_1,\ldots,\bdelta_M],\,\|\bdelta_m\|^2\le\zeta_m,\,m=1,\ldots,M\}$ (Amin paper 2019 tsp).}

In \cite{KV13-tsp}, the authors show that a globally optimal solution for problem \eqref{max-min-2018-general-rank-0}, with the special set ${\cal B}_0$ defined in \eqref{example-B0}, can be obtained provided that the radius $\sqrt{\epsilon}$ is sufficiently small. In this paper, we extend their result and establish a simpler global optimality condition for problem \eqref{max-min-2018-general-rank-0}, while also considering more general settings.

In particular, when $\hat{\bR}_s = \hat{\ba} \hat{\ba}^H$ is a rank-one matrix (corresponding to a point source or a far-field signal scenario, where $\hat{\ba}$ is the presumed steering vector of the desired signal), problem \eqref{max-min-2018-general-rank-0} reduces to the problem for the rank-one signal model
\begin{equation}\label{max-min-2008-rank-one}
\begin{array}[c]{cl}
\underset{\bw\ne\bzero}{\sf{maximize}} &\underset{\bdelta\in{\cal
A},\bDelta_1\in{\cal B}_1}{\sf{minimize}}
\begin{array}[c]{c}\displaystyle\frac{\bw^H(\hat\ba+\bdelta)(\hat\ba+\bdelta)^H\bw}{\bw^H(\hat\bR+\bDelta_1)\bw},
\end{array}
\end{array}%\right.
\end{equation}where the error set ${\cal A}$ is either convex or nonconvex. A practical and nonconvex example is
\begin{equation}\label{example-A}
{\cal A}=\{\bdelta\in\mathbb{C}^{N}~|~\|\bdelta\|^2\le\epsilon,\,\|\hat\ba+\bdelta\|^2=N \}.
\end{equation}%which is a nonconvex set;

It is noted that the results claimed in \cite{Kim08,Kim06-bc,Kim08-siamopt}\footnote{It is shown therein that problem \eqref{max-min-2008-rank-one} is a convex optimization problem when the uncertainty set ${\cal A}\times{\cal B}_1$ is compact and convex. In particular, \eqref{max-min-2008-rank-one} can be reformulated as an SDP when ${\cal A}$ and ${\cal B}_1$ can be represented by finitely many LMIs.} are not applicable to problem \eqref{max-min-2008-rank-one}, since the set ${\cal A}$ defined in \eqref{example-A} is nonconvex. % (but \cite{HFSL-23-tsp} discussed efficient algorithms to approximately solve it).
%has been studied in \cite{Kim08,Kim06-bc,Kim08-siamopt}. In the paper, we

To proceed and simplify notations, let
\begin{equation}\label{define-R1-Q}
\bQ=\hat\bQ+\bDelta_0,\quad\bR_1=\hat\bR+\bDelta_1,
\end{equation}
in problem \eqref{max-min-2018-general-rank-0}. %and let
%\begin{equation}\label{define-a}
%\ba=\hat\ba+\bdelta
%\end{equation}
%in \eqref{max-min-2008-rank-one}.
Thereby, the maximin problem \eqref{max-min-2018-general-rank-0} is rewritten as
\begin{equation}\label{max-min-2018-general-rank-1}
\begin{array}[c]{cl}
\underset{\bw\ne\bzero}{\sf{maximize}} &\underset{\bQ\in\bar{\cal
B}_0,\bR_1\in\bar{\cal B}_1}{\sf{minimize}}
\begin{array}[c]{c}\displaystyle\frac{\bw^H\bQ\bQ^H\bw}{\bw^H\bR_{1}\bw}.
\end{array}
\end{array}%\right.
\end{equation}
%and
%\begin{equation}\label{max-min-2008-rank-one-1}
%\begin{array}[c]{cl}
%\underset{\bw\ne\bzero}{\sf{maximize}} &\underset{\ba\in\bar{\cal
%A},\bR_1\in\bar{\cal B}_1}{\sf{minimize}}
%\begin{array}[c]{c}\displaystyle\frac{\bw^H\ba\ba^H\bw}{\bw^H\bR_{1}\bw}.
%\end{array}
%\end{array}%\right.
%\end{equation}
The uncertainty set $\bar{\cal B}_0$, for instance, can be specified as
\begin{equation}\label{example-B0-bar}
\tilde{\cal B}_0=\{\bQ\in\mathbb{C}^{N\times M}~|~\tr((\bQ-\hat\bQ)\bA_i(\bQ-\hat\bQ)^H)\le\epsilon_i,\,i=1,\ldots,I\}
\end{equation}(transformed from \eqref{example-B0-general}). Similarly, the uncertainty set $\bar{\cal B}_1$ can be detailed as
\begin{equation}\label{example-B1-bar}
\tilde{\cal B}_1=\{\bR_1\in{\cal H}^{N}~|~\|\bR_1-\hat\bR\|_F^2\le\gamma_1,\,\bR_{1}\succeq\bzero \}
\end{equation}(reformulated from \eqref{example-B1}).

It is known (see, e.g., \cite{HV18-spl}) that problem \eqref{max-min-2018-general-rank-1} with the special $\tilde{\cal B}_0$ of \eqref{example-B0-bar}:
\begin{equation}\label{example-B0-bar-1}
\tilde{\cal B}_0=\{\bQ\in\mathbb{C}^{N\times M}~|~\|\bQ-\hat\bQ\|_F^2\le\eta\},
\end{equation}
and $\tilde{\cal B}_1$ defined in \eqref{example-B1-bar} can be solved approximately by a sequence of SOCPs.

Note that if $\|\hat{\bQ}\|_F^2 \le \eta$ (which implies that $\bQ = \mathbf{0}$ is feasible), the optimal value of problem \eqref{max-min-2018-general-rank-1} is zero, meaning that the array output SINR is also zero. However, this is not a practical scenario. Therefore, we exclude the case of $\bQ = \mathbf{0}$ by assuming that $\eta < \|\hat{\bQ}\|_F^2$.
Then problem \eqref{max-min-2018-general-rank-1} is equivalently transformed into
\begin{equation}\label{max-min-2018-general-rank-2}
\begin{array}[c]{cl}
\underset{\bw\ne\bzero}{\sf{maximize}} &\underset{\bQ\in\tilde{\cal
B}_0}{\sf{minimize}}
\begin{array}[c]{c} \bw^H\bQ\bQ^H\bw
\end{array}\\
\sf{subject\;to}               &
 \bw^H(\hat\bR+\sqrt{\gamma_1}\bI)\bw\le1.
\end{array}%\right.
\end{equation}

Since the optimal value of the inner minimization problem is equal to
\begin{equation}\label{inner-min-optimal-v-1}
\underset{\bQ\in\tilde{\cal B}_0}{\sf{minimize}}~~ \|\bQ^H\bw\|
 = \max \{\|\hat\bQ^H\bw\|-\sqrt{\eta}\|\bw\|,0\}
\end{equation}
(see, e.g., \cite{Beck09-orl,HV22-sp}), hence problem \eqref{max-min-2018-general-rank-2} can be further reexpressed as the following nonconvex problem:
\begin{equation}\label{max-min-2018-general-rank-3}
\begin{array}[c]{cl}
\underset{\bw\ne\bzero}{\sf{maximize}} & \|\hat\bQ^H\bw\|-\sqrt{\eta}\|\bw\|\\
\sf{subject\;to}               &
 \bw^H\hat\bR\bw+\sqrt{\gamma_1}\|\bw\|^2\le1.
\end{array}%\right.
\end{equation}
This problem has been solved by iteratively solving a sequence of SDPs \cite{KV13-tsp} or SOCPs \cite{HV18-spl}. It is noted that, in the numerical results, the iterative procedure of either SDPs in \cite{KV13-tsp} or SOCPs in \cite{HV18-spl} achieves global optimality for problem \eqref{max-min-2018-general-rank-1}, even though only local optimality is guaranteed theoretically.

Note that if the Frobenius norm in the uncertainty set \eqref{example-B0-bar-1} is changed to the spectral norm, namely the following set is considered
\begin{equation}\label{example-B0-bar-1-matrx2norm}
\hat{\cal B}_0=\{\bQ\in\mathbb{C}^{N\times M}~|~\|\bQ-\hat\bQ\|_2^2\le\eta\},
\end{equation}
the optimal value in \eqref{inner-min-optimal-v-1} of the inner minimization problem in \eqref{max-min-2018-general-rank-2} remains unaltered. Also, if the Frobenius norm in the uncertainty set \eqref{example-B1-bar} is changed to the spectral norm, namely,
\begin{equation}\label{example-B1-hat}
\hat{\cal B}_1=\{\bR_1\in{\cal H}^{N}~|~\|\bR_1-\hat\bR\|_2^2\le\gamma_2,\,\bR_{1}\succeq\bzero \},
\end{equation}
then it still follows that
\begin{equation}\label{max-IN-power}
\bw^H\hat\bR\bw+\sqrt{\gamma_2}\|\bw\|^2=\underset{\bR_1\in\hat{\cal B}_1}{\sf{maximize}}~\bw^H\bR_{1}\bw.
\end{equation}
Accordingly, problem \eqref{max-min-2018-general-rank-1} with uncertainty sets $\hat{\cal B}_0$ and $\hat{\cal B}_1$ defined in \eqref{example-B0-bar-1-matrx2norm} and \eqref{example-B1-hat}, respectively, or with sets $\tilde{\cal B}_0$ and $\tilde{\cal B}_1$ defined in \eqref{example-B0-bar-1} and \eqref{example-B1-bar}, respectively, can be  recast into nonconvex problem \eqref{max-min-2018-general-rank-3}. In other words, the equivalence between the maximin problem \eqref{max-min-2018-general-rank-1} and problem \eqref{max-min-2018-general-rank-3} remains valid if all matrix norms are either Frobenius norms or spectral norms.

%In our previous works \cite{KV13-tsp,HV18-spl}, we have studied to how to solve nonconvex problem \eqref{max-min-2018-general-rank-3} up to a local optimality.
In this paper, we address the problem of efficiently obtaining globally optimal solutions for the maximin problem \eqref{max-min-2018-general-rank-1} in a more general setting, where the sets $\bar{\cal B}_0$ and $\bar{\cal B}_1$ are convex and closed. Our approach generalizes previous works \cite{KV13-tsp, HV18-spl}, as well as the studies in \cite{Kim08, Kim06-bc, Kim08-siamopt}, which focus on the rank-one signal model in problem \eqref{max-min-2008-rank-one} under the assumption that ${\cal A}$ and ${\cal B}_1$ are convex and compact.

\section{Globally Optimal Solution for the Maximin SINR Problem via Its Minimax SINR Problem for the General-Rank Signal Model}

In this section, we study a convex equivalent of the worst-case SINR maximization problem \eqref{max-min-2018-general-rank-1}. To proceed, we assume that both $\bar{\cal B}_0 \subset \mathbb{C}^{N \times M} \setminus \{\mathbf{0}\}$ and $\bar{\cal B}_1 \subset {\cal H}_{++}^N$ in \eqref{max-min-2018-general-rank-1} are convex and closed, without restricting them to the specific forms in \eqref{example-B0-bar} and \eqref{example-B1-bar}, respectively.

\subsection{Upper Bound for the Worst-Case SINR Maximization Problem}
The following minimax problem provides an upper bound on the optimal value of the maximin problem \eqref{max-min-2018-general-rank-1}
\begin{equation}\label{min-max-2018-general-rank-1}
\begin{array}[c]{cl}
\underset{\bQ\in\bar{\cal
B}_0,\bR_1\in\bar{\cal B}_1}{\sf{minimize}} & \underset{\bw\ne\bzero}{\sf{maximize}}
\begin{array}[c]{c}\displaystyle\frac{\bw^H\bQ\bQ^H\bw}{\bw^H\bR_{1}\bw}.
\end{array}
\end{array}%\right.
\end{equation}

%Before we claim that the minimax problem \eqref{min-max-2018-general-rank-1} is a convex optimization problem, it is assumed that $\lambda_1(\bA)\ge\lambda_2(\bA)\ge\cdots\ge\lambda_N(\bA)$ for a Hermitian matrix $\bA\in{\cal H}^N$; in other words, the eigenvalues are placed in a descent manner for each Hermitian matrix.

\begin{proposition}\label{min-max-convex-opt-problem}
Suppose that $\bar{\cal B}_0\subset\mathbb{C}^{N\times M}/\{\bzero\}$ and $\bar{\cal B}_1\subset{\cal H}_{++}^N$ are convex and closed. Then the minimax problem \eqref{min-max-2018-general-rank-1} is equivalent to the following convex optimization problem
\begin{equation}\label{min-max-2018-general-rank-1-inner-max-dual}
\begin{array}[c]{cl}
\underset{\lambda,\bQ,\bR_1}{\sf{minimize}} & \lambda\\
\sf{subject\;to}           & \left[\begin{array}{cc}\bR_{1}&\bQ\\ \bQ^H&\lambda\bI\end{array}\right]\succeq\bzero\\
                           & \bQ\in\bar{\cal B}_0,\,\bR_1\in\bar{\cal B}_1\\
                           &\lambda\in{\mathbb{R}},\,\bQ\in\mathbb{C}^{N\times M},\,{\color{black}\bR_1}\in{\cal H}^N,
\end{array}%\right.
\end{equation}%where ${\cal H}^N$ is the set of all Hermitian matrices of $N\times N$. The problem
which is solvable\footnote{By saying ``solvable", we mean that the minimization problem is feasible, bounded below and the optimal value is attained at a feasible solution (cf. \cite[page 2]{Nemi-book-2001}).}. Suppose that $(\lambda^\star,\bQ^\star,\bR_1^\star)$ is an optimal solution for \eqref{min-max-2018-general-rank-1-inner-max-dual}. Then, $(\bw^\star, \bQ^\star, \bR_1^\star)$ is optimal for the minimax problem \eqref{min-max-2018-general-rank-1} with $\bw^\star=\bR_1^{\star -\frac{1}{2}}\bu_1^\star$, where $\bu_1^\star$ is an eigenvector associated with $\lambda_1(\bR_1^{\star -\frac{1}{2}}\bQ^\star\bQ^{\star H}\bR_1^{\star -\frac{1}{2}})=\lambda^\star$, which is the optimal value of problem \eqref{min-max-2018-general-rank-1-inner-max-dual}.
\end{proposition}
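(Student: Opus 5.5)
The plan is to evaluate the inner maximization of \eqref{min-max-2018-general-rank-1} in closed form and then write the resulting outer problem as a linear matrix inequality via a Schur complement. Fix $\bQ\in\bar{\cal B}_0$ and $\bR_1\in\bar{\cal B}_1$. Since $\bR_1\succ\bzero$, the substitution $\bw=\bR_1^{-\frac{1}{2}}\bu$ is a bijection of $\mathbb{C}^N\setminus\{\bzero\}$. It turns the inner problem into a Rayleigh quotient, so
\begin{equation*}
\max_{\bw\ne\bzero}\frac{\bw^H\bQ\bQ^H\bw}{\bw^H\bR_1\bw}=\lambda_1(\bR_1^{-\frac{1}{2}}\bQ\bQ^H\bR_1^{-\frac{1}{2}}),
\end{equation*}
and the maximum is attained at $\bw=\bR_1^{-\frac{1}{2}}\bu_1$, where $\bu_1$ is a principal eigenvector. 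Hence \eqref{min-max-2018-general-rank-1} is the problem of minimizing $f(\bQ,\bR_1)=\lambda_1(\bR_1^{-\frac{1}{2}}\bQ\bQ^H\bR_1^{-\frac{1}{2}})$ over $\bar{\cal B}_0\times\bar{\cal B}_1$.

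Next I convert this to the epigraph form. For $\bR_1\succ\bzero$ we have $f\le\lambda$ iff $\bR_1^{-\frac{1}{2}}\bQ\bQ^H\bR_1^{-\frac{1}{2}}\preceq\lambda\bI$. This is equivalent to $\bQ^H\bR_1^{-1}\bQ\preceq\lambda\bI$, because the nonzero eigenvalues of $\bA\bA^H$ and $\bA^H\bA$ coincide for $\bA=\bR_1^{-\frac{1}{2}}\bQ$ and $\lambda\ge0$ holds automatically. By the Schur complement with respect to the block $\bR_1\succ\bzero$, the latter is exactly the LMI in \eqref{min-max-2018-general-rank-1-inner-max-dual}. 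Therefore the two problems have the same optimal value. Moreover, $(\lambda,\bQ,\bR_1)$ is optimal for \eqref{min-max-2018-general-rank-1-inner-max-dual} iff $(\bQ,\bR_1)$ minimizes $f$ and $\lambda=f(\bQ,\bR_1)$. Convexity is immediate: the objective is linear, the LMI set is convex (it is linear in $(\lambda,\bQ,\bR_1)$), and $\bar{\cal B}_0$, $\bar{\cal B}_1$ are convex.

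The main obstacle is solvability (attainment), because the sets are only closed, not compact. Feasibility is clear: take any $\bQ\in\bar{\cal B}_0$ and $\bR_1\in\bar{\cal B}_1$, with $\lambda$ large. The problem is also bounded below by $0$, since the LMI forces $\lambda\bI\succeq\bzero$. For attainment, I take a minimizing sequence $(\lambda_k,\bQ_k,\bR_{1,k})$ with $\lambda_k\downarrow\lambda^\star$ and try to extract a convergent subsequence. The key difficulty is that $\bQ_k$ or $\bR_{1,k}$ might be unbounded, or that $\bR_{1,k}$ might approach the boundary of ${\cal H}_{++}^N$. Note that $\bar{\cal B}_1$ being closed in ${\cal H}^N$ and contained in ${\cal H}_{++}^N$ already excludes singular limits. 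For the bound, I would use the LMI in the form $\bQ_k\bQ_k^H\preceq\lambda_k\bR_{1,k}$, which bounds $\bQ_k$ only relative to $\bR_{1,k}$. So I would need to control $\bR_{1,k}$ first. The route I would try is a normalization argument. The LMI is invariant under the scaling $(\lambda,\bQ,\bR_1)\mapsto(\lambda,t\bQ,t^2\bR_1)$, and I would combine this with a recession-direction analysis. If $\|\bR_{1,k}\|\to\infty$, then normalized limits give a recession direction $\bD\succeq\bzero$ of $\bar{\cal B}_1$. I would then argue that moving along it does not help, or use convexity to replace the sequence by a bounded one achieving the same values. If this fails in full generality, I would fall back on the concrete structure the paper uses, namely that $\bar{\cal B}_1$ has $\hat\bR$-centered bounded form, so that $\bar{\cal B}_1$ is bounded. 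In that case $\bQ_k$ is bounded via $\|\bQ_k\|_2^2\le\lambda_k\|\bR_{1,k}\|_2$, and closedness gives a limit point that is feasible and optimal.

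Finally, given an optimal $(\lambda^\star,\bQ^\star,\bR_1^\star)$, the first step shows $\lambda^\star=f(\bQ^\star,\bR_1^\star)=\lambda_1(\bR_1^{\star-\frac{1}{2}}\bQ^\star\bQ^{\star H}\bR_1^{\star-\frac{1}{2}})$. It also shows that $\bw^\star=\bR_1^{\star-\frac{1}{2}}\bu_1^\star$ attains the inner maximum at $(\bQ^\star,\bR_1^\star)$. Since $(\bQ^\star,\bR_1^\star)$ minimizes $f$, the triple $(\bw^\star,\bQ^\star,\bR_1^\star)$ is optimal for \eqref{min-max-2018-general-rank-1}.
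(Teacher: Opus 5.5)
\textbf{Gap: attainment.} Everything in your proposal except attainment is correct and follows the paper's route. The substitution $\bw=\bR_1^{-\frac{1}{2}}\bu$ turns the inner problem into a Rayleigh quotient. The condition $\lambda_1(\bR_1^{-\frac{1}{2}}\bQ\bQ^H\bR_1^{-\frac{1}{2}})\le\lambda$ is equivalent to $\lambda\bR_1\succeq\bQ\bQ^H$, and hence, by the Schur complement with $\bR_1\succ\bzero$, to the LMI. Your recovery of $\bw^\star$ at the end is also right.

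The problem is the solvability claim, and the recession-direction route you sketch cannot work. When $\bar{\cal B}_1$ is unbounded, moving along a recession direction can genuinely lower the objective, and the infimum need not be attained. Take $N=M=1$, $\bar{\cal B}_0=\{1\}$, $\bar{\cal B}_1=[1,\infty)$. Both sets are convex and closed, $0\notin\bar{\cal B}_0$, and $\bar{\cal B}_1\subset{\cal H}_{++}^1$. The constraints of \eqref{min-max-2018-general-rank-1-inner-max-dual} reduce to $r\ge1$, $\lambda\ge0$, $\lambda r\ge1$. The infimum is therefore $0$, approached by $(\lambda,\bQ,\bR_1)=(1/k,1,k)$, and it is not attained. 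So the proposition as stated is false under closedness alone, and no convexity or recession-cone argument can rescue it.

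\textbf{The paper's proof has the same hole.} It replaces the objective by $\lambda^2$ and asserts that this is coercive on the closed feasible set. But the feasible points $(1/k,1,k)$ above have norm tending to infinity while $\lambda^2\to0$, so coercivity fails. The later results that start from an optimal $(\bQ^\star,\bR_1^\star)$ inherit this caveat.

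\textbf{Your fallback is the necessary repair, not a convenience.} If $\bar{\cal B}_1$ is bounded (hence compact), your argument is complete:
\begin{itemize}
\item $\bQ_k\bQ_k^H\preceq\lambda_k\bR_{1,k}$ gives $\|\bQ_k\|_2^2\le\lambda_k\sup_{\bR_1\in\bar{\cal B}_1}\|\bR_1\|_2$, so the minimizing sequence is bounded and a convergent subsequence exists.
\item Closedness of $\bar{\cal B}_0$, of $\bar{\cal B}_1\subset{\cal H}_{++}^N$ (which excludes singular limits), and of the LMI set makes the limit feasible and optimal.
\item $\bar{\cal B}_0$ may remain unbounded.
\end{itemize}
You should state boundedness of $\bar{\cal B}_1$ explicitly as an added hypothesis, not attribute it to ``the concrete structure the paper uses''. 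The proposition's hypothesis is only closedness. The remaining conclusions hold without this assumption: equality of the optimal values as infima, and the correspondence of optimal solutions whenever they exist.
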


See Appendix~\ref{proof-prop-min-max-convex-opt-problem} for the proof.

It is worth noting that if $\mathbf{0} \in \bar{\cal B}_0$, then problem \eqref{min-max-2018-general-rank-1-inner-max-dual} admits a trivial solution: $\lambda^\star = 0$, $\mathbf{Q}^\star = \mathbf{0}$, and any $\mathbf{R}^\star \in \bar{\cal B}_1$. To avoid this trivial case, we therefore assume that $\mathbf{0} \notin \bar{\cal B}_0$. % (but, if we do not assume it, then \eqref{min-max-2018-general-rank-1-inner-max-dual} includes always the trivial case, which appears okay).

Clearly, if $\bar{\cal B}_0$ and $\bar{\cal B}_1$ can be represented by finitely many LMIs, then problem \eqref{min-max-2018-general-rank-1-inner-max-dual} is an SDP. In other words, the minimax problem \eqref{min-max-2018-general-rank-1} is equivalent to an SDP. For example, $\bar{\cal B}_0$ can be defined as in \eqref{example-B0-bar} with each $\mathbf{A}_i$ being PSD, or as in \eqref{example-B0-bar-1}, and $\bar{\cal B}_1$ can be defined as in \eqref{example-B1-bar} or as
\begin{equation}\label{example-B1-tilde-trace-l-u}
\bar{\cal B}_1^\prime=\{\bR_1\in{\cal H}^{N}~|~\|\bR_1-\hat\bR\|_F^2\le\gamma_1,\,\rho_1\le\tr\bR_1\le\rho_2,\,\bR_{1}\succeq\bzero \},
\end{equation}
the parameters $\rho_1$ and $\rho_2$ determine the trust interval of the interference and noise energy (cf. \cite{ADHP-16TSP}). We will return to consider the uncertainty set in \eqref{example-B1-tilde-trace-l-u} later, but here we note that all of these uncertainty sets, $\bar{\cal B}_0$ and $\bar{\cal B}_1$, can be represented by LMIs.

%In particular, if $\bQ\bQ^H$ is of rank one, namely, $\bQ=\ba$ is the vector as in \eqref{max-min-2008-rank-one-1}, then from Proposition \ref{min-max-convex-opt-problem} it follows that $\bw^\star=\bR_1^{\star -\frac{1}{2}}\bu_1^\star$ (where $\bu_1^\star=\bR_1^{\star -\frac{1}{2}}\ba^\star/\|\bR_1^{\star -\frac{1}{2}}\ba^\star\|$) is equal to $\alpha^\star\bR_1^{\star -1}\ba^\star$ with $\alpha^\star=1/\|\bR_1^{\star -\frac{1}{2}}\ba^\star\|$, which is the well-known MVDR beamformer for the rank-one signal model.

\subsection{Equivalence between the Worst-Case SINR Maximization Problem and Its Minimax Counterpart Problem}
Let $v_1^\star$ denote the optimal value of the worst-case SINR maximization problem \eqref{max-min-2018-general-rank-1}, and let $v_2^\star$ denote the optimal value of its minimax counterpart problem \eqref{min-max-2018-general-rank-1}. In this subsection, we aim to identify conditions under which
\begin{equation}\label{opt-vals-equal}
v_1^\star\ge v_2^\star.
\end{equation}
Thus, we have $v_1^\star = v_2^\star$, since the inequality $v_2^\star \ge v_1^\star$ is trivial. We will also show that the solution constructed in Proposition~\ref{min-max-convex-opt-problem} for the minimax problem \eqref{min-max-2018-general-rank-1} is optimal for the worst-case SINR maximization problem \eqref{max-min-2018-general-rank-1}. In other words, the globally optimal solution for the maximin (nonconvex) problem \eqref{max-min-2018-general-rank-1} can be obtained by solving the convex problem \eqref{min-max-2018-general-rank-1-inner-max-dual} in a single step, without relying on the iterative methods in \cite{KV13-tsp, HV18-spl}. % namely, the global optimality for the maximin  problem \eqref{max-min-2018-general-rank-1} can be achieved within polynomial-time complexity.

Toward this end, suppose that $(\lambda^\star, \bQ^\star, \bR_1^\star)$ is optimal for \eqref{min-max-2018-general-rank-1-inner-max-dual}. It follows from Proposition~\ref{min-max-convex-opt-problem} that $(\bw^\star, \bQ^\star, \bR_1^\star)$ is an optimal solution for problem \eqref{min-max-2018-general-rank-1}, with
\begin{equation}\label{define-w-star}
\bw^\star=\bR_1^{\star -\frac{1}{2}}\bu_1^\star,
\end{equation}
where $\bu_1^\star$ is an eigenvector associated with the largest eigenvalue $\lambda_1(\bR_1^{\star -\frac{1}{2}}\bQ^\star\bQ^{\star H}\bR_1^{\star -\frac{1}{2}})$. Note that
\eqref{min-max-2018-general-rank-1-inner-max-dual} can be rewritten as
\begin{equation}\label{min-max-2018-general-rank-1-inner-max-dual-equv-1}
\begin{array}[c]{cl}
\underset{\bQ,\bR_1}{\sf{minimize}} & \lambda_1(\bR_1^{-\frac{1}{2}}\bQ\bQ^{H}\bR_1^{-\frac{1}{2}})\\
\sf{subject\;to}           & \bQ\in\bar{\cal B}_0,\,\bR_1\in\bar{\cal B}_1.
\end{array}%\right.
\end{equation}
Thus, it is straightforward to verify that $(\mathbf{Q}^\star, \mathbf{R}_1^\star)$ is an optimal solution of problem \eqref{min-max-2018-general-rank-1-inner-max-dual-equv-1}, with the optimal value $\lambda^\star=\lambda_1(\bR_1^{\star -\frac{1}{2}}\bQ^\star\bQ^{\star H}\bR_1^{\star -\frac{1}{2}})$, which leads to the following chain of equalities:
\begin{eqnarray}\label{the-optimal-values-0}
v_2^\star=\lambda^\star&=&\lambda_1(\bR_1^{\star -\frac{1}{2}}\bQ^\star\bQ^{\star H}\bR_1^{\star -\frac{1}{2}})\\ \label{max-eigv-expression}
                       &=&\bu_1^{\star H}\bR_1^{\star -\frac{1}{2}}\bQ^\star\bQ^{\star H}\bR_1^{\star -\frac{1}{2}}\bu_1^\star\\ \label{w-Q-QH-w-all-star}
                       &=&\bw^{\star H}\bQ^\star\bQ^{\star H}\bw^\star\\ \label{the-optimal-values-1}
                       &=&\frac{\bw^{\star H}\bQ^\star\bQ^{\star H}\bw^\star}{\bw^{\star H}\bR^{\star}_1\bw^\star},
\end{eqnarray}where the last equality is thanks to $\bw^{\star H}\bR^{\star}_1\bw^\star=\|\bu_1^\star\|^2=1$.

Therefore, in order to show \eqref{opt-vals-equal}, it suffices to prove that
\begin{equation}\label{opt-vals-equal-1}
v_1^\star\ge\bw^{\star H}\bQ^\star\bQ^{\star H}\bw^\star.
\end{equation}
It is known (see, e.g., \cite[Example 18.c, page 147]{Nemi-book-2001}) that $\lambda_1(\bX)$ is a convex function in ${\cal H}^N$. Let
\begin{equation}\label{the-objective-1}
f(\bQ,\bR_1)=\lambda_1(\bR_1^{-\frac{1}{2}}\bQ\bQ^{H}\bR_1^{-\frac{1}{2}}).
\end{equation}
We further claim that $f$ is also a convex function.

\begin{lemma}\label{lambda-1-cvx-fcn} The objective function $f(\bQ, \bR_1)$ defined in \eqref{the-objective-1} is convex for  $(\bQ, \bR_1) \in \mathbb{C}^{N\times M} \times {\cal H}_{++}^N$.
\end{lemma}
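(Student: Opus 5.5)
Our approach is to write $f$ as a pointwise supremum of jointly convex functions. By the variational characterization of the largest eigenvalue,
\begin{equation*}
f(\bQ,\bR_1)=\lambda_1(\bR_1^{-\frac{1}{2}}\bQ\bQ^H\bR_1^{-\frac{1}{2}})=\max_{\bu\ne\bzero}\frac{\bu^H\bR_1^{-\frac{1}{2}}\bQ\bQ^H\bR_1^{-\frac{1}{2}}\bu}{\bu^H\bu}.
\end{equation*}
Substituting $\bw=\bR_1^{-\frac{1}{2}}\bu$ is a bijection on $\mathbb{C}^N\setminus\{\bzero\}$ because $\bR_1\succ\bzero$. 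This gives
\begin{equation*}
f(\bQ,\bR_1)=\sup_{\bw\ne\bzero}\frac{\|\bQ^H\bw\|^2}{\bw^H\bR_1\bw}.
\end{equation*}
So it suffices to show that, for each fixed $\bw\ne\bzero$, the map $g_{\bw}(\bQ,\bR_1)=\|\bQ^H\bw\|^2/(\bw^H\bR_1\bw)$ is jointly convex on $\mathbb{C}^{N\times M}\times{\cal H}_{++}^N$. A pointwise supremum of convex functions is convex, and here the supremum is finite, so $f$ is convex.

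For fixed $\bw$, the map $(\bQ,\bR_1)\mapsto(\bx,t)=(\bQ^H\bw,\ \bw^H\bR_1\bw)$ is real-linear. Its image lies in $\mathbb{C}^M\times\mathbb{R}_{++}$, since $\bR_1\succ\bzero$ and $\bw\ne\bzero$. The function $(\bx,t)\mapsto\|\bx\|^2/t$ is the standard quadratic-over-linear function, which is jointly convex on $\mathbb{C}^M\times\mathbb{R}_{++}$. One way to see this is as the perspective of $\|\bx\|^2$. Another is through its epigraph: $\|\bx\|^2/t\le s$ holds if and only if
\begin{equation*}
\left[\begin{array}{cc} t & \bx^H\\ \bx & s\bI\end{array}\right]\succeq\bzero,
\end{equation*}
and by the Schur complement this describes a convex set. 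Composing a convex function with a linear map preserves convexity, so each $g_{\bw}$ is jointly convex.

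The same argument can be phrased through the constraint of Proposition~\ref{min-max-convex-opt-problem}: the epigraph $\{(\bQ,\bR_1,\lambda): f\le\lambda\}$ is exactly the LMI set $\left[\begin{array}{cc}\bR_1&\bQ\\ \bQ^H&\lambda\bI\end{array}\right]\succeq\bzero$ intersected with $\bR_1\succ\bzero$. This equivalence follows from the Schur complement, because $\lambda\bI-\bQ^H\bR_1^{-1}\bQ\succeq\bzero$ holds if and only if $\lambda\ge\lambda_1(\bQ^H\bR_1^{-1}\bQ)=f(\bQ,\bR_1)$. That set is convex, which again gives convexity of $f$.

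The only delicate step is the change of variables and the use of the Schur complement, both of which require $\bR_1\succ\bzero$. This hypothesis is exactly what the domain ${\cal H}_{++}^N$ provides, and it is also why convexity is claimed only there. Beyond this, the proof is routine.
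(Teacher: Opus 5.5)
Your proposal is correct, and your second argument is the paper's proof. The paper asserts that the epigraph of $f$ is the LMI set $\{(\bQ,\bR_1,t) : \left[\begin{array}{cc}\bR_1&\bQ\\ \bQ^H&t\bI\end{array}\right]\succeq\bzero\}$ and concludes convexity. Your version is slightly more careful. That LMI set also contains points with singular $\bR_1$ (e.g.\ $\bR_1=\bzero$, $\bQ=\bzero$, $t\ge 0$), where $f$ is undefined. Intersecting with $\bR_1\succ\bzero$ and using the Schur complement together with $\lambda_1(\bQ^H\bR_1^{-1}\bQ)=\lambda_1(\bR_1^{-\frac{1}{2}}\bQ\bQ^H\bR_1^{-\frac{1}{2}})$ is what makes the identification exact.

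Your main route is genuinely different. You write $f(\bQ,\bR_1)=\sup_{\bw\ne\bzero}S(\bw,\bQ,\bR_1)$ and obtain convexity as a pointwise supremum of the maps $(\bQ,\bR_1)\mapsto\|\bQ^H\bw\|^2/(\bw^H\bR_1\bw)$. Each of these is convex as a quadratic-over-linear function composed with a real-linear map. That is exactly the convexity of $h$ established later in Proposition~\ref{cvx-obj-fcn-prop}, which the paper proves by a Cauchy--Schwarz computation rather than your perspective/Schur argument. Your route therefore makes explicit that the convexity of the minimax objective is inherited from the convexity of the SINR in $(\bQ,\bR_1)$ for each fixed $\bw$. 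This is the same structure that underlies the later saddle-point result.

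The paper's LMI route gives more than convexity: it yields an explicit semidefinite representation of the epigraph. That representation is what shows \eqref{min-max-2018-general-rank-1-inner-max-dual} is an SDP when $\bar{\cal B}_0$ and $\bar{\cal B}_1$ are LMI-representable.
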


See Appendix~\ref{proof-lemma-lambda-1-cvx-fcn} for the proof.

Since problem \eqref{min-max-2018-general-rank-1-inner-max-dual-equv-1} is a standard convex optimization problem, any stationary point of this problem is also globally optimal (see, e.g., \cite[Theorem 9.7]{Beck-book-2014}). In other words, the following condition characterizes the optimality for problem \eqref{min-max-2018-general-rank-1-inner-max-dual-equv-1}.

\begin{proposition}\label{optimality-condition-1}
The solution $(\bQ^\star, \bR_1^\star) \in \bar{\cal B}_0 \times \bar{\cal B}_1$, where $\bar{\cal B}_0\subset\mathbb{C}^{N\times M}$ and $\bar{\cal B}_1\subset{\cal H}_{++}^N$ are convex and closed, is optimal for \eqref{min-max-2018-general-rank-1-inner-max-dual-equv-1} if and only if the following condition
\begin{equation}\label{optimality-conditions-1}
\begin{array}{l}
\Re\left(\tr\left(\left(\frac{\partial f}{\partial \bQ}\left|_{(\bQ^\star,\bR_1^\star)}\right.\right)^H(\bQ-\bQ^\star)\right)\right)+\\
~~~~~~~~~~~~~~~~~~~~~~~~~~\Re\left(\tr\left(\left(\frac{\partial f}{\partial \bR_1}\left|_{(\bQ^\star,\bR_1^\star)}\right.\right)^H(\bR_1-\bR_1^\star)\right)\right)\\
~~~~~~~~~~~~~~~~~~~~~~~~~~~~~~~~~~~~~~~~~~~~~~~\ge0,\,\forall\bQ\in\bar{\cal B}_0,\,\bR_1\in\bar{\cal B}_1,
\end{array}
\end{equation}
holds.
\end{proposition}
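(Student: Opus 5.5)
The plan is to read \eqref{optimality-conditions-1} as the standard first-order variational inequality for minimizing a convex function over a convex set. First I would identify $\mathbb{C}^{N\times M}\times{\cal H}^N$ with a real Euclidean space, using the real inner product $\langle \bX,\bY\rangle=\Re(\tr(\bX^H\bY))$. Under this identification the (Wirtinger-type) partial derivatives $\partial f/\partial\bQ$ and $\partial f/\partial\bR_1$ are scaled so that the left-hand side of \eqref{optimality-conditions-1} is exactly the directional derivative $f'((\bQ^\star,\bR_1^\star);(\bQ-\bQ^\star,\bR_1-\bR_1^\star))$.

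To make the gradient explicit, note that $f(\bQ,\bR_1)=\max_{\bw^H\bR_1\bw=1}\bw^H\bQ\bQ^H\bw$. At a point where $\lambda_1$ is simple, the maximizer is $\bw^\star=\bR_1^{\star-\frac12}\bu_1^\star$ as in \eqref{define-w-star}. A Danskin/Hellmann--Feynman type computation then gives
\[
df=2\Re\big(\bw^{\star H}\,d\bQ\,\bQ^{\star H}\bw^\star\big)-\lambda^\star\,\bw^{\star H}\,d\bR_1\,\bw^\star .
\]
This identifies the gradient as a multiple of $\bw^\star\bw^{\star H}\bQ^\star$ in $\bQ$ and $-\lambda^\star\bw^\star\bw^{\star H}$ in $\bR_1$, up to the convention for complex derivatives. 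The constant should be fixed so that the pairing reproduces $df$.

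Sufficiency uses convexity. By Lemma~\ref{lambda-1-cvx-fcn}, $f$ is convex on the convex set $\bar{\cal B}_0\times\bar{\cal B}_1\subset\mathbb{C}^{N\times M}\times{\cal H}_{++}^N$, which is open in the $\bR_1$ component. The gradient inequality therefore gives
\[
f(\bQ,\bR_1)\ge f(\bQ^\star,\bR_1^\star)+\langle\nabla f(\bQ^\star,\bR_1^\star),(\bQ-\bQ^\star,\bR_1-\bR_1^\star)\rangle .
\]
If \eqref{optimality-conditions-1} holds, the last term is nonnegative for every feasible pair, so $(\bQ^\star,\bR_1^\star)$ is a global minimizer.

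Necessity uses the feasible directions. Convexity of the feasible set makes $(\bQ^\star,\bR_1^\star)+t(\bQ-\bQ^\star,\bR_1-\bR_1^\star)$ feasible for all $t\in[0,1]$. Optimality gives $f(\cdot+t\,\cdot)-f(\bQ^\star,\bR_1^\star)\ge0$; dividing by $t$ and letting $t\to0^+$ yields the nonnegativity of the directional derivative, which is \eqref{optimality-conditions-1}. Closedness is not needed for the equivalence itself; by Proposition~\ref{min-max-convex-opt-problem}, it only guarantees that an optimal point exists.

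The main obstacle is differentiability. $\lambda_1$ is not differentiable when the top eigenvalue of $\bR_1^{\star-\frac12}\bQ^\star\bQ^{\star H}\bR_1^{\star-\frac12}$ is repeated, and then the gradients in \eqref{optimality-conditions-1} are not uniquely defined. My first attempt would be to state the argument under the assumption that $\lambda_1$ is simple at the optimum, which is the generic case and is what the paper's notation implicitly presumes. I would then remark how the general case goes. The directional derivative of a convex function always exists and equals $\max_{\bu}$ of the above expression over unit top eigenvectors $\bu$, with $\bw=\bR_1^{\star-\frac12}\bu$. The same two arguments (the $t\to0^+$ limit for necessity, and $f(x)\ge f(x^\star)+f'(x^\star;x-x^\star)$ for sufficiency) go through verbatim with $f'$ in place of the gradient pairing.
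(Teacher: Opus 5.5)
Your proof is correct and takes the paper's route: the paper combines the convexity of $f$ from Lemma~\ref{lambda-1-cvx-fcn} with the first-order stationarity characterization for convex problems \cite[Theorem 9.7]{Beck-book-2014}, and your gradient-inequality argument (sufficiency) and feasible-direction argument (necessity) prove exactly that characterization. Your differentiability caveat is justified, but it is not a non-generic technicality, because minimizers of $\lambda_1$ often have a repeated top eigenvalue: take $\bar{\cal B}_0=\{\bI\}\subset\mathbb{C}^{2\times 2}$ and $\bar{\cal B}_1$ the segment joining $\mbox{diag}(1,2)$ and $\mbox{diag}(2,1)$, so the optimum is $\bR_1^\star=1.5\bI$ with the double eigenvalue $\lambda^\star=2/3$, and the paper's gradient formulas evaluated with $\bu_1^\star=[1,0]^T$ make the left-hand side of \eqref{optimality-conditions-1} equal to $-2/9<0$ at $\bR_1=\mbox{diag}(2,1)$; hence in general only your directional-derivative form (maximizing over unit top eigenvectors) is a valid necessary condition, and the paper's later arguments that use a fixed eigenvector inherit this gap.
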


In order to simplify the optimality condition \eqref{optimality-conditions-1}, we compute the two partial derivatives:
\begin{equation}\label{two-partial-derivatives}
\frac{\partial f}{\partial \bQ}\left|_{(\bQ^\star,\bR_1^\star)}\right. \; \mbox{and } \; \frac{\partial f}{\partial \bR_1}\left|_{(\bQ^\star,\bR_1^\star)}\right..
\end{equation}
Toward this end, we first present the following proposition.

\begin{proposition}\label{first-partial-derivative-1}
It holds that
\begin{equation}\label{first-partial-derivative-val-1}
\frac{\partial f}{\partial \bQ}\left|_{(\bQ^\star,\bR_1^\star)}\right.=2\bw^\star\bw^{\star H}\bQ^\star,
\end{equation}
where $\bw^\star=\bR_1^{\star -\frac{1}{2}}\bu_1^\star$, and $\bu_1^\star$ is an eigenvector associated with $\lambda_1(\bR_1^{\star -\frac{1}{2}}\bQ^\star\bQ^{\star H}\bR_1^{\star -\frac{1}{2}})$. Moreover,
\begin{equation}\label{first-partial-derivative-val-11}
\tr\left(\left(\frac{\partial f}{\partial \bQ}\left|_{(\bQ^\star,\bR_1^\star)}\right.\right)^H(\bQ-\bQ^\star)\right)=2\bw^{\star H}(\bQ-\bQ^\star)\bQ^{\star H}\bw^\star.
\end{equation}
\end{proposition}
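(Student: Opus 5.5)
The plan is to treat $f(\cdot,\bR_1^\star)$ as a pointwise maximum of smooth functions and apply an envelope (Danskin-type) argument. Write $\bR_1=\bR_1^\star$ fixed and use the Rayleigh-quotient representation
$$f(\bQ,\bR_1^\star)=\max_{\|\bu\|=1}\bu^H\bR_1^{\star-\frac12}\bQ\bQ^H\bR_1^{\star-\frac12}\bu=\max_{\bw^H\bR_1^\star\bw=1}\|\bQ^H\bw\|^2 .$$
The second form follows from the substitution $\bw=\bR_1^{\star-\frac12}\bu$. Each function $g_{\bw}(\bQ)=\|\bQ^H\bw\|^2$ is smooth and convex in $\bQ$. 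At $\bQ=\bQ^\star$ the maximum is attained by $\bw^\star=\bR_1^{\star-\frac12}\bu_1^\star$, since \eqref{the-optimal-values-0}--\eqref{w-Q-QH-w-all-star} already give $f(\bQ^\star,\bR_1^\star)=\bw^{\star H}\bQ^\star\bQ^{\star H}\bw^\star$ with $\bw^{\star H}\bR_1^\star\bw^\star=1$.

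Next I compute the derivative of $g_{\bw^\star}$. I use the convention implicit in \eqref{optimality-conditions-1}: the gradient $\bG$ is defined through $dg=\Re\,\tr(\bG^H d\bQ)$. Expanding,
$$g_{\bw^\star}(\bQ^\star+d\bQ)=g_{\bw^\star}(\bQ^\star)+2\Re\big(\bw^{\star H}d\bQ\,\bQ^{\star H}\bw^\star\big)+O(\|d\bQ\|^2).$$
By cyclicity of the trace, $\bw^{\star H}d\bQ\,\bQ^{\star H}\bw^\star=\tr\big(\bQ^{\star H}\bw^\star\bw^{\star H}d\bQ\big)=\tr\big((2\bw^\star\bw^{\star H}\bQ^\star)^H d\bQ\big)/2$. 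Hence $\nabla g_{\bw^\star}(\bQ^\star)=2\bw^\star\bw^{\star H}\bQ^\star$.

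By Danskin's theorem (equivalently, first-order perturbation of a simple eigenvalue, $d\lambda_1=\bu_1^{\star H}d(\bR_1^{\star-\frac12}\bQ\bQ^H\bR_1^{\star-\frac12})\bu_1^\star$), this is the partial derivative of $f$ whenever $\lambda_1$ is simple. This gives \eqref{first-partial-derivative-val-1}. The identity \eqref{first-partial-derivative-val-11} is then immediate: $\tr\big((2\bw^\star\bw^{\star H}\bQ^\star)^H(\bQ-\bQ^\star)\big)=2\tr\big(\bQ^{\star H}\bw^\star\bw^{\star H}(\bQ-\bQ^\star)\big)=2\bw^{\star H}(\bQ-\bQ^\star)\bQ^{\star H}\bw^\star$, again by the cyclic property of the trace.

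The main obstacle is the case where $\lambda_1(\bR_1^{\star-\frac12}\bQ^\star\bQ^{\star H}\bR_1^{\star-\frac12})$ is a multiple eigenvalue, where $f$ need not be differentiable. There I would argue that $2\bw^\star\bw^{\star H}\bQ^\star$ is still a valid (sub)gradient for any choice of the principal eigenvector $\bu_1^\star$. Since $f(\bQ,\bR_1^\star)\ge g_{\bw^\star}(\bQ)$ with equality at $\bQ^\star$, and $g_{\bw^\star}$ is convex, the gradient of $g_{\bw^\star}$ is a subgradient of the convex function $f$ (Lemma~\ref{lambda-1-cvx-fcn}). The formula should therefore be read with this interpretation, which is exactly what the optimality condition \eqref{optimality-conditions-1} requires. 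A rank-one $\bQ^\star\bQ^{\star H}$, as in the rank-one signal model, automatically gives a simple top eigenvalue.
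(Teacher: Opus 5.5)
For a simple top eigenvalue your argument is correct and is essentially the paper's own. The paper likewise applies the first-order perturbation formula $d\lambda_1=\bu_1^H\,d\bX\,\bu_1$ to $\bX=\bR_1^{-\frac{1}{2}}\bQ\bQ^H\bR_1^{-\frac{1}{2}}$, and then differentiates $\tr(\bR_1^{-\frac{1}{2}}\bu_1\bu_1^H\bR_1^{-\frac{1}{2}}\bQ\bQ^H)$ in $\bQ$. Your Danskin phrasing and explicit real-gradient convention (the one implicit in \eqref{optimality-conditions-1}, which produces the factor $2$) only make that computation more transparent. \eqref{first-partial-derivative-val-11} then follows by cyclicity, as you say. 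The paper's proof is also valid only when that eigenvalue is simple, although it never says so.

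What fails is your last paragraph. Your minorant argument does show that $2\bw^\star\bw^{\star H}\bQ^\star$ remains a subgradient when $\lambda_1$ is multiple. But that is not what \eqref{optimality-conditions-1} needs. A variational inequality that holds for one particular subgradient is sufficient for optimality. Necessity, which is the direction used later in the proof of Proposition~\ref{cvx-problem-1-optsoln}, only guarantees that \emph{some} element of the subdifferential satisfies it. At a multiple eigenvalue, the subdifferential of $f(\cdot,\bR_1^\star)$ at $\bQ^\star$ is the whole set $\{2\bR_1^{\star -\frac{1}{2}}\bU_1\bY\bU_1^H\bR_1^{\star -\frac{1}{2}}\bQ^\star:\ \bY\succeq\bzero,\ \tr\bY=1\}$, where the columns of $\bU_1$ form an orthonormal basis of the top eigenspace. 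It contains your rank-one elements but also their convex combinations, and it may happen that only a non-rank-one element works.

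This is not a technicality. Take $N=M=2$, $\bar{\cal B}_1=\{\bI\}$ and $\bar{\cal B}_0=\{\bQ\in{\cal H}^2:\ \bQ\succeq\bzero,\ \tr\bQ=2\}$, which is convex, compact and excludes $\bzero$. Then $f(\bQ,\bI)=\lambda_1(\bQ)^2$ is uniquely minimized at $\bQ^\star=\bI$, where $\lambda_1$ has multiplicity two and every unit vector is a principal eigenvector. Yet for every unit $\bw^\star$, the feasible pair $\bQ=2(\bI-\bw^\star\bw^{\star H})$, $\bR_1=\bI$ makes the left-hand side of \eqref{optimality-conditions-1} equal to $2\Re(\bw^{\star H}(\bQ-\bI)\bw^\star)=-2<0$. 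In fact, $2(\bI-\bw\bw^H/\|\bw\|^2)\in\bar{\cal B}_0$ annihilates $\bw$ for every $\bw\ne\bzero$, so here the maximin value is $0$ while the minimax value is $1$.

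So the formula can be asserted, and used downstream, only under an explicit simplicity hypothesis on $\lambda_1$ (automatic for $M=1$, as you note). Reading it as a subgradient does not rescue it.
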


See Appendix~\ref{proof-prop-first-partial-derivative-1} for the proof.

To compute the second partial derivative in \eqref{two-partial-derivatives}, we make the following useful observation.

\begin{lemma}\label{svd-u1-v1-lemma}
Suppose that $\bA\in\mathbb{C}^{N\times M}$ has the singular value decomposition $\bA=\bU\bSigma\bV^H$, and let $\bu_1$ and $\bv_1$ denote the first columns of $\bU$ and $\bV$, respectively, and $\lambda_1$ stand for the maximal eigenvalue $\lambda_1(\bA\bA^H)$. Then, it holds that
\begin{equation}\label{svd-u1-v1-1}
\bA\bv_1\bv_1^H\bA^H=\lambda_1\bu_1\bu_1^H,
\end{equation}
and
\begin{equation}\label{svd-u1-v1-2}
\bA^H\bu_1\bu_1^H\bA=\lambda_1\bv_1\bv_1^H,
\end{equation}
\end{lemma}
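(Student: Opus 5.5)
The plan is to use the singular value decomposition directly. Write $\bA=\bU\bSigma\bV^H$ with unitary $\bU\in\mathbb{C}^{N\times N}$ and $\bV\in\mathbb{C}^{M\times M}$. Let $\bSigma\in\mathbb{R}^{N\times M}$ have nonnegative diagonal entries $\sigma_1\ge\sigma_2\ge\cdots$. The first observation is that
\[
\bA\bA^H=\bU\bSigma\bSigma^H\bU^H .
\]
This is an eigendecomposition, so $\lambda_1(\bA\bA^H)=\sigma_1^2$, and we have $\lambda_1=\sigma_1^2$.

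Next we compute the images of the first singular vectors. Since $\bV^H\bv_1=\mathbf{e}_1$ by unitarity of $\bV$, we get
\[
\bA\bv_1=\bU\bSigma\mathbf{e}_1=\sigma_1\bu_1 .
\]
Symmetrically, $\bA^H=\bV\bSigma^H\bU^H$ and $\bU^H\bu_1=\mathbf{e}_1$ give
\[
\bA^H\bu_1=\bV\bSigma^H\mathbf{e}_1=\sigma_1\bv_1 .
\]
Here $\mathbf{e}_1$ denotes the first standard basis vector of the appropriate dimension, and $\sigma_1$ is real.

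Taking outer products then gives
\[
\bA\bv_1\bv_1^H\bA^H=(\bA\bv_1)(\bA\bv_1)^H=\sigma_1^2\bu_1\bu_1^H=\lambda_1\bu_1\bu_1^H ,
\]
which is \eqref{svd-u1-v1-1}. Likewise,
\[
\bA^H\bu_1\bu_1^H\bA=(\bA^H\bu_1)(\bA^H\bu_1)^H=\lambda_1\bv_1\bv_1^H ,
\]
which is \eqref{svd-u1-v1-2}.

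There is no real obstacle here. The only care needed is with the rectangular shape of $\bSigma$ and the case $\sigma_1=0$. If $\sigma_1=0$, then $\bA=\bzero$, and both sides of each identity vanish, so the claims still hold. The first column of $\bSigma$ and of $\bSigma^H$ is $\sigma_1\mathbf{e}_1$ regardless of whether $N\ge M$ or $N<M$. The identity $\lambda_1=\sigma_1^2$ relies on the convention of the paper's footnote that eigenvalues are listed in descending order.
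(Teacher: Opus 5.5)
Your proof is correct, but it reaches the result more directly than the paper does. The paper never writes down $\bA\bv_1=\sigma_1\bu_1$. It works only with the scalar relations $\bu_1^H\bA\bv_1=\|\bA\bv_1\|=\sqrt{\lambda_1}$ and sets $\hat\bv_1=\bA\bv_1$. It then notes that $\bu_1^H\hat\bv_1\hat\bv_1^H\bu_1=\|\hat\bv_1\|^2=\lambda_1(\hat\bv_1\hat\bv_1^H)$, so $\bu_1$ attains the maximum of the Rayleigh quotient of the rank-one matrix $\hat\bv_1\hat\bv_1^H$ and is its principal eigenvector. It concludes $\hat\bv_1\hat\bv_1^H=\lambda_1\bu_1\bu_1^H$ from the rank-one structure; in effect, this is the equality case of Cauchy--Schwarz.

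You instead read off $\bA\bv_1=\sigma_1\bu_1$ and $\bA^H\bu_1=\sigma_1\bv_1$ as the first columns of $\bA\bV=\bU\bSigma$ and $\bA^H\bU=\bV\bSigma^H$. This is shorter and gives a stronger vector identity, since the outer-product form only determines $\bA\bv_1$ up to a unimodular factor. It also needs no special care at $\sigma_1=0$, whereas the paper's rank-one step tacitly assumes $\lambda_1>0$ (the case $\lambda_1=0$ being trivial). The paper's variational route, in turn, has the modest merit of relying only on those scalar relations between two unit vectors rather than on the full unitary factorization.

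One small correction to your closing remark: $\lambda_1=\sigma_1^2$ comes from the descending ordering $\sigma_1\ge\sigma_2\ge\cdots$ of the singular values that you imposed at the start, not from the eigenvalue-ordering footnote (the lemma already defines $\lambda_1$ as the maximal eigenvalue). Without that SVD ordering the identities would fail in general.
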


See Appendix~\ref{proof-lemma-svd-u1-v1-lemma} for the proof.

With Lemma~\ref{svd-u1-v1-lemma} in hand, we compute $\frac{\partial f}{\partial \bR_1}$ at $(\bQ^\star,\bR_1^\star)$ as follows.

\begin{proposition}\label{second-partial-derivative-1}
It holds that
\begin{equation}\label{second-partial-derivative-val-1}
\frac{\partial f}{\partial \bR_1}\left|_{(\bQ^\star,\bR_1^\star)}\right.=-\lambda^\star\bw^\star\bw^{\star H},
\end{equation}
where $\bw^\star=\bR_1^{\star -\frac{1}{2}}\bu_1^\star$ and $\bu_1^\star$ is an eigenvector associated with $\lambda^\star=\lambda_1(\bR_1^{\star -\frac{1}{2}}\bQ^\star\bQ^{\star H}\bR_1^{\star -\frac{1}{2}})$. Moreover,
\begin{equation}\label{second-partial-derivative-val-11}
\tr\left(\left(\frac{\partial f}{\partial \bR_1}\left|_{(\bQ^\star,\bR_1^\star)}\right.\right)^H(\bR_1-\bR_1^\star)\right)=-\lambda_1^\star\bw^{\star H}(\bR_1-\bR_1^\star)\bw^\star.
\end{equation}
\end{proposition}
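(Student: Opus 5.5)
We compute the derivative of $f(\bQ,\bR_1)=\lambda_1(\bR_1^{-\frac12}\bQ\bQ^H\bR_1^{-\frac12})$ with respect to $\bR_1$ at $(\bQ^\star,\bR_1^\star)$. The plan is to avoid differentiating the matrix square root by rewriting the largest eigenvalue through Lemma~\ref{svd-u1-v1-lemma}.

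\textbf{Step 1: a square-root-free representation of $f$.} Set $\bA=\bR_1^{-\frac12}\bQ$. The nonzero eigenvalues of $\bA\bA^H$ and $\bA^H\bA$ coincide, so
\[
f(\bQ,\bR_1)=\lambda_1(\bQ^H\bR_1^{-1}\bQ),
\]
which involves only $\bR_1^{-1}$. Take the SVD $\bA^\star=\bR_1^{\star-\frac12}\bQ^\star=\bU\bSigma\bV^H$, with $\bu_1^\star$ and $\bv_1^\star$ its first left and right singular vectors.

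\textbf{Step 2: derivative of the eigenvalue.} Assume $\lambda^\star$ is a simple top eigenvalue of $\bQ^{\star H}\bR_1^{\star-1}\bQ^\star$; otherwise one works with a supergradient of the concave-in-direction function and picks the element determined by $\bu_1^\star$. By first-order eigenvalue perturbation, the differential of $\lambda_1$ at $\bQ^{\star H}\bR_1^{\star-1}\bQ^\star$ in direction $\bE$ is $\bv_1^{\star H}\bE\bv_1^\star$. The differential of $\bR_1^{-1}$ in direction $\bD=\bR_1-\bR_1^\star$ is $-\bR_1^{\star-1}\bD\bR_1^{\star-1}$. Combining these,
\[
df=-\bv_1^{\star H}\bQ^{\star H}\bR_1^{\star-1}\bD\bR_1^{\star-1}\bQ^\star\bv_1^\star
=-\tr\!\left(\bR_1^{\star-\frac12}\bA^\star\bv_1^\star\bv_1^{\star H}\bA^{\star H}\bR_1^{\star-\frac12}\bD\right).
\]

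\textbf{Step 3: apply Lemma~\ref{svd-u1-v1-lemma}.} By \eqref{svd-u1-v1-1}, $\bA^\star\bv_1^\star\bv_1^{\star H}\bA^{\star H}=\lambda^\star\bu_1^\star\bu_1^{\star H}$. Substituting,
\[
df=-\lambda^\star\tr\!\left(\bR_1^{\star-\frac12}\bu_1^\star\bu_1^{\star H}\bR_1^{\star-\frac12}\bD\right)=-\lambda^\star\tr(\bw^\star\bw^{\star H}\bD).
\]
Under the Hermitian inner product $\Re\tr(\bG^H\bD)$ used in Proposition~\ref{optimality-condition-1}, this identifies the gradient as $-\lambda^\star\bw^\star\bw^{\star H}$. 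That gradient is Hermitian, so
\[
\tr\big((-\lambda^\star\bw^\star\bw^{\star H})^H\bD\big)=-\lambda^\star\bw^{\star H}\bD\bw^\star,
\]
which is \eqref{second-partial-derivative-val-11}.

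\textbf{Main obstacle.} The delicate step is differentiability of $\lambda_1$ when the top eigenvalue is repeated. In that case I would argue via the directional derivative: it equals the maximum of $\bx^H\bE\bx$ over unit top eigenvectors $\bx$. Since $\bu_1^\star$ is the particular eigenvector used to define $\bw^\star$, the resulting expression serves as the valid (super/sub)gradient for the optimality condition, consistent with the convexity established in Lemma~\ref{lambda-1-cvx-fcn}.
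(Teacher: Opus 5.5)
Your proposal is correct and follows the paper's proof essentially step for step. Both rewrite $f$ as $\lambda_1(\bQ^H\bR_1^{-1}\bQ)$, differentiate the top eigenvalue as $\bv_1^H(\cdot)\bv_1$ together with the derivative of $\bR_1^{-1}$, and then use \eqref{svd-u1-v1-1} to replace $\bA^\star\bv_1^\star\bv_1^{\star H}\bA^{\star H}$ by $\lambda^\star\bu_1^\star\bu_1^{\star H}$.

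Your added remark on a repeated top eigenvalue, which the paper does not address, is sound in substance. The one correction is that $f$ is convex, so the element you obtain is a subgradient, not a supergradient of a ``concave-in-direction'' function.
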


See Appendix \ref{proof-prop-second-partial-derivative-1} for the proof.

By combining Propositions~\ref{first-partial-derivative-1} and \ref{second-partial-derivative-1}, we arrive at the following proposition, which provides a simplified reformulation of the optimality condition \eqref{optimality-conditions-1} for problem \eqref{min-max-2018-general-rank-1-inner-max-dual-equv-1}.

\begin{proposition}\label{optimality-conditions-thm}
Suppose that $\bar{\cal B}_0\subset\mathbb{C}^{N\times M}\setminus\{\bzero\}$ and $\bar{\cal B}_1\subset{\cal H}_{++}^N$ in \eqref{min-max-2018-general-rank-1-inner-max-dual-equv-1} are convex and closed. Then, $(\bQ^\star,\bR_1^\star)$ is an optimal solution for \eqref{min-max-2018-general-rank-1-inner-max-dual-equv-1}, if and only if the condition
\begin{equation}\label{optimality-conditions-2}
\begin{array}{l}
2\Re(\bw^{\star H}\bQ\bQ^{\star H}\bw^\star)-\bw^{\star H}(\lambda^\star(\bR_1-\bR_1^\star)+2\bQ^\star\bQ^{\star H})\bw^\star\\
~~~~~~~~~~~~~~~~~~~~~~~~~~~~~~~~~~~~~~~~~~~~~~~\ge0,\,\forall\bQ\in\bar{\cal B}_0,\,\bR_1\in\bar{\cal B}_1,
\end{array}
\end{equation}
holds. Here $\bw^\star=\bR_1^{\star -\frac{1}{2}}\bu_1^\star$ and $\bu_1^\star$ is an eigenvector associated with $\lambda^\star=\lambda_1(\bR_1^{\star -\frac{1}{2}}\bQ^\star\bQ^{\star H}\bR_1^{\star -\frac{1}{2}})$.
\end{proposition}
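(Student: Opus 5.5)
The plan is to substitute the two partial derivatives from Propositions~\ref{first-partial-derivative-1} and \ref{second-partial-derivative-1} into the variational inequality of Proposition~\ref{optimality-condition-1} and simplify. The ``if and only if'' then transfers directly, because Proposition~\ref{optimality-condition-1} already characterizes optimality. That characterization is valid because, by Lemma~\ref{lambda-1-cvx-fcn}, $f$ is convex on $\mathbb{C}^{N\times M}\times{\cal H}_{++}^N$, and the feasible set $\bar{\cal B}_0\times\bar{\cal B}_1$ is convex and closed. The assumption $\bzero\notin\bar{\cal B}_0$ gives $\bQ^\star\ne\bzero$, hence $\lambda^\star>0$. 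Together with $\bR_1^\star\succ\bzero$, this places us in the regime where those propositions were derived.

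First I will use \eqref{first-partial-derivative-val-11}. The $\bQ$-term of \eqref{optimality-conditions-1} equals
\[
\Re\bigl(2\bw^{\star H}(\bQ-\bQ^\star)\bQ^{\star H}\bw^\star\bigr)=2\Re(\bw^{\star H}\bQ\bQ^{\star H}\bw^\star)-2\bw^{\star H}\bQ^\star\bQ^{\star H}\bw^\star .
\]
The last quantity is already real, since it equals $\|\bQ^{\star H}\bw^\star\|^2$. Next I will use \eqref{second-partial-derivative-val-11}. Because $\bR_1-\bR_1^\star$ is Hermitian, $\bw^{\star H}(\bR_1-\bR_1^\star)\bw^\star$ is real, so the $\bR_1$-term equals $-\lambda^\star\bw^{\star H}(\bR_1-\bR_1^\star)\bw^\star$; here $\lambda_1^\star$ in \eqref{second-partial-derivative-val-11} is just $\lambda^\star$. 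Adding the two terms and grouping the real quadratic forms gives
\[
2\Re(\bw^{\star H}\bQ\bQ^{\star H}\bw^\star)-\bw^{\star H}\bigl(\lambda^\star(\bR_1-\bR_1^\star)+2\bQ^\star\bQ^{\star H}\bigr)\bw^\star .
\]
This is exactly the left-hand side of \eqref{optimality-conditions-2}, so \eqref{optimality-conditions-1} and \eqref{optimality-conditions-2} coincide for all $\bQ\in\bar{\cal B}_0$ and $\bR_1\in\bar{\cal B}_1$.

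The main obstacle is justifying the use of the gradient formulas. $f$ is a maximal eigenvalue and may fail to be differentiable when $\lambda^\star$ is a repeated eigenvalue of $\bR_1^{\star-\frac12}\bQ^\star\bQ^{\star H}\bR_1^{\star-\frac12}$. In that case Propositions~\ref{first-partial-derivative-1}--\ref{second-partial-derivative-1} give only one subgradient, the one built from a chosen $\bu_1^\star$.

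I would handle this in two steps. In the simple-eigenvalue case the gradient is well defined, and the argument above is complete. In the general case I would read Proposition~\ref{optimality-condition-1} as a subgradient/directional-derivative condition and argue each direction separately. For sufficiency, the expression built from any fixed $\bu_1^\star$ is a subgradient of the convex function $f$. This follows from the definition of $\lambda_1$ as a maximum, $f(\bQ,\bR_1)\ge \bw^{\star H}\bQ\bQ^H\bw^\star/\bw^{\star H}\bR_1\bw^\star$, linearized at the star point. So \eqref{optimality-conditions-2} implies $f(\bQ,\bR_1)\ge f(\bQ^\star,\bR_1^\star)$. For necessity, I would rely on the paper's convention that $\bw^\star$ is the eigenvector attaining the relevant quantity, mirroring how the derivative propositions are stated.

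If this subgradient refinement proves too delicate, I will simply state that the equivalence follows by direct substitution into Proposition~\ref{optimality-condition-1}, as the paper's framework assumes.
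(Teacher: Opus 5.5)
Your substitution computation is correct, and it is exactly the paper's proof. The paper obtains \eqref{optimality-conditions-2} just by inserting Propositions~\ref{first-partial-derivative-1} and~\ref{second-partial-derivative-1} into Proposition~\ref{optimality-condition-1}, tacitly treating $f$ as differentiable. When $\lambda^\star$ is a simple eigenvalue this is legitimate, and both directions follow. Your sufficiency argument for the general case is also sound, and it works for every choice of $\bu_1^\star$. The function $h_{\bw^\star}(\bQ,\bR_1)=\bw^{\star H}\bQ\bQ^H\bw^\star/\bw^{\star H}\bR_1\bw^\star$ is a convex minorant of $f$ (convexity is Proposition~\ref{cvx-obj-fcn-prop}, which you should cite) that touches $f$ at $(\bQ^\star,\bR_1^\star)$. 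Hence the left-hand side of \eqref{optimality-conditions-2} is a lower bound on $f(\bQ,\bR_1)-\lambda^\star$.

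The gap is necessity when $\lambda^\star$ is a repeated eigenvalue. Appealing to a ``convention'' for $\bw^\star$ cannot close it, because there the ``only if'' claim is false. For the nonsmooth convex $f$, optimality only yields \emph{some} element of $\partial f(\bQ^\star,\bR_1^\star)$ satisfying the variational inequality. These elements are the pairs $(2\bX\bQ^\star,-\lambda^\star\bX)$ with $\bX=\bR_1^{\star-\frac12}\bY\bR_1^{\star-\frac12}$, where $\bY\succeq\bzero$, $\tr\bY=1$, and $\bY$ is supported on the top eigenspace. The certifying $\bY$ may have to be of rank larger than one, while \eqref{optimality-conditions-2} only tests $\bY=\bu_1^\star\bu_1^{\star H}$.

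Concretely, let $N=M=2$, $\bar{\cal B}_1=\{\bI\}$ and $\bar{\cal B}_0=\{\bQ:\Re(\tr\bQ)\ge1\}$, which is convex, closed, and has $\bzero\notin\bar{\cal B}_0$. Since $\Re(\tr\bQ)\le\sigma_1(\bQ)+\sigma_2(\bQ)\le2\|\bQ\|_2$, we get $f(\bQ,\bI)=\|\bQ\|_2^2\ge\frac14$, with equality at $\bQ^\star=\frac12\bI$. So $\lambda^\star=\frac14$ and every unit vector is an admissible $\bw^\star$. Now \eqref{optimality-conditions-2} reads $\Re(\bw^{\star H}\bQ\bw^\star)\ge\frac12$ for all $\bQ\in\bar{\cal B}_0$. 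This fails at $\bQ=\bI-\bw^\star\bw^{\star H}$, where the left side is $0$, for every choice of $\bu_1^\star$. Only the mixed subgradient with $\bY=\frac12\bI$ certifies optimality.

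The same choice $\bQ=\bI-\bw\bw^H$ shows that the inner minimum of the maximin problem is $0$ for every unit $\bw$, whereas the minimax value is $\frac14$. So your fallback of asserting the result ``by direct substitution'' would assert a false statement. The necessity direction needs an extra hypothesis, e.g.\ that $\lambda^\star$ is simple, which is automatic for $M=1$ since $\bQ^\star\ne\bzero$.
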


With the previous results, we are now able to show \eqref{opt-vals-equal-1}. In fact, observe that
\begin{equation}\label{max-min-opt-val-lower-bound}
\begin{array}{l}
\underset{\bw\ne\bzero}{\sf{maximize}}~~\underset{\bQ\in\bar{\cal
B}_0,\bR_1\in\bar{\cal B}_1}{\sf{minimize}}~~\frac{\bw^H\bQ\bQ^H\bw}{\bw^H\bR_{1}\bw}\\
~~~~~~~~~~~~~~~~~~~~~~~~~~~~~~~~~~\ge\underset{\bQ\in\bar{\cal
B}_0,\bR_1\in\bar{\cal B}_1}{\sf{minimize}}~~\frac{\bw^{\star H}\bQ\bQ^H\bw^\star}{\bw^{\star H}\bR_{1}\bw^\star},
\end{array}
\end{equation}
where $\bw^\star$ is defined in \eqref{define-w-star}. In other words,
\begin{equation}\label{max-min-opt-val-lower-bound-1}
v_1^\star\ge\underset{\bQ\in\bar{\cal
B}_0,\bR_1\in\bar{\cal B}_1}{\sf{minimize}}~~\frac{\bw^{\star H}\bQ\bQ^H\bw^\star}{\bw^{\star H}\bR_{1}\bw^\star}.
\end{equation}
On the other hand, note that $v_2^\star=\bw^{\star H}\bQ^\star\bQ^{\star H}\bw^\star$ (see \eqref{the-optimal-values-0}-\eqref{w-Q-QH-w-all-star}) with $\bw^{\star H}\bR^{\star}_1\bw^\star=1$, where $\bQ^\star$ together with $\bR_1^\star$ is optimal for \eqref{min-max-2018-general-rank-1-inner-max-dual-equv-1}. Therefore, if we prove that
\begin{equation}\label{max-min-opt-val-lower-bound-2}
%\bw^{\star H}\bQ^\star\bQ^{\star H}\bw^\star=
\bw^{\star H}\bQ^\star\bQ^{\star H}\bw^\star=\underset{\bQ\in\bar{\cal
B}_0,\bR_1\in\bar{\cal B}_1}{\sf{minimize}}~~\frac{\bw^{\star H}\bQ\bQ^H\bw^\star}{\bw^{\star H}\bR_{1}\bw^\star},
\end{equation}
then it follows from \eqref{max-min-opt-val-lower-bound-1} that $v_1^\star\ge v_2^\star$, which implies $v_1^\star= v_2^\star$ taking into account \eqref{opt-vals-equal}.

Toward this end, let us  show \eqref{max-min-opt-val-lower-bound-2}. Define
\begin{equation}\label{cvx-min-problem-obj-fcn}
h(\bQ,\bR_1)=\frac{\bw^H\bQ\bQ^H\bw}{\bw^{H}\bR_{1}\bw}=\frac{\|\bQ^H\bw\|^2}{\bw^{H}\bR_{1}\bw}.
\end{equation}
for a given $\bw\ne\bzero$.
We claim that the function $h(\bQ,\bR_1)$ is convex, as stated in the following proposition.

\begin{proposition}\label{cvx-obj-fcn-prop}
For $\bw\ne\bzero$, the function defined in \eqref{cvx-min-problem-obj-fcn}
%\begin{equation}\label{cvx-min-problem-obj-fcn-general}
%h(\bQ,\bR_1)=\frac{\|\bQ^H\bw\|^2}{\bw^{H}\bR_{1}\bw}.
%\end{equation}
%Then, $h(\bQ,\bR_1)$
is convex over $\mathbb{C}^{N\times M}\times{\cal H}_+^N$.
\end{proposition}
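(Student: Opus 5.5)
The plan is to write $h$ as the composition of the quadratic-over-linear function with a map that is affine over the reals. Set
\[
\bx(\bQ)=\bQ^H\bw\in\mathbb{C}^M,\qquad t(\bR_1)=\bw^H\bR_1\bw\in\mathbb{R}_+,
\]
and $g(\bx,t)=\|\bx\|^2/t$. Then $h(\bQ,\bR_1)=g(\bx(\bQ),t(\bR_1))$.

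The map $(\bQ,\bR_1)\mapsto(\bx,t)$ is linear over $\mathbb{R}$. The map $\bQ\mapsto\bQ^H\bw$ is conjugate-linear over $\mathbb{C}$, but it is additive and commutes with real scalars, and that is all convexity needs. The map $\bR_1\mapsto\bw^H\bR_1\bw$ is real-valued and real-linear on ${\cal H}^N$. Convexity is preserved under composition with an affine map, so it suffices to show that $g$ is jointly convex on $\mathbb{C}^M\times\mathbb{R}_{++}$. The boundary $t=0$ is treated below.

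For the convexity of $g$, I would use an epigraph/Schur-complement argument, in the same spirit as the LMI in \eqref{min-max-2018-general-rank-1-inner-max-dual}. For $t>0$,
\[
g(\bx,t)\le s \iff \left[\begin{array}{cc} t & \bx^H\\ \bx & s\bI\end{array}\right]\succeq\bzero ,
\]
and the same test can equivalently be written with the $1\times1$ block in the form $\left[\begin{array}{cc} s & \bx^H\\ \bx & t\bI\end{array}\right]$ restricted to the relevant rank. The matrix on the right depends linearly on $(\bx,t,s)$ over $\mathbb{R}$, so the epigraph of $g$ is the preimage of the PSD cone under a real-linear map. It is therefore convex, and hence $g$ is convex. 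As an alternative, $g$ is the perspective of the convex function $\|\bx\|^2$, and perspectives of convex functions are convex. Either route is standard; I would cite the perspective result and add the Schur-complement line as verification.

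The only real obstacle is the domain ${\cal H}_+^N$ rather than ${\cal H}_{++}^N$, because $t=\bw^H\bR_1\bw$ can be zero. I would adopt the usual extended-value convention for the quadratic-over-linear function: $g(\bzero,0)=0$ and $g(\bx,0)=+\infty$ for $\bx\ne\bzero$. With this convention the epigraph is the closure of the one above. That closure is still described by the same LMI, since at $t=0$ the condition forces $\bx=\bzero$. So convexity on the closed domain holds as an extended-real-valued function.

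Finally, I would note where the statement is actually used. In the subsequent application $\bR_1\in\bar{\cal B}_1\subset{\cal H}_{++}^N$ and $\bw\ne\bzero$, so $t>0$ there. The boundary convention is needed only for the stated domain and does not affect problem \eqref{max-min-opt-val-lower-bound-2}.
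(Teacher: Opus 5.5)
Your proof is correct, but it takes a different route from the paper.

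\textbf{The paper's route.} The paper verifies the convexity inequality for $h$ directly. It clears denominators, with $a=\bw^H\bR_{11}\bw$ and $b=\bw^H\bR_{12}\bw$, and regroups the left-hand side as $(\alpha a+(1-\alpha)b)\,(\alpha\|\bQ_1^H\bw\|^2 b+(1-\alpha)\|\bQ_2^H\bw\|^2 a)$. Cauchy--Schwarz bounds this below by $(\alpha\|\bQ_1^H\bw\|+(1-\alpha)\|\bQ_2^H\bw\|)^2ab$, and the triangle inequality finishes. This is elementary and self-contained. However, it tacitly divides by $a$, $b$ and $\alpha a+(1-\alpha)b$, so it only covers points with $\bw^H\bR_1\bw>0$, even though the statement claims ${\cal H}_+^N$.

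\textbf{Your route.} You factor $h$ as the quadratic-over-linear function (the perspective of $\|\bx\|^2$) composed with the real-linear map $(\bQ,\bR_1)\mapsto(\bQ^H\bw,\bw^H\bR_1\bw)$. You then certify convexity by a Schur-complement LMI. This argument is more structural and mirrors the paper's own epigraph argument for Lemma~\ref{lambda-1-cvx-fcn}. Through your extended-value convention and the observation that the LMI at $t=0$ forces $\bx=\bzero$, it is also the version that gives a precise meaning to convexity on the boundary of ${\cal H}_+^N$. As you note, only $t>0$ is needed downstream, since $\bar{\cal B}_1\subset{\cal H}_{++}^N$.

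\textbf{One small point.} Your phrase ``restricted to the relevant rank'' is unnecessary. For $t>0$, both $\left[\begin{array}{cc} t & \bx^H\\ \bx & s\bI\end{array}\right]\succeq\bzero$ and $\left[\begin{array}{cc} s & \bx^H\\ \bx & t\bI\end{array}\right]\succeq\bzero$ are exactly equivalent to $s\ge\|\bx\|^2/t$ by the Schur complement, with no rank condition. At $t=0$, both force $\bx=\bzero$ and $s\ge 0$.
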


See Appendix~\ref{proof-prop-cvx-obj-fcn-prop} for the proof.

Thus, minimization problem %\eqref{max-min-opt-val-lower-bound-2}
\begin{equation}\label{cvx-problem-1}
\underset{\bQ\in\bar{\cal
B}_0,\bR_1\in\bar{\cal B}_1}{\sf{minimize}}~~\frac{\bw^{\star H}\bQ\bQ^H\bw^\star}{\bw^{\star H}\bR_{1}\bw^\star}
\end{equation}
is convex. Moreover, the optimal solution $(\bQ^\star, \bR_1^\star)$ for \eqref{min-max-2018-general-rank-1-inner-max-dual-equv-1} is also optimal for \eqref{cvx-problem-1}. Specifically, we have the following proposition.

\begin{proposition}\label{cvx-problem-1-optsoln}
Suppose that $(\bQ^\star, \bR_1^\star)$ is optimal for \eqref{min-max-2018-general-rank-1-inner-max-dual-equv-1}, and let $\bw^\star$ be defined as in \eqref{define-w-star}. Then, $(\bQ^\star, \bR_1^\star)$ is also optimal for \eqref{cvx-problem-1} with the optimal value $\bw^{\star H} \bQ^\star \bQ^{\star H} \bw^\star$, and vice versa.
\end{proposition}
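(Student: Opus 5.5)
We prove the forward direction by checking the first-order optimality condition of the convex problem \eqref{cvx-problem-1}, and then use the equality of optimal values for the converse. Since $\bw^\star$ is fixed and $h(\bQ,\bR_1)$ is convex by Proposition~\ref{cvx-obj-fcn-prop}, a feasible point $(\bQ^\star,\bR_1^\star)$ is optimal for \eqref{cvx-problem-1} if and only if
\[
\Re\,\tr\big((\partial h/\partial\bQ)^H(\bQ-\bQ^\star)\big)+\Re\,\tr\big((\partial h/\partial\bR_1)^H(\bR_1-\bR_1^\star)\big)\ge0
\]
for all $\bQ\in\bar{\cal B}_0$ and $\bR_1\in\bar{\cal B}_1$, with the derivatives evaluated at $(\bQ^\star,\bR_1^\star)$. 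This is the same stationarity characterization used for Proposition~\ref{optimality-condition-1}.

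We next compute the two gradients of $h$ at $(\bQ^\star,\bR_1^\star)$, using the normalization $\bw^{\star H}\bR_1^\star\bw^\star=1$. The numerator $\bw^{\star H}\bQ\bQ^H\bw^\star$ has derivative $2\bw^\star\bw^{\star H}\bQ$ in $\bQ$, exactly as in Proposition~\ref{first-partial-derivative-1}. Differentiating $1/(\bw^{\star H}\bR_1\bw^\star)$ gives
\[
\frac{\partial h}{\partial\bR_1}=-\frac{\bw^{\star H}\bQ^\star\bQ^{\star H}\bw^\star}{(\bw^{\star H}\bR_1^\star\bw^\star)^2}\,\bw^\star\bw^{\star H}=-\lambda^\star\bw^\star\bw^{\star H},
\]
where we used \eqref{the-optimal-values-0}--\eqref{w-Q-QH-w-all-star}. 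Both gradients therefore coincide with those of $f$ in Propositions~\ref{first-partial-derivative-1} and \ref{second-partial-derivative-1}. Substituting them, the first-order condition for \eqref{cvx-problem-1} becomes exactly condition \eqref{optimality-conditions-2}.

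For the forward direction, $(\bQ^\star,\bR_1^\star)$ is optimal for \eqref{min-max-2018-general-rank-1-inner-max-dual-equv-1}, so Proposition~\ref{optimality-conditions-thm} gives \eqref{optimality-conditions-2}. Hence $(\bQ^\star,\bR_1^\star)$ is optimal for \eqref{cvx-problem-1}, and its value is $h(\bQ^\star,\bR_1^\star)=\bw^{\star H}\bQ^\star\bQ^{\star H}\bw^\star$.

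For the converse ("vice versa"), we read the statement as saying that any optimal pair $(\tilde\bQ,\tilde\bR_1)$ of \eqref{cvx-problem-1} is optimal for \eqref{min-max-2018-general-rank-1-inner-max-dual-equv-1}. The pair attains the value $\lambda^\star$. For any $\bw\ne\bzero$ we have $f(\bQ,\bR_1)=\max_{\bw\ne\bzero}\bw^H\bQ\bQ^H\bw/\bw^H\bR_1\bw$, so
\[
f(\tilde\bQ,\tilde\bR_1)\ge h(\tilde\bQ,\tilde\bR_1)=\lambda^\star=\min f .
\]
This lower bound alone does not give equality. We close the gap with the max--min inequality: $v_1^\star$ is at least $\min h=\lambda^\star=v_2^\star$, and $v_1^\star\le v_2^\star$, so $(\bw^\star,\tilde\bQ,\tilde\bR_1)$ is a saddle point. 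From the saddle point we get $f(\tilde\bQ,\tilde\bR_1)=h(\tilde\bQ,\tilde\bR_1)=\lambda^\star$, which is optimal for \eqref{min-max-2018-general-rank-1-inner-max-dual-equv-1}.

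The main obstacle is justifying the saddle-point property in this converse. It requires
\[
\max_{\bw\ne\bzero}\frac{\bw^H\tilde\bQ\tilde\bQ^H\bw}{\bw^H\tilde\bR_1\bw}=\lambda^\star .
\]
We would prove it using the optimality of $\bw^\star$ for $\min h$ together with the convexity of $f$. If this route fails, we would restrict the converse to the weaker claim that the two problems share the optimal value, attained at $(\bQ^\star,\bR_1^\star)$.
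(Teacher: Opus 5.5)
Your forward direction is the paper's proof. It computes the same gradients $2\bw^\star\bw^{\star H}\bQ^\star$ and $-\lambda^\star\bw^\star\bw^{\star H}$ of $h$ at $(\bQ^\star,\bR_1^\star)$, notes that the first-order condition for \eqref{cvx-problem-1} then coincides with \eqref{optimality-conditions-2}, and appeals to Proposition~\ref{optimality-conditions-thm}. Like the paper, it tacitly requires $\lambda^\star$ to be a simple eigenvalue, since $f$ is treated as differentiable. Without that, the claim fails for an admissible choice of $\bu_1^\star$: take $\bar{\cal B}_1=\{\bI\}$, $\bar{\cal B}_0=\{\mathrm{diag}(1+s,1-s): |s|\le 1/2\}$, $\bQ^\star=\bI$, $\bu_1^\star=[1,0]^T$. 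Then $h$ has minimum $1/4<1=h(\bQ^\star,\bR_1^\star)$.

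The genuine gap is the converse. You read it as: every minimizer $(\tilde\bQ,\tilde\bR_1)$ of \eqref{cvx-problem-1}, with $\bw^\star$ held fixed, also minimizes $f$. Under that reading the claim is false. Take $N=2$, $M=1$, $\bar{\cal B}_1=\{\bI\}$ and $\bar{\cal B}_0=\{[1,t]^T: t\in[-1,1]\}$. Then $\bQ^\star=[1,0]^T$, $\lambda^\star=1$, $\bw^\star=[1,0]^T$, and $h([1,t]^T,\bI)=1$ for all $t$, so every feasible point solves \eqref{cvx-problem-1}. Yet $f([1,t]^T,\bI)=1+t^2>1$ for $t\ne0$.

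The faulty step is ``$v_1^\star=v_2^\star$, hence $(\bw^\star,\tilde\bQ,\tilde\bR_1)$ is a saddle point.'' Value equality makes $\bw^\star$ maximin-optimal. Together with $(\tilde\bQ,\tilde\bR_1)$ minimizing $h$, it gives only the right-hand inequality $S(\bw^\star,\tilde\bQ,\tilde\bR_1)\le S(\bw^\star,\bQ,\bR_1)$. The left-hand inequality, $S(\bw,\tilde\bQ,\tilde\bR_1)\le S(\bw^\star,\tilde\bQ,\tilde\bR_1)$ for all $\bw\ne\bzero$, is precisely $f(\tilde\bQ,\tilde\bR_1)=\lambda^\star$, which is what you want to prove. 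Saddle points pair maximin-optimal with minimax-optimal points, and the minimizer set of $S(\bw^\star,\cdot,\cdot)$ can be strictly larger than the minimax-optimal set. Your fallback only restates the forward direction.

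The converse the paper intends compares the two problems at the same point with the same $\bw^\star$; that is what its ``same optimality condition'' argument does. It reads: if a feasible pair $(\bQ^\star,\bR_1^\star)$, with $\bw^\star$ built from that pair via \eqref{define-w-star}, minimizes $h$, then it minimizes $f$. This needs no calculus. For every feasible $(\bQ,\bR_1)$ we have $f(\bQ,\bR_1)\ge h(\bQ,\bR_1)\ge h(\bQ^\star,\bR_1^\star)=f(\bQ^\star,\bR_1^\star)$. The first inequality holds because $f$ is the maximum of $S(\cdot,\bQ,\bR_1)$. The last equality is the computation in \eqref{the-optimal-values-0}--\eqref{the-optimal-values-1}, which uses only the definition of $\bw^\star$.
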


See Appendix~\ref{proof-prop-cvx-problem-1-optsoln} for the proof.

\subsection{Saddle Point Reformulation for the Equivalence Between the Maximin and Minimax SINR Problems}
We have thus far shown that the maximin problem \eqref{max-min-2018-general-rank-1} and the minimax problem \eqref{min-max-2018-general-rank-1} are equivalent. This main result can now be restated in the form of a saddle-point theorem as follows.

\begin{theorem}\label{maximin-minimax-equiv-thm}
Suppose that $\bar{\cal B}_0\subset\mathbb{C}^{N\times M}\setminus\{\bzero\}$ and $\bar{\cal B}_1\subset{\cal H}_{++}^N$ are convex and closed. Let $(\bQ^\star,\bR_1^\star)$ ibe an optimal solution of \eqref{min-max-2018-general-rank-1-inner-max-dual-equv-1}, and denote $\bw^\star=\bR_1^{\star -\frac{1}{2}}\bu_1^\star$, where $\bu_1^\star$ is an eigenvector associated with the largest eigenvalue $\lambda^\star=\lambda_1(\bR_1^{\star -\frac{1}{2}}\bQ^\star\bQ^{\star H}\bR_1^{\star -\frac{1}{2}})$ (namely, $(\lambda^\star,\bQ^\star,\bR_1^\star)$ is optimal for \eqref{min-max-2018-general-rank-1-inner-max-dual}). Then it holds that
\begin{equation}\label{maximin-minimax-equiv}
\begin{array}{l}
\bw^{\star H}\bQ^\star\bQ^{\star H}\bw^\star=\underset{\bw\ne\bzero}{\sf{maximize}}\underset{\bQ\in\bar{\cal
B}_0,\bR_1\in\bar{\cal B}_1}{\sf{minimize}}~\frac{\bw^H\bQ\bQ^H\bw}{\bw^H\bR_{1}\bw} \\
~~~~~~~~~~~~~~~~~~~~~~~~~~~~~= \underset{\bQ\in\bar{\cal
B}_0,\bR_1\in\bar{\cal B}_1}{\sf{minimize}}\underset{\bw\ne\bzero}{\sf{maximize}}~\frac{\bw^H\bQ\bQ^H\bw}{\bw^H\bR_{1}\bw} .
\end{array}
\end{equation}
Moreover, $(\bw^\star,\bQ^\star,\bR_1^\star)$ is an optimal solution for both the maximin and minimax SINR problems.
\end{theorem}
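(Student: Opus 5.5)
The theorem is essentially an assembly of Propositions~\ref{min-max-convex-opt-problem} and \ref{cvx-problem-1-optsoln} with the lower-bound chain \eqref{max-min-opt-val-lower-bound}--\eqref{max-min-opt-val-lower-bound-2}. The plan has three steps: identify $v_2^\star$ with $\bw^{\star H}\bQ^\star\bQ^{\star H}\bw^\star$; show that $\bw^\star$ attains at least this value in the maximin problem; then use weak duality to close the gap and read off optimality of the triple.

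\emph{Step 1 (minimax side).} By Proposition~\ref{min-max-convex-opt-problem}, problem \eqref{min-max-2018-general-rank-1-inner-max-dual} is solvable. Any optimal $(\bQ^\star,\bR_1^\star)$ of \eqref{min-max-2018-general-rank-1-inner-max-dual-equv-1} gives an optimal $(\lambda^\star,\bQ^\star,\bR_1^\star)$ of \eqref{min-max-2018-general-rank-1-inner-max-dual}, since the LMI is equivalent to $\lambda\ge\lambda_1(\bR_1^{-\frac12}\bQ\bQ^H\bR_1^{-\frac12})$ by a Schur complement with $\bR_1\succ\bzero$. The same proposition then says $(\bw^\star,\bQ^\star,\bR_1^\star)$ is optimal for \eqref{min-max-2018-general-rank-1}. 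The chain \eqref{the-optimal-values-0}--\eqref{the-optimal-values-1} gives $v_2^\star=\lambda^\star=\bw^{\star H}\bQ^\star\bQ^{\star H}\bw^\star$ and $\bw^{\star H}\bR_1^\star\bw^\star=1$. This proves the last equality in \eqref{maximin-minimax-equiv}.

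\emph{Step 2 (maximin lower bound).} Fix $\bw=\bw^\star$, which is nonzero because $\bR_1^{\star-\frac12}$ is invertible and $\bu_1^\star\neq\bzero$. Proposition~\ref{cvx-problem-1-optsoln} says $(\bQ^\star,\bR_1^\star)$ minimizes $h(\bQ,\bR_1)$ in \eqref{cvx-problem-1} over $\bar{\cal B}_0\times\bar{\cal B}_1$, with value $\bw^{\star H}\bQ^\star\bQ^{\star H}\bw^\star$. This is exactly \eqref{max-min-opt-val-lower-bound-2}. Inserting it into \eqref{max-min-opt-val-lower-bound-1} yields $v_1^\star\ge\bw^{\star H}\bQ^\star\bQ^{\star H}\bw^\star=v_2^\star$.

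\emph{Step 3 (closing the gap).} Weak duality, $\max_{\bw}\min_{\bQ,\bR_1}\le\min_{\bQ,\bR_1}\max_{\bw}$, gives $v_1^\star\le v_2^\star$. Combined with Step 2 this gives equality throughout \eqref{maximin-minimax-equiv}. For the ``moreover'' part, $\bw^\star$ attains $v_1^\star$ in the maximin problem, and $(\bQ^\star,\bR_1^\star)$ attains the inner minimum at $\bw^\star$ by Step 2. Hence $(\bw^\star,\bQ^\star,\bR_1^\star)$ solves both problems. It is in fact a saddle point:
\begin{equation*}
\frac{\bw^H\bQ^\star\bQ^{\star H}\bw}{\bw^H\bR_1^\star\bw}\le\lambda^\star\le\frac{\bw^{\star H}\bQ\bQ^H\bw^\star}{\bw^{\star H}\bR_1\bw^\star}
\end{equation*}
for all $\bw\ne\bzero$ and all $\bQ\in\bar{\cal B}_0$, $\bR_1\in\bar{\cal B}_1$. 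The left inequality is the Rayleigh-quotient bound by $\lambda_1$, and the right one is Step 2.

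\emph{Main obstacle.} All the real content sits in Proposition~\ref{cvx-problem-1-optsoln}, i.e.\ in verifying that the optimality condition \eqref{optimality-conditions-2} of Proposition~\ref{optimality-conditions-thm} is also the first-order condition for the convex problem \eqref{cvx-problem-1} at $(\bQ^\star,\bR_1^\star)$. If that had to be rechecked, I would compute the gradient of $h$ at $(\bQ^\star,\bR_1^\star)$ using $\bw^{\star H}\bR_1^\star\bw^\star=1$. This gives $\partial h/\partial\bQ=2\bw^\star\bw^{\star H}\bQ^\star$ and $\partial h/\partial\bR_1=-\lambda^\star\bw^\star\bw^{\star H}$, the same as in Propositions~\ref{first-partial-derivative-1} and \ref{second-partial-derivative-1}. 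The variational inequalities then coincide, and convexity of $h$ (Proposition~\ref{cvx-obj-fcn-prop}) makes that condition sufficient for global optimality of \eqref{cvx-problem-1}.

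Closedness is used only to guarantee that the minimizer $(\bQ^\star,\bR_1^\star)$ exists. I would rely on the solvability part of Proposition~\ref{min-max-convex-opt-problem} for this rather than on compactness.
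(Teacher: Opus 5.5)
\textbf{There is a genuine gap in Step~2, and the statement is false in general for $M\ge2$.} Your outline is the paper's own argument: Proposition~\ref{min-max-convex-opt-problem}, the chain \eqref{the-optimal-values-0}--\eqref{the-optimal-values-1}, Proposition~\ref{cvx-problem-1-optsoln}, then weak duality. It breaks at Step~2, exactly where you placed the main obstacle, and the paper's Propositions~\ref{optimality-condition-1}--\ref{cvx-problem-1-optsoln} share the same gap.

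The variational inequality \eqref{optimality-conditions-2}, written with the particular vector $\bw^\star$, is a necessary optimality condition for \eqref{min-max-2018-general-rank-1-inner-max-dual-equv-1} only if $f$ is differentiable at $(\bQ^\star,\bR_1^\star)$. That requires $\lambda^\star$ to be a \emph{simple} eigenvalue of $\bR_1^{\star-\frac12}\bQ^\star\bQ^{\star H}\bR_1^{\star-\frac12}$. The formulas in Propositions~\ref{first-partial-derivative-1} and \ref{second-partial-derivative-1} are simple-eigenvalue perturbation formulas; otherwise they depend on the arbitrary choice of $\bu_1^\star$.

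You are right that $\nabla h(\bQ^\star,\bR_1^\star)$ is a subgradient of $f$ there, since $h\le f$ with equality at that point. But optimality only guarantees that \emph{some} element of $\partial f(\bQ^\star,\bR_1^\star)$ satisfies the variational inequality. If $\lambda^\star$ is repeated, $\partial f(\bQ^\star,\bR_1^\star)$ consists of the pairs $(2\mathbf{W}\bQ^\star,-\lambda^\star\mathbf{W})$ with $\mathbf{W}=\bR_1^{\star-\frac12}\bU_1\mathbf{Z}\bU_1^H\bR_1^{\star-\frac12}$. Here $\bU_1$ is an orthonormal basis of the top eigenspace, $\mathbf{Z}\succeq\bzero$ and $\tr\mathbf{Z}=1$. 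The certifying $\mathbf{W}$ may need rank larger than one, i.e.\ a ``mixed'' beamformer that no single $\bw$ realizes. Minimizing a largest eigenvalue typically produces exactly this coalescence.

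\textbf{Counterexample.} Take $N=M=2$ and $\bar{\cal B}_1=\{\bI\}$, which is \eqref{example-B1-bar} with $\hat\bR=\bI$, $\gamma_1=0$. Take $\bar{\cal B}_0=\{\bQ:\|\bQ-\bI\|_F\le\frac12\}$, which is \eqref{example-B0-bar-1} with $\hat\bQ=\bI$, $\eta=\frac14$. Both sets are convex and closed, and $\bzero\notin\bar{\cal B}_0$.
\begin{itemize}
\item For every unit $\bw$, $\|\bQ^H\bw\|\ge1-\|\bQ-\bI\|_2\ge\frac12$. Equality holds at the feasible point $\bQ=\bI-\frac12\bw\bw^H$, so $v_1^\star=\frac14$.
\item On $\bar{\cal B}_0$, $\lambda_1(\bQ\bQ^H)\ge\frac12\|\bQ\|_F^2\ge\frac12(\sqrt2-\frac12)^2=(1-\frac{\sqrt2}{4})^2\approx0.418$. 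Equality holds at $\bQ^\star=(1-\frac{\sqrt2}{4})\bI$, so $v_2^\star\approx0.418>v_1^\star$.
\item Here $\lambda^\star$ is a double eigenvalue and every unit vector is an admissible $\bw^\star$. For each of them $\min h=\frac14<\lambda^\star$, so Proposition~\ref{cvx-problem-1-optsoln} fails. The actual optimality certificate is $\mathbf{W}=\frac12\bI$, which has rank two.
\end{itemize}
No argument can close this gap without an extra hypothesis.

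\textbf{What your Steps~1--3 do prove.} They prove the theorem under the added assumption that $\lambda^\star$ is simple. Then $f$ is differentiable at $(\bQ^\star,\bR_1^\star)$ with gradient $\nabla h(\bQ^\star,\bR_1^\star)$, and your argument goes through verbatim. This assumption holds automatically when $M=1$, so the value equality in Corollary~\ref{maximin-minimax-equiv-rank-1-coro} survives.

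\textbf{Separate point on existence.} Do not rely on the solvability part of Proposition~\ref{min-max-convex-opt-problem}. The objective $\lambda^2$ is not coercive in $(\lambda,\bQ,\bR_1)$. For example, $\bar{\cal B}_0=\{\ba_0\}$ with $\bar{\cal B}_1=\{\bR_1:\bR_1\succeq\bI\}$ has infimum $0$, which is not attained. The theorem escapes this only because it assumes an optimal pair is given.
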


Remark that, by utilizing the main theorem, the worst-case SINR maximization problem \eqref{max-min-2018-general-rank-1} (i.e., the maximin SINR problem) can be solved by solving the SDP problem \eqref{min-max-2018-general-rank-1-inner-max-dual} in a single shot, provided that $\bar{\cal B}_0$ and $\bar{\cal B}_1$ are convex and closed. Indeed, suppose that $(\lambda^\star, \bQ^\star,\bR_1^\star)$ is an optimal solution of \eqref{min-max-2018-general-rank-1-inner-max-dual}, and thus $(\bQ^\star, \bR_1^\star)$ is optimal for \eqref{min-max-2018-general-rank-1-inner-max-dual-equv-1} with the optimal value $\lambda^\star=\lambda_1(\bR_1^{\star -\frac{1}{2}}\bQ^\star\bQ^{\star H}\bR_1^{\star -\frac{1}{2}})$. It follows from Theorem~\ref{maximin-minimax-equiv-thm} that $\lambda^\star$ is equal to the optimal value of maximin problem~\eqref{max-min-2018-general-rank-1} and that $(\bR_1^{\star -\frac{1}{2}}\bu_1^\star, \bQ^\star, \bR_1^\star)$, where $\bu_1^\star$ is an eigenvector associated with $\lambda_1(\bR_1^{\star -\frac{1}{2}} \bQ^\star \bQ^{\star H} \bR_1^{\star -\frac{1}{2}})$, constitutes an optimal solution of problem \eqref{max-min-2018-general-rank-1}.

%Since problem \eqref{min-max-2018-general-rank-1-inner-max-dual-equv-1} is equivalent to \eqref{min-max-2018-general-rank-1-inner-max-dual}, therefore $(\bQ^\star,\bR_1^\star)$ is optimal for \eqref{min-max-2018-general-rank-1-inner-max-dual-equv-1} with the optimal value $\lambda^\star$, if $(\lambda^\star,\bQ^\star,\bR_1^\star)$ is an optimal solution for \eqref{min-max-2018-general-rank-1-inner-max-dual}, and vice versa.
%Thereby, in order to get an optimal solution $(\bw^\star,\bQ^\star,\bR_1^\star)$ for the worst-case SINR maximization problem \eqref{max-min-2018-general-rank-1} (the maximin problem), we only have to solve \eqref{min-max-2018-general-rank-1-inner-max-dual}. Specifically, once problem \eqref{min-max-2018-general-rank-1-inner-max-dual} has been solved, getting an optimal solution $(\lambda^\star,\bQ^\star,\bR_1^\star)$, we claim that

Interestingly, when $\bQ\in\mathbb{C}^{N\times M}$ reduces to a vector $\ba\in\mathbb{C}^N$ (corresponding to the rank-one signal model), Theorem~\ref{maximin-minimax-equiv-thm} specializes to the following corollary.

\begin{corollary}\label{maximin-minimax-equiv-rank-1-coro}
Suppose that $\bar{\cal A}\subset\mathbb{C}^{N}\setminus\{\bzero\}$ and $\bar{\cal B}_1\subset{\cal H}_{++}^N$ are convex and closed. Let $(\ba^\star, \bR_1^\star)$ be an optimal solution of the following convex problem
\begin{equation}\label{min-max-2018-general-rank-1-inner-max-dual-equv-1-a}
\begin{array}[c]{cl}
\underset{\ba,\bR_1}{\sf{minimize}} & \ba^{H}\bR_1^{-1}\ba\\
\sf{subject\;to}           & \ba\in\bar{\cal A},\,\bR_1\in\bar{\cal B}_1,
\end{array}%\right.
\end{equation}
and let $\bw^\star=\bR_1^{\star -1}\ba^\star/\|\bR_1^{\star -\frac{1}{2}}\ba^\star\|$. Then it holds that
\begin{equation}\label{maximin-minimax-equiv-rank-1}
\begin{array}{l}
\bw^{\star H}\ba^\star\ba^{\star H}\bw^\star=\underset{\bw\ne\bzero}{\sf{maximize}}\underset{\ba\in\bar{\cal
A},\bR_1\in\bar{\cal B}_1}{\sf{minimize}}~\frac{\bw^H\ba\ba^H\bw}{\bw^H\bR_{1}\bw} \\
~~~~~~~~~~~~~~~~~~~~~~~~~~~~~~~~~~= \underset{\ba\in\bar{\cal
A},\bR_1\in\bar{\cal B}_1}{\sf{minimize}}\underset{\bw\ne\bzero}{\sf{maximize}}~\frac{\bw^H\ba\ba^H\bw}{\bw^H\bR_{1}\bw}.
\end{array}
\end{equation}
Moreover, $(\bw^\star, \ba^\star,\bR_1^\star)$ is an optimal solution for the maximin SINR and minimax SINR problems.
\end{corollary}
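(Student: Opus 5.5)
The corollary is the special case $M=1$ of Theorem~\ref{maximin-minimax-equiv-thm}, with $\bQ=\ba\in\mathbb{C}^{N}$ and $\bar{\cal B}_0=\bar{\cal A}$. The plan is to check that every object in the theorem reduces to the corresponding object in the corollary, and then to read off the conclusion.

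\emph{Step 1: hypotheses and objective.} Since $\bar{\cal A}\subset\mathbb{C}^{N\times 1}\setminus\{\bzero\}$ is convex and closed, the hypotheses of the theorem hold with $\bar{\cal B}_0=\bar{\cal A}$. The objective of \eqref{min-max-2018-general-rank-1-inner-max-dual-equv-1} becomes
\[
\lambda_1(\bR_1^{-\frac{1}{2}}\ba\ba^H\bR_1^{-\frac{1}{2}})=\ba^H\bR_1^{-1}\ba .
\]
This holds because $\bR_1^{-\frac{1}{2}}\ba\ba^H\bR_1^{-\frac{1}{2}}$ has rank one, so its only nonzero eigenvalue equals its trace, $\|\bR_1^{-\frac{1}{2}}\ba\|^2$.

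Hence \eqref{min-max-2018-general-rank-1-inner-max-dual-equv-1} coincides with \eqref{min-max-2018-general-rank-1-inner-max-dual-equv-1-a}, and the two problems have the same optimal solutions. Convexity of \eqref{min-max-2018-general-rank-1-inner-max-dual-equv-1-a} follows from Lemma~\ref{lambda-1-cvx-fcn}, or directly from the joint convexity of the matrix-fractional function. Solvability follows from Proposition~\ref{min-max-convex-opt-problem}, so an optimal $(\ba^\star,\bR_1^\star)$ exists.

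\emph{Step 2: the beamvector.} Here $\bR_1^{\star}\succ\bzero$ and $\bR_1^{\star-\frac{1}{2}}\ba^\star\neq\bzero$, because $\ba^\star\neq\bzero$. The principal eigenvector of $\bR_1^{\star-\frac{1}{2}}\ba^\star\ba^{\star H}\bR_1^{\star-\frac{1}{2}}$ is therefore, up to a unimodular scalar $e^{j\theta}$,
\[
\bu_1^\star=\bR_1^{\star-\frac{1}{2}}\ba^\star/\|\bR_1^{\star-\frac{1}{2}}\ba^\star\| .
\]
The theorem's beamvector is then
\[
\bR_1^{\star-\frac{1}{2}}\bu_1^\star=\bR_1^{\star-1}\ba^\star/\|\bR_1^{\star-\frac{1}{2}}\ba^\star\|=\bw^\star .
\]
The phase ambiguity is harmless, since the SINR quotient is invariant under scaling $\bw\mapsto c\bw$ with $c\neq0$.

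\emph{Step 3: conclusion.} Substituting into \eqref{maximin-minimax-equiv} gives \eqref{maximin-minimax-equiv-rank-1}. The theorem also asserts that $(\bw^\star,\bQ^\star,\bR_1^\star)$ solves both the maximin and minimax problems, which translates directly into optimality of $(\bw^\star,\ba^\star,\bR_1^\star)$ for both problems.

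There is no real obstacle. The only point needing care is Step 2: one must check that the eigenvector is unique up to phase, and that this ambiguity does not affect the value $\bw^{\star H}\ba^\star\ba^{\star H}\bw^\star$ or optimality. Both facts are immediate from the rank-one structure and the scale invariance of the SINR.
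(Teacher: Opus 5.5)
Your proposal is correct and follows the paper's route exactly: the corollary is Theorem~\ref{maximin-minimax-equiv-thm} at $M=1$, and your checks that $\lambda_1(\bR_1^{-\frac{1}{2}}\ba\ba^H\bR_1^{-\frac{1}{2}})=\ba^H\bR_1^{-1}\ba$ and that the principal eigenvector (simple, hence unique up to phase) gives $\bR_1^{\star-\frac{1}{2}}\bu_1^\star=\bw^\star$ are all that is needed; moreover, since $\ba^H\bR_1^{-1}\ba$ is smooth on $\mathbb{C}^N\times{\cal H}_{++}^N$, the gradient-based optimality argument behind the theorem is fully legitimate in this case. You should, however, drop the appeal to the solvability claim of Proposition~\ref{min-max-convex-opt-problem}: it is unnecessary because the corollary already posits an optimal $(\ba^\star,\bR_1^\star)$, and it fails for merely closed unbounded sets (e.g.\ $N=1$, $\bar{\cal A}=\{a\in\mathbb{C}:\Re(a)\ge 1\}$, $\bar{\cal B}_1=\{r\ge 1\}$, where $\inf |a|^2/r=0$ is not attained).
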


We remark that this corollary is slightly different from Theorem~1 in \cite{Kim06-bc}, and the difference lies in the assumptions on the underlying sets $\bar{\cal A}$ and $\bar{\cal B}_1$: here they are required to be convex and closed, whereas they are assumed to be convex and compact in \cite{Kim06-bc}. Furthermore, Theorem~\ref{maximin-minimax-equiv-thm} can be equivalently reformulated as an explicit saddle-point property of the SINR for the general-rank signal model, as stated below.

\begin{theorem}\label{maximin-minimax-equiv-saddle-point-prop}
Suppose that $\bar{\cal B}_0\subset\mathbb{C}^{N\times M}\setminus\{\bzero\}$ and $\bar{\cal B}_1\subset{\cal H}_{++}^N$ are convex and closed. Let $(\bQ^\star, \bR_1^\star)$ be an optimal solution for \eqref{min-max-2018-general-rank-1-inner-max-dual-equv-1} and $\bw^\star=\bR_1^{\star -\frac{1}{2}}\bu_1^\star$, where $\bu_1^\star$ is an eigenvector associated with the largest eigenvalue $\lambda_1(\bR_1^{\star -\frac{1}{2}}\bQ^\star\bQ^{\star H}\bR_1^{\star -\frac{1}{2}})$. Define the SINR as
\begin{equation}\label{define-SINR-function}
S(\bw,\bQ,\bR_1)=\frac{\bw^H\bQ\bQ^H\bw}{\bw^H\bR_{1}\bw}.
\end{equation}
Then $(\bw^\star, \bQ^\star, \bR_1^\star)$ is a saddle point of the SINR function, satisfying
\begin{equation}\label{maximin-minimax-equiv-saddle-point}
\begin{array}{l}
S(\bw,\bQ^\star,\bR_1^\star)\le S(\bw^\star,\bQ^\star,\bR_1^\star)\le S(\bw^\star,\bQ,\bR_1)
%\frac{\bw^{\star H}\bQ^\star\bQ^{\star H}\bw^\star}{\bw^{\star H}\bR_{1}^\star\bw^\star}\le\frac{\bw^{\star H}\bQ\bQ^H\bw^\star}{\bw^{\star H}\bR_{1}\bw^\star}
\end{array}
\end{equation}
for $\bw\ne\bzero$, $\bQ\in\bar{\cal B}_0$ and $\bR_1\in\bar{\cal B}_1$.
\end{theorem}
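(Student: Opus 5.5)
The saddle-point statement follows from two separate inequalities, each derived from results already in hand. We treat the left and right inequalities in \eqref{maximin-minimax-equiv-saddle-point} independently, with the common middle value identified by the chain \eqref{the-optimal-values-0}--\eqref{the-optimal-values-1}. That chain gives $S(\bw^\star,\bQ^\star,\bR_1^\star)=\bw^{\star H}\bQ^\star\bQ^{\star H}\bw^\star=\lambda^\star$, using $\bw^{\star H}\bR_1^\star\bw^\star=\|\bu_1^\star\|^2=1$.

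\emph{Left inequality.} Fix an arbitrary $\bw\ne\bzero$ and set $\bv=\bR_1^{\star\frac{1}{2}}\bw$. This is nonzero because $\bR_1^\star\in{\cal H}_{++}^N$, so the square root is invertible. Then
\[
S(\bw,\bQ^\star,\bR_1^\star)=\frac{\bv^H\bR_1^{\star -\frac{1}{2}}\bQ^\star\bQ^{\star H}\bR_1^{\star -\frac{1}{2}}\bv}{\bv^H\bv}\le\lambda_1(\bR_1^{\star -\frac{1}{2}}\bQ^\star\bQ^{\star H}\bR_1^{\star -\frac{1}{2}})=\lambda^\star
\]
by the Rayleigh quotient bound. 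Equality holds at $\bv=\bu_1^\star$, i.e., at $\bw=\bw^\star$. This is the elementary half.

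\emph{Right inequality.} We must show $S(\bw^\star,\bQ,\bR_1)\ge\lambda^\star$ for every $\bQ\in\bar{\cal B}_0$ and $\bR_1\in\bar{\cal B}_1$. This is exactly the content of Proposition~\ref{cvx-problem-1-optsoln}: since $(\bQ^\star,\bR_1^\star)$ is optimal for \eqref{min-max-2018-general-rank-1-inner-max-dual-equv-1}, it is also optimal for \eqref{cvx-problem-1}, with optimal value $\bw^{\star H}\bQ^\star\bQ^{\star H}\bw^\star=\lambda^\star$. Hence $h(\bQ,\bR_1)=S(\bw^\star,\bQ,\bR_1)\ge\lambda^\star$ on the whole feasible set.

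Should one wish to avoid citing that proposition as a black box, the underlying mechanism is a first-order argument. The function $h$ from Proposition~\ref{cvx-obj-fcn-prop} is convex. Its gradient at $(\bQ^\star,\bR_1^\star)$, using $\bw^{\star H}\bR_1^\star\bw^\star=1$, is $2\bw^\star\bw^{\star H}\bQ^\star$ in $\bQ$ and $-\lambda^\star\bw^\star\bw^{\star H}$ in $\bR_1$. These coincide with the partial derivatives of $f$ computed in Propositions~\ref{first-partial-derivative-1} and \ref{second-partial-derivative-1}. The variational inequality \eqref{optimality-conditions-2} from Proposition~\ref{optimality-conditions-thm} is therefore also the first-order optimality condition for the convex problem \eqref{cvx-problem-1} at $(\bQ^\star,\bR_1^\star)$, and convexity upgrades stationarity to global minimality.

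The main obstacle lies in this last step: confirming that the gradient of $h$ matches that of $f$, in particular the factor $-\lambda^\star$ from differentiating the denominator. This requires the normalization $\bw^{\star H}\bR_1^\star\bw^\star=1$. The remaining steps are direct.

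Combining the two inequalities yields \eqref{maximin-minimax-equiv-saddle-point}. As a consistency check, this also reproduces Theorem~\ref{maximin-minimax-equiv-thm}, since any saddle point forces the maximin and minimax values to coincide.
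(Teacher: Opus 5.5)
Your left inequality is correct, and for the right inequality you follow the paper's own route: Proposition~\ref{cvx-problem-1-optsoln}, i.e.\ matching the first-order conditions of $f$ and $h$ at $(\bQ^\star,\bR_1^\star)$. That route has a genuine gap, which you inherit.

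The gradients in Propositions~\ref{first-partial-derivative-1} and~\ref{second-partial-derivative-1}, and hence the variational inequality \eqref{optimality-conditions-2}, exist only when $\lambda^\star$ is a \emph{simple} eigenvalue of $\bR_1^{\star-\frac{1}{2}}\bQ^\star\bQ^{\star H}\bR_1^{\star-\frac{1}{2}}$; the eigenvalue-derivative formula requires this. If $\lambda^\star$ is repeated, which is common at minimizers of a largest eigenvalue, $f$ is not differentiable there. Write $f=\max_{\|\bw\|=1}S(\bw,\cdot,\cdot)$. Danskin's theorem then gives $\partial f(\bQ^\star,\bR_1^\star)$ as the convex hull of the pairs $(2\bw\bw^H\bQ^\star,-\lambda^\star\bw\bw^H)$ over all $\bw$ with $\bw^H\bR_1^\star\bw=1$ and $S(\bw,\bQ^\star,\bR_1^\star)=\lambda^\star$. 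Optimality therefore yields the variational inequality only for some mixture $\sum_k\theta_k\bw_k\bw_k^H$, not for a prescribed rank-one $\bw^\star\bw^{\star H}$. So the real obstacle is not the factor $-\lambda^\star$. Gradient matching is automatic whenever $\nabla f$ exists, because $h\le f$ with equality at $(\bQ^\star,\bR_1^\star)$. The obstacle is the existence of $\nabla f$ itself.

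This cannot be patched, because the statement fails in the degenerate case. Take $N=M=2$, $\bar{\cal B}_1=\{\bI\}$, and $\bar{\cal B}_0=\{\bQ:\|\bQ-\bI\|_F^2\le 1\}$, which is \eqref{example-B0-bar-1} with $\hat\bQ=\bI$ and $\eta=1<\|\hat\bQ\|_F^2$. Since $\Re\tr\bQ\le\sigma_1+\sigma_2$, we have $\|\bQ-\bI\|_F^2\ge(1-\sigma_1)^2+(1-\sigma_2)^2$. Hence every feasible $\bQ$ has $\sigma_1\ge c:=1-1/\sqrt{2}$, with equality at $\bQ^\star=c\bI$. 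So $\lambda^\star=c^2>0$, and every optimum has $\sigma_2=\sigma_1$, making $\lambda^\star$ a double eigenvalue. Yet for any $\bw\ne\bzero$, the matrix $\bQ_{\bw}=\bI-\bw\bw^H/\|\bw\|^2$ is feasible and satisfies $\bQ_{\bw}^H\bw=\bzero$, so $S(\bw,\bQ_{\bw},\bI)=0<\lambda^\star$.

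Consequently the right-hand inequality fails for every choice of $\bu_1^\star$. The maximin value is $0$, so Theorem~\ref{maximin-minimax-equiv-thm} and your closing consistency check fail as well. The minimax value $c^2$ is recovered only by a relaxed beamformer covariance: $\mathbf{W}=\frac{1}{2}\bI$ gives $\min_{\bQ\in\bar{\cal B}_0}\tr(\bQ\bQ^H\mathbf{W})/\tr\mathbf{W}=\frac{1}{2}(\sqrt{2}-1)^2=c^2$, whereas no rank-one $\mathbf{W}$ does.

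What your argument does establish is the theorem under the extra hypothesis that $\lambda^\star$ is simple. That hypothesis holds automatically for $M=1$, so Corollary~\ref{maximin-minimax-equiv-rank-1-coro} is unaffected.
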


\section{Solving SDP Problem \eqref{min-max-2018-general-rank-1-inner-max-dual} with Some Specific Convex and Closed Sets $\bar{\cal B}_0$ and $\bar{\cal B}_1$}

In this section, we examine how to solve the SDP problem \eqref{min-max-2018-general-rank-1-inner-max-dual} for various choices of convex and closed uncertainty sets $\bar{\cal B}_0$ and $\bar{\cal B}_1$. As noted earlier, by solving the SDP problem \eqref{min-max-2018-general-rank-1-inner-max-dual}, we obtain a globally optimal solution for both the maximin SINR problem \eqref{max-min-2018-general-rank-1} and the minimax SINR problem \eqref{min-max-2018-general-rank-1} simultaneously, provided that the uncertainty sets $\bar{\cal B}_0 \subset \mathbb{C}^{N \times M} \setminus \{\mathbf{0}\}$ and $\bar{\cal B}_1 \subset \mathcal{H}_{++}^N$ are convex and closed.

\subsection{Frobenius Norm Ball and the Spectral Norm Ball of the Matrix Errors}

\subsubsection{Frobenius Norm Ball of the Matrix Errors}
Let $\bar{\cal B}_0 = \tilde{\cal B}_0$ and $\bar{\cal B}_1 = \tilde{\cal B}_1$ be defined as in \eqref{example-B0-bar-1} and \eqref{example-B1-bar}, respectively. Clearly, they are convex, closed and LMI representable. Then, problem \eqref{min-max-2018-general-rank-1-inner-max-dual} is specified to
\begin{equation}\label{min-max-2018-general-rank-1-inner-max-dual-B0-bar-B1-bar-1}
\begin{array}[c]{cl}
\underset{\lambda,\bQ,\bR_1}{\sf{minimize}} & \lambda\\
\sf{subject\;to}           & \left[\begin{array}{cc}\bR_{1}&\bQ\\ \bQ^H&\lambda\bI\end{array}\right]\succeq\bzero\\
                           & \|\bQ-\hat\bQ\|_F^2\le\eta\\
                           & \|\bR_1-\hat\bR\|_F^2\le\gamma_1.
\end{array}%\right.
\end{equation}
It can be verified that problems \eqref{max-min-2018-general-rank-3} and \eqref{min-max-2018-general-rank-1-inner-max-dual-B0-bar-B1-bar-1} are equivalent. Indeed, problem \eqref{max-min-2018-general-rank-3} is identical to the maximin problem \eqref{max-min-2018-general-rank-1} (see, e.g., \cite{KV13-tsp,HV18-spl}), and problem \eqref{min-max-2018-general-rank-1-inner-max-dual-B0-bar-B1-bar-1} is also equivalent to \eqref{max-min-2018-general-rank-1} by Theorem~\ref{maximin-minimax-equiv-thm} and Proposition~\ref{min-max-convex-opt-problem}. Therefore, solving problem \eqref{min-max-2018-general-rank-1-inner-max-dual-B0-bar-B1-bar-1} yields an optimal robust adaptive beamformer $\bw^\star$ for problem \eqref{max-min-2018-general-rank-3}.

Specifically, suppose that $(\lambda^\star,\bQ^\star,\bR_1^\star)$ is an optimal solution of SDP problem \eqref{min-max-2018-general-rank-1-inner-max-dual-B0-bar-B1-bar-1}. Then, $(\bw^\star,\bQ^\star,\bR_1^\star)$ s an optimal solution of the maximin problem \eqref{max-min-2018-general-rank-1}, where $\bw^\star=\bR_1^{\star -\frac{1}{2}}\bu_1^\star$ and $\bu_1^\star$ is an eigenvector associated with $\lambda^\star=\lambda_1(\bR_1^{\star -\frac{1}{2}}\bQ^\star\bQ^{\star H}\bR_1^{\star -\frac{1}{2}})$.

We recall that problem \eqref{max-min-2018-general-rank-3} can be solved using an iterative SOCP-based approximation method (see, e.g., \cite{HV18-spl}). Specifically, since the objective function in \eqref{max-min-2018-general-rank-3} is a difference-of-convex (DC) function, the method proceeds by linearizing the first convex term in the objective at each iteration. In practice, the total computational time of this approximate algorithm is often lower than that required to solve problem \eqref{min-max-2018-general-rank-1-inner-max-dual-B0-bar-B1-bar-1}, particularly when the number of array antennas exceeds approximately ten. This is because each iteration only requires solving a relatively lightweight SOCP, and the total number of iterations is typically small (around four) and largely insensitive to the choice of the initial point $\bw_0$.

In contrast, problem \eqref{min-max-2018-general-rank-1-inner-max-dual-B0-bar-B1-bar-1} involves an LMI constraint of size $(M+N)\times(M+N)$, along with two high-dimensional second-order cone (SOC) constraints, namely, $\|\vect(\bQ)-\vect(\hat\bQ)\|\le\sqrt{\eta}$ and $\|\vect(\bR_1)-\vect(\hat\bR)\|\le\sqrt{\gamma_1}$). As a result, solving this LMI-based SDP using an interior-point method entails a significantly higher computational cost, despite the fact that the problem needs to be solved only once, as will be demonstrated in the simulation results.

\subsubsection{Spectral Norm Ball of the Matrix Errors}

%\begin{equation}\label{example-B0-bar-two-norm}
%\hat{\cal B}_0=\{\bQ\in\mathbb{C}^{N\times M}~|~\|\bQ-\hat\bQ\|_2^2\le\eta\},
%\end{equation}
%and
%\begin{equation}\label{example-B1-bar-two-norm}
%\hat{\cal B}_1=\{\bR_1\in\mathbb{C}^{N\times N}~|~\|\bR_1-\hat\bR\|_2^2\le\gamma_1,\,\bR_{1}\succeq\bzero \}.
%\end{equation}
Note that the worst-case SINR maximization problem \eqref{max-min-2018-general-rank-1} with the uncertainty sets $\hat{\cal B}_0$ and $\hat{\cal B}_1$ (as in case of $\tilde{\cal B}_0$ and $\tilde{\cal B}_1$) is always equivalent to problem \eqref{max-min-2018-general-rank-3}. On the other hand, it follows from Proposition~\ref{min-max-convex-opt-problem} and Theorem~\ref{maximin-minimax-equiv-thm} that the maximin problem \eqref{max-min-2018-general-rank-1} with uncertainty sets $\hat{\cal B}_0$ and $\hat{\cal B}_1$ is identical to the following LMI problem:
\begin{equation}\label{min-max-2018-general-rank-1-inner-max-dual-B0-bar-B1-bar-two-norm}
\begin{array}[c]{cl}
\underset{\lambda,\bQ,\bR_1}{\sf{minimize}} & \lambda\\
\sf{subject\;to}           & \left[\begin{array}{cc}\bR_{1}&\bQ\\ \bQ^H&\lambda\bI\end{array}\right]\succeq\bzero\\
                           & \|\bQ-\hat\bQ\|_2^2\le\eta\\
                           & \|\bR_1-\hat\bR\|_2^2\le\gamma_2.
\end{array}%\right.
\end{equation}

In other words, problem \eqref{max-min-2018-general-rank-3} is equivalent to problem \eqref{min-max-2018-general-rank-1-inner-max-dual-B0-bar-B1-bar-two-norm}. Consequently, the three problems \eqref{max-min-2018-general-rank-3}, \eqref{min-max-2018-general-rank-1-inner-max-dual-B0-bar-B1-bar-two-norm}, and \eqref{min-max-2018-general-rank-1-inner-max-dual-B0-bar-B1-bar-1} are equivalent to one another. Hence, problem \eqref{max-min-2018-general-rank-3} can be solved by solving \eqref{min-max-2018-general-rank-1-inner-max-dual-B0-bar-B1-bar-two-norm}. In particular, the solution $\bw^\star=\bR_1^{\star -\frac{1}{2}}\bu_1^\star$ is optimal for problem \eqref{max-min-2018-general-rank-3} too, where $\bu_1^\star$ is a principal eigenvector of $\bR_1^{\star -\frac{1}{2}}\bQ^\star\bQ^{\star H}\bR_1^{\star -\frac{1}{2}}$ and $\bQ^\star, \bR_1^\star$ are components for an optimal solution for SDP problem \eqref{min-max-2018-general-rank-1-inner-max-dual-B0-bar-B1-bar-two-norm}.

However, the computational cost of SDP problem \eqref{min-max-2018-general-rank-1-inner-max-dual-B0-bar-B1-bar-two-norm} is higher than that of \eqref{min-max-2018-general-rank-1-inner-max-dual-B0-bar-B1-bar-1}. This is because problem \eqref{min-max-2018-general-rank-1-inner-max-dual-B0-bar-B1-bar-two-norm} can be recast as (see, e.g., \cite[Example~19, p.~152]{Nemi-book-2001} and \cite[Example~18b, p.~147]{Nemi-book-2001})
\begin{equation}\label{min-max-2018-general-rank-1-inner-max-dual-B0-bar-B1-bar-two-norm-sdr}
\begin{array}[c]{cl}
\underset{\lambda,\bQ,\bR_1}{\sf{minimize}} & \lambda\\
\sf{subject\;to}           & \left[\begin{array}{cc}\bR_{1}&\bQ\\ \bQ^H&\lambda\bI\end{array}\right]\succeq\bzero\\
                           & \left[\begin{array}{cc}\eta\bI& \bQ-\hat\bQ\\ \bQ^H-\hat\bQ^H&\bI \end{array}\right]\succeq\bzero \\ %\|\bQ-\hat\bQ\|_2^2\le\eta\\
                           & \sqrt{\gamma_2}\bI\succeq\bR_1-\hat\bR\succeq-\sqrt{\gamma_2}\bI, %\|\bR_1-\hat\bR_1\|_2^2\le\gamma_1.
\end{array}%\right.
\end{equation}
which includes four LMI constraints.

\subsection{Additional Energy Constraint in the Uncertainty Set of the INC Matrix $\bR_1$}
In this subsection, we consider the following convex and closed uncertainty sets of the INC matrix $\bR_1$:
\begin{equation}\label{example-B0-bar-1-tr-F}
\tilde{\cal B}_1^\prime=\{\bR_1\in{\cal H}^{N}~|~\|\bR_1-\hat\bR\|_F^2\le\gamma_1,\,\rho_1\le\tr\bR_1\le\rho_2,\,\bR_1\succeq\bzero\},
\end{equation}
and
\begin{equation}\label{example-B0-bar-1-tr-2}
\hat{\cal B}_1^{\prime}=\{\bR_1\in{\cal H}^{N}~|~\|\bR_1-\hat\bR\|_2^2\le\gamma_2,\,\rho_1\le\tr\bR_1\le\rho_2,\,\bR_1\succeq\bzero\},
\end{equation}
where the double-sided constraint on the trace of $\mathbf{R}_1$ specifies a trust interval for the interference-plus-noise energy (cf. \cite{ADHP-16TSP}).

\subsubsection{Solving Maximin Problem \eqref{max-min-2018-general-rank-1} and Its Minimax Counterpart Problem~\eqref{min-max-2018-general-rank-1} with the INC Uncertainty Set $\bar{\cal B}_1^\prime$}

With the new INC uncertainty set \eqref{example-B0-bar-1-tr-F}, the corresponding worst-case SINR maximization problem \eqref{max-min-2018-general-rank-1} can reformulated as
\begin{equation}\label{max-min-2018-general-rank-1-Bprime}
\begin{array}[c]{cl}
\underset{\bw\ne\bzero}{\sf{maximize}} &\underset{\bQ\in\tilde{\cal
B}_0,\bR_1\in\tilde{\cal B}_1^\prime}{\sf{minimize}}
\begin{array}[c]{c}\displaystyle\frac{\bw^H\bQ\bQ^H\bw}{\bw^H\bR_{1}\bw},
\end{array}
\end{array}%\right.
\end{equation}
and its minimax counterpart problem \eqref{min-max-2018-general-rank-1} can be rewritten as
\begin{equation}\label{min-max-2018-general-rank-1-Bprime}
\begin{array}[c]{cl}
\underset{\bQ\in\tilde{\cal
B}_0,\bR_1\in\tilde{\cal B}_1^\prime}{\sf{minimize}} & \underset{\bw\ne\bzero}{\sf{maximize}}
\begin{array}[c]{c}\displaystyle\frac{\bw^H\bQ\bQ^H\bw}{\bw^H\bR_{1}\bw}.
\end{array}
\end{array}%\right.
\end{equation}
It follows from Proposition~\ref{min-max-convex-opt-problem} and Theorem~\ref{maximin-minimax-equiv-thm} that the maximin problem \eqref{max-min-2018-general-rank-1-Bprime} and the minimax problem \eqref{min-max-2018-general-rank-1-Bprime} are equivalent to each other, and they can be solved via solving the following SDP problem
\begin{equation}\label{min-max-2018-general-rank-1-inner-max-dual-Bprime}
\begin{array}[c]{cl}
\underset{\lambda,\bQ,\bR_1}{\sf{minimize}} & \lambda\\
\sf{subject\;to}           & \left[\begin{array}{cc}\bR_{1}&\bQ\\ \bQ^H&\lambda\bI\end{array}\right]\succeq\bzero\\
                           & \|\bQ-\hat\bQ\|_F^2\le\eta\\
                           & \|\bR_1-\hat\bR\|_F^2\le\gamma_1\\
                           & \rho_1\le\tr\bR_1\le\rho_2.
\end{array}%\right.
\end{equation}
Suppose that $(\lambda^\star, \bQ^\star, \bR_1^\star)$ is an optimal solution for \eqref{min-max-2018-general-rank-1-inner-max-dual-Bprime}. Then, $(\bw^\star, \bQ^\star, bR_1^\star)$ is optimal for the maximin problem \eqref{max-min-2018-general-rank-1-Bprime} and the minimax problem \eqref{min-max-2018-general-rank-1}, where $\bw^\star=\bR_1^{\star -\frac{1}{2}}\bu_1^\star$ with $\bu_1^\star$ being an eigenvector associated with $\lambda^\star=\lambda_1(\bR_1^{\star -\frac{1}{2}}\bQ^\star\bQ^{\star H}\bR_1^{\star -\frac{1}{2}})$.

On the other hand, the maximin problem \eqref{max-min-2018-general-rank-1-Bprime} can be equivalently transformed into:
\begin{equation}\label{max-min-2018-general-rank-3-Bprime}
\begin{array}[c]{cl}\underset
{\bw\ne\bzero}{\sf{maximize}} & \|\hat\bQ^H\bw\|-\sqrt{\eta}\|\bw\|\\
\sf{subject\;to}               &
\underset{\bR_1\in\tilde{\cal B}_1^\prime}{\sf{maximize}} \quad\bw^H\bR_1\bw \le1.
\end{array}%\right.
\end{equation}
Note that the maximization problem in the constraint of problem \eqref{max-min-2018-general-rank-3-Bprime} is a convex optimization problem, and its dual can be formulated as follows
\begin{equation}\label{max-min-2018-general-rank-3-Bprime-dual}
\begin{array}[c]{cl}
\underset{x,y,\bX}{\sf{minimize}} & \sqrt{\gamma_1}\|\bX\|_F -\tr(\hat\bR\bX)-\rho_1 x +\rho_2 y\\
\sf{subject\;to}           & \left[\begin{array}{cc}(y-x)\bI-\bX&\bw\\ \bw^H&1\end{array}\right]\succeq\bzero\\
                           & y\ge0, x\ge0.
\end{array}%\right.
\end{equation}
Accordingly, problem \eqref{max-min-2018-general-rank-3-Bprime} can be recast into
\begin{equation}\label{max-min-2018-general-rank-3-Bprime-equiv}
\begin{array}[c]{cl}\underset
{x,y,\bw,\bX}{\sf{maximize}} & \|\hat\bQ^H\bw\|-\sqrt{\eta}\|\bw\|\\
\sf{subject\;to}               & \sqrt{\gamma_1}\|\bX\|_F -\tr(\hat\bR\bX)-\rho_1 x +\rho_2 y \le1\\
                                &\left[\begin{array}{cc}(y-x)\bI-\bX&\bw\\ \bw^H&1\end{array}\right]\succeq\bzero\\
                                &y\ge0, x\ge0.
\end{array}%\right.
\end{equation}
Since the objective function of problem \eqref{max-min-2018-general-rank-3-Bprime-equiv} is a difference of convex functions, an iterative approximation algorithm can be applied by linearizing the first convex term (see, e.g., \cite{HV18-spl,SBP-2017tsp}). For example, the objective function can be linearized as
\begin{equation}\label{obj-DC-linearization-1}
\frac{\Re(\bw_k^H\hat\bQ\hat\bQ^H\bw)}{\|\hat\bQ^H\bw_k\|}-\sqrt{\eta}\|\bw\|,
\end{equation}
where $\bw_k$ is obtained from the $(k-1)$th iteration.
In practice, however, the computational cost of solving problem \eqref{max-min-2018-general-rank-3-Bprime-equiv} is higher than that of problem \eqref{min-max-2018-general-rank-1-inner-max-dual-Bprime}, since multiple iterative steps are required to solve the associated SDP problem, namely,
\begin{equation}\label{max-min-2018-general-rank-3-Bprime-equiv-lin}
\begin{array}[c]{cl}\underset
{x,y,\bw,\bX}{\sf{maximize}} & \frac{\Re(\bw_k^H\hat\bQ\hat\bQ^H\bw)}{\|\hat\bQ^H\bw_k\|}-\sqrt{\eta}\|\bw\|\\
\sf{subject\;to}               & \sqrt{\gamma_1}\|\bX\|_F -\tr(\hat\bR\bX)-\rho_1 x +\rho_2 y \le1\\
                                &\left[\begin{array}{cc}(y-x)\bI-\bX&\bw\\ \bw^H&1\end{array}\right]\succeq\bzero\\
                                &y\ge0, x\ge0,
\end{array}%\right.
\end{equation}
to converge to an optimal solution for \eqref{max-min-2018-general-rank-3-Bprime-equiv}, whereas the SDP problem \eqref{min-max-2018-general-rank-1-inner-max-dual-Bprime} is solved in a single shot.
%In other word, suppose that $\bw_k$ is the output of the $(k-1)$th step for $k\ge1$. Observe that
%\begin{equation}\label{linearization-of-norm}
%\|\hat\bQ^H\bw\|\ge\frac{\bw_k^H\hat\bQ\hat\bQ^H\bw}{\|\hat\bQ^H\bw_k\|}\ge\frac{\Re(\bw_k^H\hat\bQ\hat\bQ^H\bw)}{\|\hat\bQ^H\bw_k\|};
%\end{equation}
%then we have to solve the following convex approximation problem in the $k$th step:
%\begin{equation}\label{max-min-2018-general-rank-3-Bprime-equiv-sca}
%\begin{array}[c]{cl}\underset
%{x,y,\bw,\bX}{\sf{maximize}} & \frac{\Re(\bw_k^H\hat\bQ\hat\bQ^H\bw)}{\|\hat\bQ^H\bw_k\|}-\sqrt{\eta}\|\bw\|\\
%\sf{subject\;to}               & \sqrt{\gamma_1}\|\bX\|_F-\rho_1 x +\rho_2 y -\tr(\hat\bR\bX) \le1\\
%                                &\left[\begin{array}{cc}(y-x)\bI-\bX&\bw\\ \bw^H&1\end{array}\right]\succeq\bzero\\
%                                &x\ge0,\,y\ge0.
%\end{array}%\right.
%\end{equation}
%Clearly, the sequence of the optimal values for \eqref{max-min-2018-general-rank-3-Bprime-equiv-sca} is ascending,

\subsubsection{Solving the Maximin Problem~\eqref{max-min-2018-general-rank-1} and Its Minimax Counterpart Problem~\eqref{min-max-2018-general-rank-1} with the INC Uncertainty Set $\hat{\cal B}_1^{\prime}$}

In this case, the maximin problem \eqref{max-min-2018-general-rank-1} is written as
\begin{equation}\label{max-min-2018-general-rank-1-Bhat-prime}
\begin{array}[c]{cl}
\underset{\bw\ne\bzero}{\sf{maximize}} &\underset{\bQ\in\hat{\cal
B}_0,\bR_1\in\hat{\cal B}_1^{\prime}}{\sf{minimize}}
\begin{array}[c]{c}\displaystyle\frac{\bw^H\bQ\bQ^H\bw}{\bw^H\bR_{1}\bw},
\end{array}
\end{array}%\right.
\end{equation}
and its equivalent minimax problem \eqref{min-max-2018-general-rank-1} is expressed as
\begin{equation}\label{min-max-2018-general-rank-1-Bhat-prime}
\begin{array}[c]{cl}
\underset{\bQ\in\hat{\cal
B}_0,\bR_1\in\hat{\cal B}_1^{\prime}}{\sf{minimize}} & \underset{\bw\ne\bzero}{\sf{maximize}}
\begin{array}[c]{c}\displaystyle\frac{\bw^H\bQ\bQ^H\bw}{\bw^H\bR_{1}\bw}.
\end{array}
\end{array}%\right.
\end{equation}
Clearly, the minimax problem \eqref{min-max-2018-general-rank-1-Bhat-prime} can be equivalently recast as
\begin{equation}\label{min-max-2018-general-rank-1-inner-max-dual-B0-hat-B1-hat-two-norm-prime}
\begin{array}[c]{cl}
\underset{\lambda,\bQ,\bR_1}{\sf{minimize}} & \lambda\\
\sf{subject\;to}           & \left[\begin{array}{cc}\bR_{1}&\bQ\\ \bQ^H&\lambda\bI\end{array}\right]\succeq\bzero\\
                           & \|\bQ-\hat\bQ\|_2^2\le\eta\\
                           & \|\bR_1-\hat\bR\|_2^2\le\gamma_2\\
                           & \rho_1\le\tr\bR_1\le\rho_2,
\end{array}%\right.
\end{equation}
which is an SDP problem. Then, an optimal $\bw^\star$ is obtained in a way similar to \eqref{min-max-2018-general-rank-1-inner-max-dual-Bprime}.

The maximin problem \eqref{max-min-2018-general-rank-1-Bhat-prime} can be equivalently reformulated as the following nonconvex problem
\begin{equation}\label{max-min-2018-general-rank-3-B-hat-prime}
\begin{array}[c]{cl}\underset
{\bw\ne\bzero}{\sf{maximize}} & \|\hat\bQ^H\bw\|-\sqrt{\eta}\|\bw\|\\
\sf{subject\;to}               &
\underset{\bR_1\in\hat{\cal B}_1^\prime}{\sf{maximize}} \quad\bw^H\bR_1\bw \le1.
\end{array}%\right.
\end{equation}
To tackle this problem, we first examine the dual of the convex maximization problem appearing in the constraint of \eqref{max-min-2018-general-rank-3-B-hat-prime}. Specifically, the dual problem can be derived as follows
\begin{equation}\label{max-min-2018-general-rank-3-Bprime-dual-2norm}
\begin{array}[c]{cl}
\underset{x,y,\bX,\bY}{\sf{minimize}} & \sqrt{\gamma_2}\tr(\bX+\bY) + \tr(\hat\bR(\bY-\bX))-\rho_1 x +\rho_2 y\\
\sf{subject\;to}           & \left[\begin{array}{cc}(y-x)\bI+\bY-\bX&\bw\\ \bw^H&1\end{array}\right]\succeq\bzero\\
                           & x\ge0,\,y\ge0,\,\bX\succeq\bzero,\,\bY\succeq\bzero.
\end{array}%\right.
\end{equation}
where the terms involving $\rho_2$ and $\rho_1$ are canceled.
Consequently, problem \eqref{max-min-2018-general-rank-3-B-hat-prime} can be equivalently transformed into
\begin{equation}\label{max-min-2018-general-rank-3-Bprime-equiv-2norm}
\begin{array}[c]{cl}\underset
{\bw,x,y,\bX,\bY}{\sf{maximize}} & \|\hat\bQ^H\bw\|-\sqrt{\eta}\|\bw\|\\
\sf{subject\;to}               & \sqrt{\gamma_2}\tr(\bX+\bY) + \tr(\hat\bR(\bY-\bX))-\rho_1 x +\rho_2 y \le1\\
                                & \left[\begin{array}{cc}(y-x)\bI+\bY-\bX&\bw\\ \bw^H&1\end{array}\right]\succeq\bzero \\
                                & x\ge0,\,y\ge0,\,\bX\succeq\bzero,\,\bY\succeq\bzero.
\end{array}%\right.
\end{equation}
Thus, an iterative algorithm -- similar to that used for solving \eqref{max-min-2018-general-rank-3-Bprime-equiv} -- can be applied to tackle the DC problem \eqref{max-min-2018-general-rank-3-Bprime-equiv-2norm}, where in the $k$th iteration the concave part of the objective function \eqref{obj-DC-linearization-1} is maximized subject to the constraints of \eqref{max-min-2018-general-rank-3-Bprime-equiv-2norm}. Naturally, the computational cost of solving \eqref{max-min-2018-general-rank-3-Bprime-equiv-2norm} is higher than that of problem \eqref{min-max-2018-general-rank-1-inner-max-dual-B0-hat-B1-hat-two-norm-prime}, even though the performance of the resulting beamformers from the two problems depends on the specific choice of the uncertainty sets for $\bQ$ and $\bR_1$.

\section{Numerical Experiments}

Consider a uniform linear array (ULA) with $N=10$ omni-directional sensors spaced half a wavelength apart. The array noise is spatially and temporally white Gaussian with zero mean and covariance matrix $\bI$. The actual desired signal is assumed to be locally incoherently scattered with a Gaussian angular power density, centered at $30^\circ$ with an angular spread of $4^\circ$, whereas the presumed desired signal has the same type of distribution but is centered at $34^\circ$ with an angular spread of $6^\circ$. An interferer impinges on the array with an interference-to-noise ratio (INR) of 30~dB, following a uniform angular power density with central angle $10^\circ$ and angular spread $10^\circ$ \cite{KV13-tsp}. The training sample size is set to $T=50$, and each data point in the simulation results is obtained by averaging over 200 independent runs.

We evaluate the performance of the beamformer obtained from the minimax SINR equivalent SDP problem \eqref{min-max-2018-general-rank-1-inner-max-dual} and compare it with beamformers computed using the iterative approaches in \cite{KV13-tsp, HV18-spl}, where the worst-case SINR maximization problem (i.e., the maximin SINR problem \eqref{max-min-2018-general-rank-3}) is solved iteratively. In the figures, the three beamformers are labeled as ``Minimax SINR," ``Maximin SINR KV,"  and ``Maximin SINR HV," respectively.

For these simulations, the uncertainty sets of $\bR_1$ and $\bQ$ are defined in \eqref{example-B1-bar} and \eqref{example-B0-bar-1}, with parameters set as $\sqrt{\gamma_1} = 0.1|\hat\bR|_F$ and $\sqrt{\eta} = 0.5|\hat\bQ|_F$, where $\hat\bR_s = \hat\bQ\hat\bQ^H$. In addition, we investigate the beamformer performance under a modified minimax SINR problem \eqref{min-max-2018-general-rank-1-inner-max-dual-Bprime}, which includes a double-sided energy constraint $\rho_1 \le \tr\bR_1 \le \rho_2$ in the uncertainty set (resulting in the new convex and closed set \eqref{example-B0-bar-1-tr-F}). For comparison, the corresponding maximin SINR problem \eqref{max-min-2018-general-rank-1-Bprime} is also solved using the iterative methods in \cite{KV13-tsp, HV18-spl}. Here, $\rho_1 = 0.5 \tr \hat\bR$ and $\rho_2 = 0.9 \tr \hat\bR$, where the coefficients are chosen below one because $\hat\bR$ represents the sample covariance matrix that includes the desired signal.

\begin{figure}[!h]
\centerline{\resizebox{.5\textwidth}{!}{\includegraphics{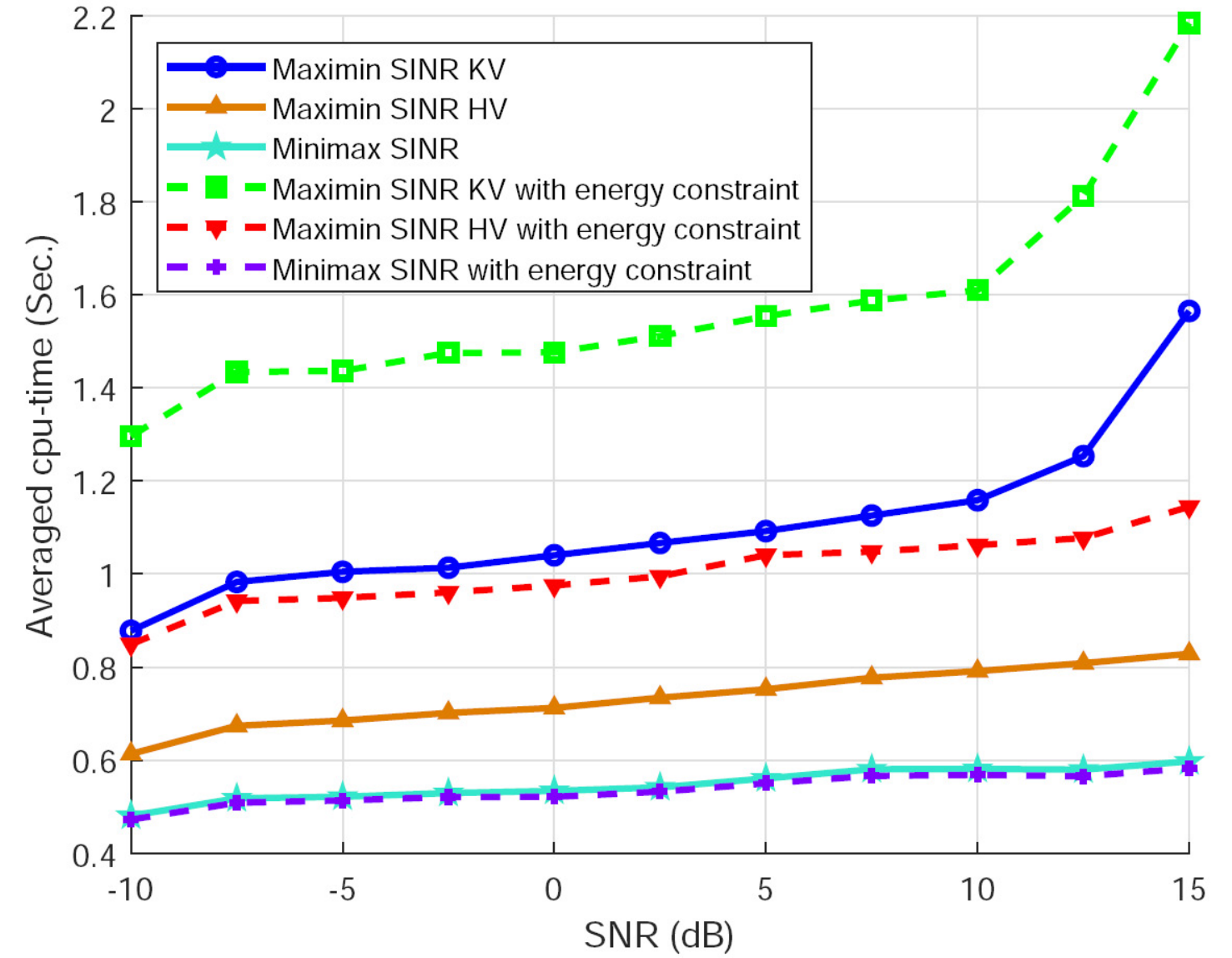}}
}
%\vspace*{-.5\baselineskip}
\caption{Averaged cpu-running time versus SNR, with INR =30~dB, $\rho_1=0.5\tr\hat\bR$, $\rho_2=0.9\tr\hat\bR$ and $T=50$, over 200 simulation runs.}
\label{fig-cpu-time}
%\vspace*{0\baselineskip}
\end{figure}

\begin{figure}[!h]
\centerline{\resizebox{.5\textwidth}{!}{\includegraphics{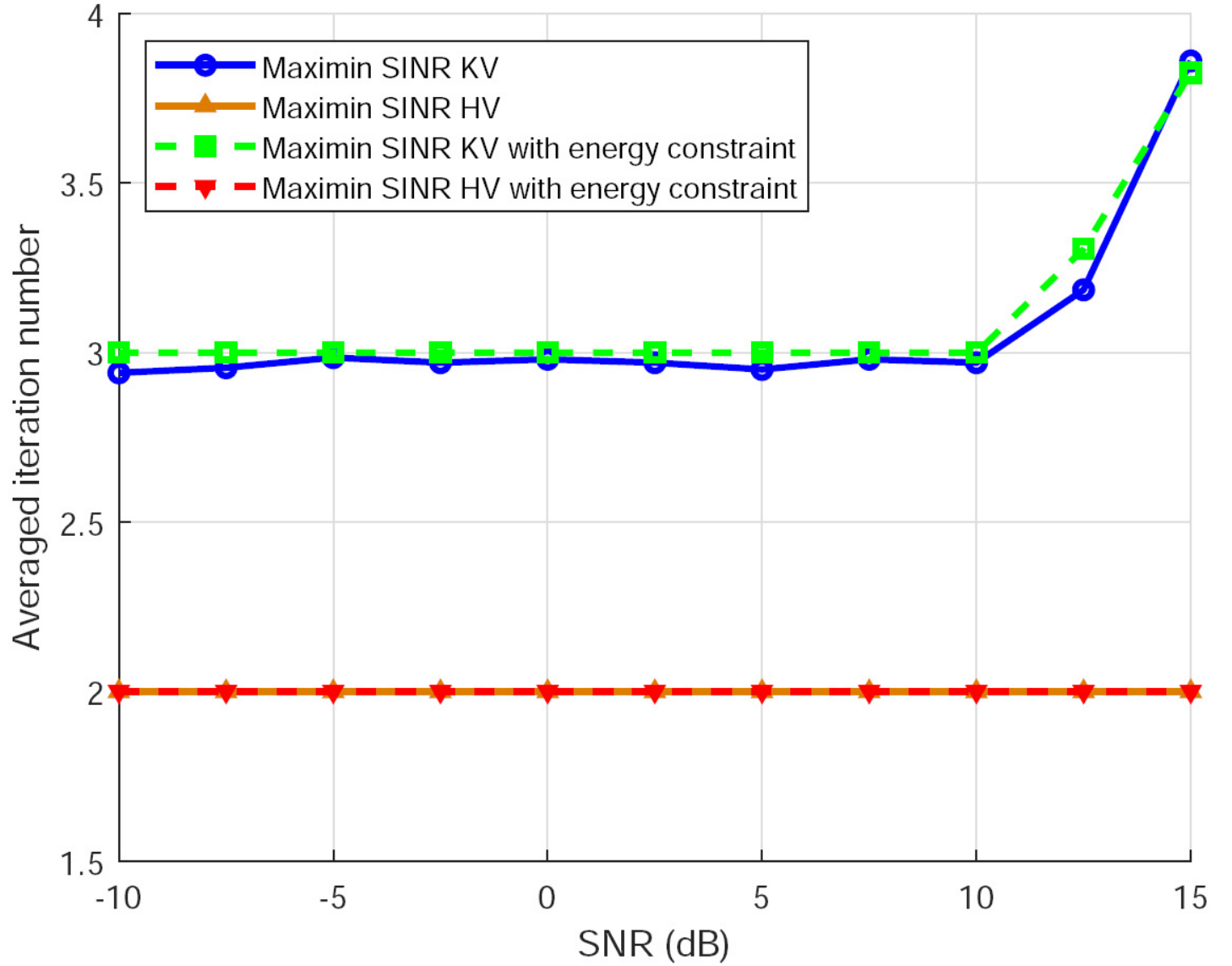}}
}
%\vspace*{-.5\baselineskip}
\caption{Averaged iteration number versus SNR, with INR =30~dB, $\rho_1=0.5\tr\hat\bR$, $\rho_2=0.9\tr\hat\bR$ and $T=50$, over 200 simulation runs.}
\label{fig-iter-number}
%\vspace*{0\baselineskip}
\end{figure}

Fig.~\ref{fig-cpu-time} shows the computational times required to compute the beamformers. Since the minimax SINR equivalent SDP problems \eqref{min-max-2018-general-rank-1-inner-max-dual} (without the energy constraint) and \eqref{min-max-2018-general-rank-1-inner-max-dual-Bprime} (with the energy constraint) are solved in a single step, their computational times are observed to be roughly the same.\footnote{The CPU is Intel Core i9-12900KF with 64~GB RAM on a desktop running the Matlab codes.} This is reasonable because the additional linear energy constraint in \eqref{min-max-2018-general-rank-1-inner-max-dual-Bprime} contributes minimally to the overall computation, which is dominated by handling the PSD and high-dimensional SOC constraints.
Moreover, solving both minimax SINR SDP problems is faster than solving the maximin SINR problems using the iterative algorithms in \cite{KV13-tsp, HV18-spl}, whether or not the energy constraint is included. This is expected, as each minimax SDP is solved in a single short, whereas the approximate algorithms for the maximin problem are iterative, requiring either an SDP or an SOCP to be solved at each iteration.
In addition, the computational cost of the SDP-based iterative algorithm in \cite{KV13-tsp} is higher than that of the SOCP-based iterative procedure in \cite{HV18-spl}, consistent with the observations in \cite{HV18-spl}. Finally, including the energy constraint increases the computation time for both the SDP- and SOCP-based iterative methods, since the size of the SDP or SOCP at each iteration becomes larger. Similar behavior is observed regardless of whether the energy constraint is present or not.

Fig.~\ref{fig-iter-number} illustrates that the average number of iterations still converges under the same settings. Specifically, the maximin SINR problem without (or with) the linear energy constraint, solved using the (similar) SOCP-based approximation algorithm in \cite{HV18-spl}, requires only about two iterations. In contrast, the SDP-based approximation approach in \cite{KV13-tsp} takes around three to four iterations for the maximin SINR problems without (or with) the energy constraint. This observation is consistent with the computational times reported in Fig.~\ref{fig-cpu-time}.

% In words, the objective function values are sufficiently close by using the same threshold and the same initial point is provided in \cite{KV13-tsp}.

\begin{figure}[!h]
\centerline{\resizebox{.51\textwidth}{!}{\includegraphics{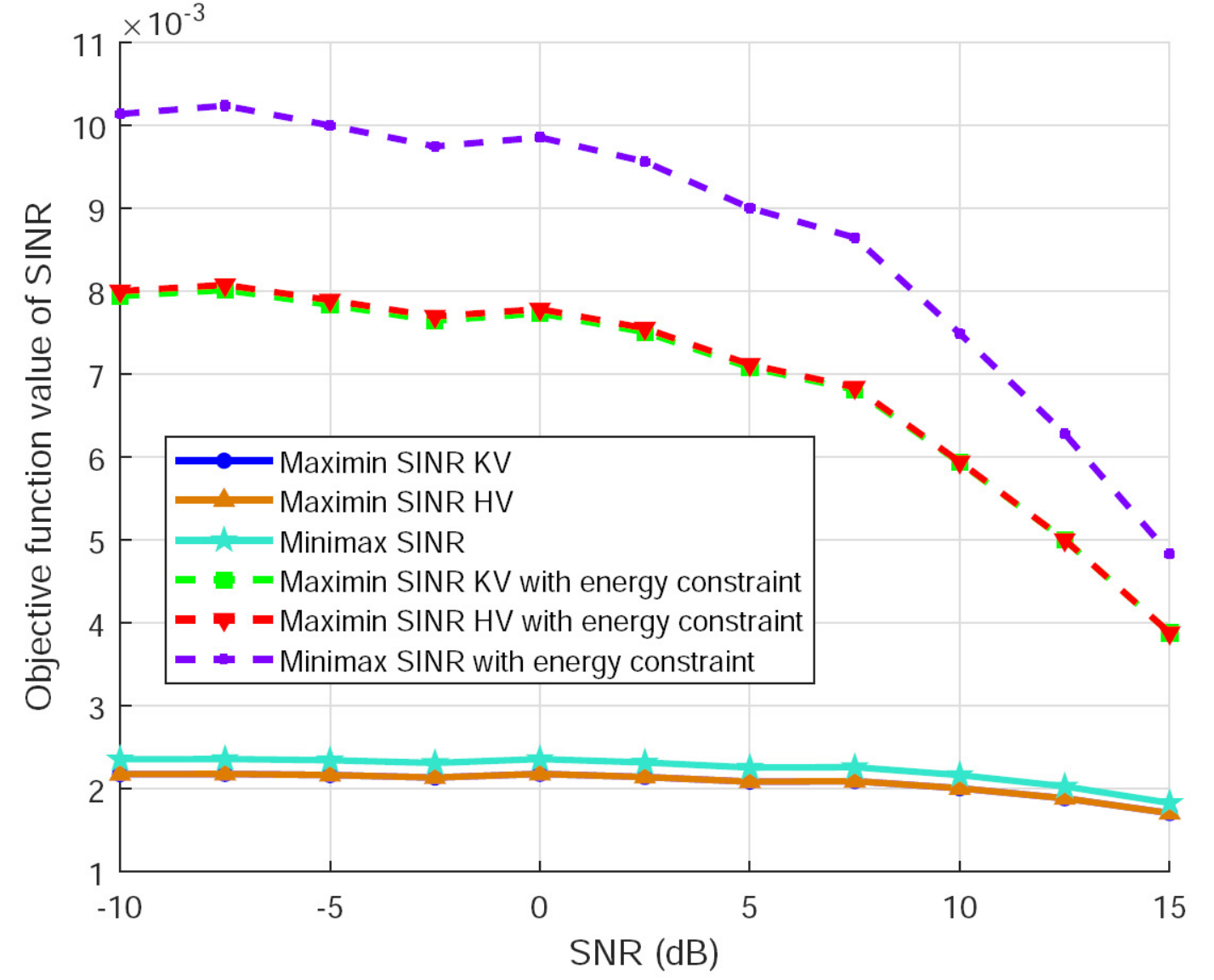}}
}
%\vspace*{-.5\baselineskip}
\caption{Optimal objective function value of SINR versus SNR, with INR =30~dB, $\rho_1=0.5\tr\hat\bR$, $\rho_2=0.9\tr\hat\bR$ and $T=50$, averaged over 200 simulation runs.}
\label{fig-SINR-value}
%\vspace*{0\baselineskip}
\end{figure}

\begin{figure}[!h]
\centerline{\resizebox{.5\textwidth}{!}{\includegraphics{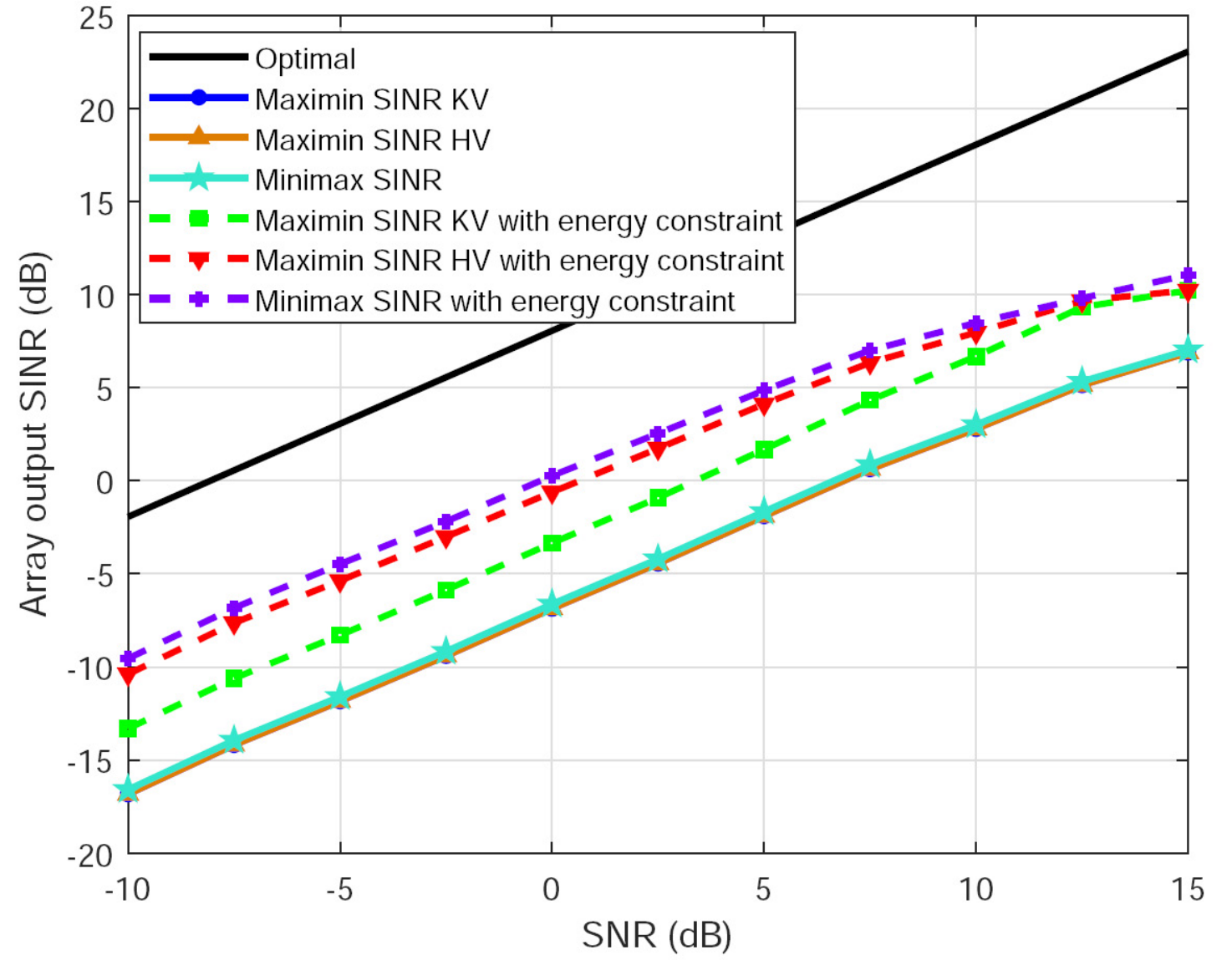}}
}
%\vspace*{-.5\baselineskip}
\caption{Actual array output SINR versus SNR, with INR =30~dB, $\rho_1=0.5\tr\hat\bR$, $\rho_2=0.9\tr\hat\bR$ and $T=50$, averaged over 200 simulation runs.}
\label{fig-array-ouput-SINR}
%\vspace*{0\baselineskip}
\end{figure}

Fig.~\ref{fig-SINR-value} shows the optimal values of the minimax SINR problem with the energy constraint, alongside those of the maximin SINR problem with the energy constraint computed using the approximate algorithms in \cite{KV13-tsp} and \cite{HV18-spl}. It can be seen that the optimal values obtained by the two iterative methods for the maximin SINR problem are very close to each other but are lower than the value of the minimax SINR problem. This indicates that the approximate approaches cannot achieve the global optimum of the maximin SINR problem with the energy constraint. A similar observation holds for the case without the energy constraint: the approximate solutions for the maximin SINR problem in \cite{KV13-tsp} and \cite{HV18-spl} are slightly lower than the corresponding minimax SINR value.

Fig.~\ref{fig-array-ouput-SINR} shows the array output SINR achieved by the beamformers computed from the maximin and minimax SINR problems, both with and without the energy constraint. Three methods are compared: (i)~the minimax SINR problem solved via the equivalent SDP problem \eqref{min-max-2018-general-rank-1-inner-max-dual} or \eqref{min-max-2018-general-rank-1-inner-max-dual-Bprime} in a single short, (ii)~the maximin SINR problem solved by the SDP-based approximation method in \cite{KV13-tsp}, and (iii)~the maximin SINR problem solved by the lighter SOCP-based approximation method in \cite{HV18-spl}.
It is observed that the beamformer obtained from the minimax SINR problem with the energy constraint achieves the highest output SINR. In contrast, the beamformers obtained by solving the maximin SINR problem with the energy constraint using the methods in \cite{KV13-tsp} and \cite{HV18-spl} perform very close to each other but slightly worse than the minimax SINR beamformer. These three performance curves lie above the other three curves corresponding to the minimax and maximin SINR problems without the energy constraint -- namely, the single-short SDP approach and the two iterative approximate algorithms. The relative behaviors of the latter three curves are similar to those of the former three, indicating that the minimax SINR-based beamformers consistently achieve higher array output SINR than the maximin SINR-based beamformers obtained by approximate methods.

\begin{figure}[!h]
\centerline{\resizebox{.51\textwidth}{!}{\includegraphics{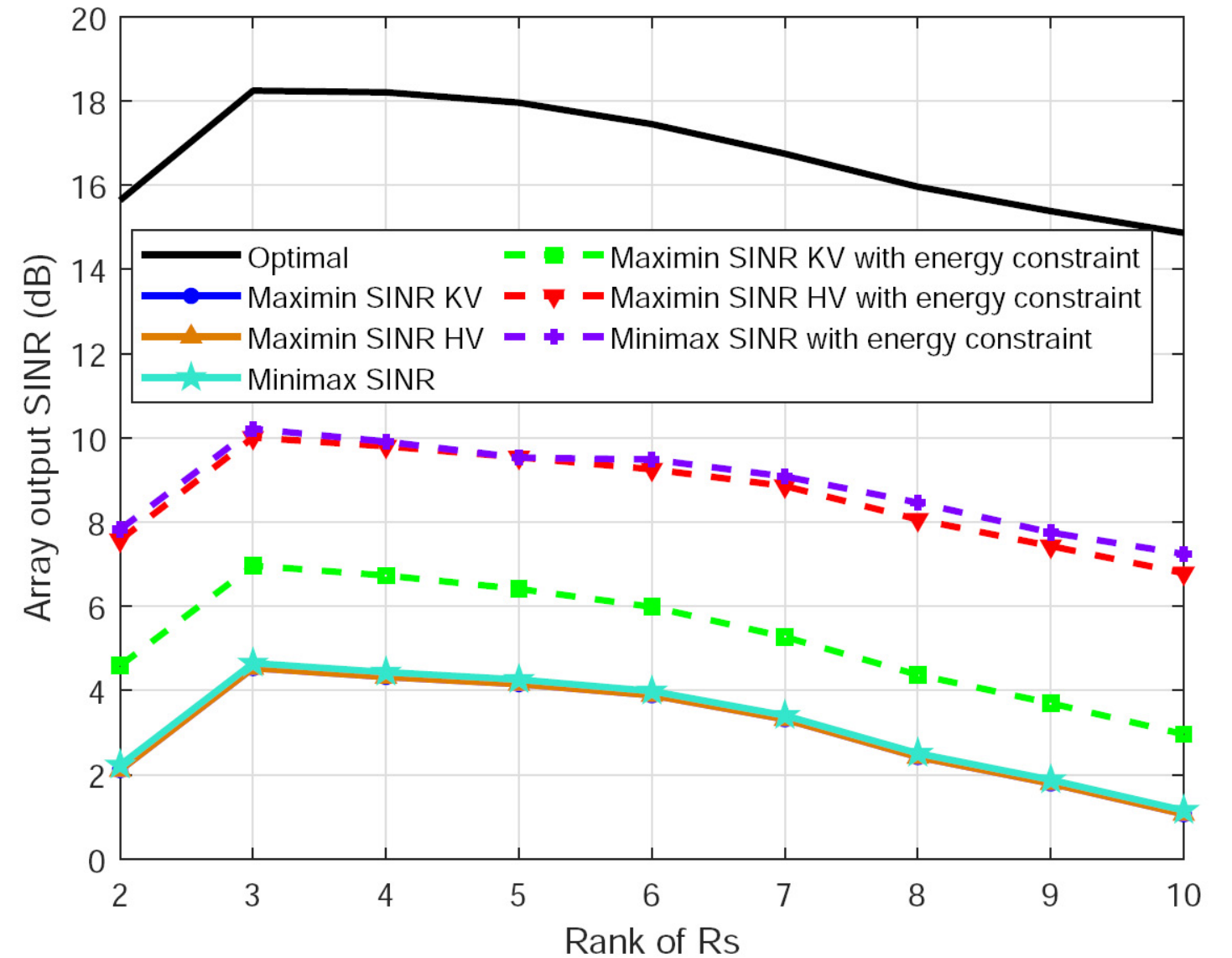}}
}
%\vspace*{-.5\baselineskip}
\caption{The array output SINR versus the rank of actual covariance $\bR_s$, with SNR=10~dB, INR =30~dB, $\rho_1=0.5\tr\hat\bR$, $\rho_2=0.9\tr\hat\bR$ and $T=50$, averaged over 200 simulation runs.}
\label{fig-sinr-rank}
%\vspace*{0\baselineskip}
\end{figure}

We also investigate how the actual angular spread of the desired signal affects beamformer performance, as variations in angular spread change the rank of the desired signal covariance matrix $\bR_s$. Similarly, the number of target signals \cite{Amin-tsp19} can alter the rank of $\bR_s$. All other settings remain the same, with the only difference being the actual angular spread of the desired signal, which determines the rank of $\bR_s$. Specifically, the angular spreads are set to $0.15^\circ, 1^\circ, 2^\circ, 5^\circ, 9^\circ, 14^\circ, 20^\circ, 25^\circ, 30^\circ$, corresponding to $\bR_s$ ranks of $2,3,4,5,6,7,8,9,10$, respectively. The SNR is set to 10~dB.

Fig.~\ref{fig-sinr-rank} plots the beamformer output SINR versus the rank of $\bR_s$. For the minimax SINR problem and the maximin SINR problem (both with the energy constraint), with the former solved via SDP and the latter via two approximate algorithms, we observe that: (i)~the minimax SINR beamformer performs very close to the maximin SINR beamformer solved by the SOCP-based approximation, and (ii)~both outperform the maximin SINR beamformer solved by the SDP-based approximation.
Interestingly, for the minimax and maximin SINR problems without the energy constraint, the three beamformers computed via the corresponding approaches exhibit very similar performance, indicating that the energy constraint primarily influences the performance when using the approximate maximin SINR algorithms.

\section{Conclusion}
We have solved the problem of obtaining a globally optimal solution for the RAB problem via worst-case SINR maximization (the maximin SINR problem) for general-rank signal models. We proved that the maximin and minimax SINR problems are equivalent when the uncertainty sets of the desired signal covariance and the INC matrices are convex and closed. In this case, the minimax SINR problem can be reformulated as a convex problem -- specifically, an SDP when the uncertainty sets are described by finitely many LMI constraints. This relaxes the compactness requirement that was previously (has been known for nearly twenty years) shown to be necessary for proving the equivalence between the minimax SINR and maximin SINR problems in the case of rank-one signal model.

Consequently, a solution to the SDP problem (equivalent to the minimax SINR problem) is also globally optimal for the RAB problem formulated as a maximin SINR problem. In contrast, existing approximation algorithms for the maximin problem return only locally optimal solutions. Our simulations confirm that there can be a positive gap between the globally optimal value obtained via the SDP and the suboptimal values obtained using existing approximate methods. Moreover, the optimal RAB beamformer computed from the SDP achieves a higher array output SINR than the approximate beamformers obtained via iterative maximin algorithms.

\appendix

\subsection{Proof of Proposition \ref{min-max-convex-opt-problem}}\label{proof-prop-min-max-convex-opt-problem}
\proof Let $\lambda$ be the optimal value for the inner maximization problem in \eqref{min-max-2018-general-rank-1}. Hence, we have
\begin{equation}\label{min-max-2018-general-rank-1-inner-max-dual-proof-1}
\lambda=\underset{\bw\ne\bzero}{\sf{maximize}}~~\frac{\bw^H\bQ\bQ^H\bw}{\bw^H\bR_{1}\bw}=\underset{\bu\ne\bzero}{\sf{maximize}}~~\frac{\bu^H\bR_1^{-\frac{1}{2}}\bQ\bQ^H\bR_1^{-\frac{1}{2}}\bu}{\|\bu\|^2},
\end{equation}
with the variable transformation
\begin{equation}\label{variable-transformation-w-u}
\bw=\bR_1^{-\frac{1}{2}}\bu.
\end{equation}

Since the optimal value for the maximization problem in the right-hand side of \eqref{min-max-2018-general-rank-1-inner-max-dual-proof-1} is equal to $\lambda_1(\bR_1^{-\frac{1}{2}}\bQ\bQ^{H}\bR_1^{-\frac{1}{2}})$, which is also the optimal value for the following minimization problem (the dual problem)
\begin{equation}\label{min-max-2018-general-rank-1-inner-max-dual-proof-1-dual}
\begin{array}[c]{cl}
\underset{\lambda}{\sf{minimize}} & \lambda\\
\sf{subject\;to}           & \lambda\bI\succeq \bR_1^{-\frac{1}{2}}\bQ\bQ^{H}\bR_1^{-\frac{1}{2}}. %\left[\begin{array}{cc}\bR_{1}&\bQ\\ \bQ^H&\lambda\bI\end{array}\right]\succeq\bzero\\
%                           & \bQ\in\bar{\cal B}_0,\,\bR_1\in\bar{\cal B}_1,
\end{array}%\right.
\end{equation}
The constraint in problem \eqref{min-max-2018-general-rank-1-inner-max-dual-proof-1-dual} is identical to $\lambda\bR_1\succeq\bQ\bQ^H$, and it follows that \eqref{min-max-2018-general-rank-1-inner-max-dual-proof-1-dual} is equivalent to
\begin{equation}\label{min-max-2018-general-rank-1-inner-max-dual-proof-1-dual-1}
\begin{array}[c]{cl}
\underset{\lambda}{\sf{minimize}} & \lambda\\
\sf{subject\;to}           & \left[\begin{array}{cc}\bR_{1}&\bQ\\ \bQ^H&\lambda\bI\end{array}\right]\succeq\bzero.
%                           & \bQ\in\bar{\cal B}_0,\,\bR_1\in\bar{\cal B}_1,
\end{array}%\right.
\end{equation}
It is verified that the inner maximization problem in \eqref{min-max-2018-general-rank-1} is equivalent to \eqref{min-max-2018-general-rank-1-inner-max-dual-proof-1-dual-1} and therefore, the minimax problem \eqref{min-max-2018-general-rank-1} can be recast into problem \eqref{min-max-2018-general-rank-1-inner-max-dual}, which is a convex problem indeed.

%Suppose that $(\tilde\bQ,\tilde{\bR_1}$ is a feasible point in $\bar{\cal B}_0\times\bar{\cal B}_1$. Then, if we put the constraint $\lambda\in[0,\lambda_1(\tilde{\bR}_1^{-\frac{1}{2}}\tilde{\bQ}\tilde{\bQ}^{H}\tilde{\bR}_1^{-\frac{1}{2}})]$ into problem \eqref{min-max-2018-general-rank-1-inner-max-dual}, the optimal solution set of it will be not altered.
Since any feasible $\lambda$ in \eqref{min-max-2018-general-rank-1-inner-max-dual} is nonnegative, we can rewrite \eqref{min-max-2018-general-rank-1-inner-max-dual} as
\begin{equation}\label{min-max-2018-general-rank-1-inner-max-dual-proof-2}
\begin{array}[c]{cl}
\underset{\lambda,\bQ,\bR_1}{\sf{minimize}} & \lambda^2\\
\sf{subject\;to}           & \left[\begin{array}{cc}\bR_{1}&\bQ\\ \bQ^H&\lambda\bI\end{array}\right]\succeq\bzero\\
                           & \bQ\in\bar{\cal B}_0,\,\bR_1\in\bar{\cal B}_1.
\end{array}%\right.
\end{equation}
It is verified that the feasible set of problem \eqref{min-max-2018-general-rank-1-inner-max-dual-proof-2} is closed and the objective function is coercive, and thus, it follows from the attainment theorem under coerciveness (see, e.g., \cite[Theorem 2.32]{Beck-book-2014}) that the convex problem \eqref{min-max-2018-general-rank-1-inner-max-dual} is solvable.

Suppose that $(\lambda^\star,\bQ^\star,\bR_1^\star)$ is an optimal solution for \eqref{min-max-2018-general-rank-1-inner-max-dual}. Then, we have $\lambda^\star=\lambda_1(\bR_1^{\star -\frac{1}{2}}\bQ^\star\bQ^{\star H}\bR_1^{\star -\frac{1}{2}})$. Let $\bu_1^\star$ be an eigenvector associated with $\lambda^\star$ and $\bw^\star=\bR_1^{\star -\frac{1}{2}}\bu_1^\star$. From \eqref{variable-transformation-w-u}, we conclude that $(\bQ^\star,\bR^
\star,\bw^\star)$ is optimal for the minimax problem \eqref{min-max-2018-general-rank-1}. The proof is complete.
\endproof

\subsection{Proof of Lemma \ref{lambda-1-cvx-fcn}}\label{proof-lemma-lambda-1-cvx-fcn}
\proof Showing the convexity of $f(\bQ,\bR_1)$ is equivalent to checking that the epigraph $\{(\bQ,\bR_1,t)~|~f(\bQ,\bR_1)\le t\}$ is a convex set. In fact, according to \eqref{the-objective-1}, we can verify that the epigraph is given as
\begin{equation}\label{epigraph-1}
\left\{(\bQ,\bR_1,t)~|~\left[\begin{array}{cc}\bR_1&\bQ\\ \bQ^H&t\bI\end{array}\right]\succeq\bzero\right\},
\end{equation}
which is a set of LMIs. Therefore, the epigraph is convex.
\endproof

\subsection{Proof of Proposition \ref{first-partial-derivative-1}}\label{proof-prop-first-partial-derivative-1}
\proof It follows from \cite{eigen-derivative} that
\begin{equation}\label{first-partial-derivative-val-2}
\frac{\partial f}{\partial \bQ}=\frac{\bu_1^H\partial(\bR_1^{-\frac{1}{2}}\bQ\bQ^{H}\bR_1^{-\frac{1}{2}})\bu_1}{\partial\bQ}=\frac{\partial (\bu_1^H\bR_1^{-\frac{1}{2}}\bQ\bQ^{H}\bR_1^{-\frac{1}{2}}\bu_1)}{\partial\bQ},
\end{equation}where $\bu_1$ is an eigenvector associated with $\lambda_1(\bR_1^{-\frac{1}{2}}\bQ\bQ^{H}\bR_1^{-\frac{1}{2}})$. Therefore, we have
\begin{eqnarray}\nonumber
\frac{\partial (\bu_1^H\bR_1^{-\frac{1}{2}}\bQ\bQ^{H}\bR_1^{-\frac{1}{2}}\bu_1)}{\partial\bQ}&=&\frac{\partial \tr((\bR_1^{-\frac{1}{2}}\bu_1\bu_1^H\bR_1^{-\frac{1}{2}}\bQ\bQ^{H}))}{\partial\bQ}\\ \label{first-partial-derivative-val-3}
 &=&2\bR_1^{-\frac{1}{2}}\bu_1\bu_1^H\bR_1^{-\frac{1}{2}}\bQ.
\end{eqnarray}
Then, \eqref{first-partial-derivative-val-2} and \eqref{first-partial-derivative-val-3} lead to
\begin{equation}\label{first-partial-derivative-val-4}
\frac{\partial f}{\partial \bQ}=2\bR_1^{-\frac{1}{2}}\bu_1\bu_1^H\bR_1^{-\frac{1}{2}}\bQ,
\end{equation}which implies that
\begin{equation}\label{first-partial-derivative-val-5}
\frac{\partial f}{\partial \bQ}\left|_{(\bQ^\star,\bR_1^\star)}\right.=2\bR_1^{\star-\frac{1}{2}}\bu_1^\star\bu_1^{\star H}\bR_1^{\star-\frac{1}{2}}\bQ^\star=2\bw^\star\bw^{\star H}\bQ^\star.
\end{equation}Based on \eqref{first-partial-derivative-val-5}, it is not hard to verify \eqref{first-partial-derivative-val-11}.
\endproof

\subsection{Proof of Lemma \ref{svd-u1-v1-lemma}}\label{proof-lemma-svd-u1-v1-lemma}
\proof It is easy to verify that
\begin{eqnarray}
\bu_1^H\bA\bv_1&=&\sqrt{\lambda_1}\\
               &=&\sqrt{\lambda_1(\bA\bA^H)}=\sqrt{\lambda_1(\bA^H\bA)}\\
               &=&\sqrt{\bu_1^H\bA\bA^H\bu_1}=\sqrt{\bv_1^H\bA^H\bA\bv_1}.
\end{eqnarray}
Let $\hat\bv_1=\bA\bv_1$. Then, we have
\begin{equation}\label{svd-u1-v1-proof-1}
\bu_1^H\hat\bv_1=\sqrt{\lambda_1}=\|\hat\bv_1\|,
\end{equation}and hence
\begin{equation}\label{svd-u1-v1-proof-2}
\bu_1^H\hat\bv_1\hat\bv_1^H\bu_1=\lambda_1=\|\hat\bv_1\|^2=\lambda_1(\hat\bv_1\hat\bv_1^H).
\end{equation}
Recall that
\begin{equation}\label{svd-u1-v1-proof-3}
\bu_1^H\hat\bv_1\hat\bv_1^H\bu_1=\lambda_1(\hat\bv_1\hat\bv_1^H)=\underset{\|\bu\|^2=1}{\sf{maximize}}~~{\bu^H\hat\bv_1\hat\bv_1^H\bu}.
\end{equation}
Therefore, $\bu_1$ is optimal for the maximization problem in the right-hand side of \eqref{svd-u1-v1-proof-3}, and it follows from \eqref{svd-u1-v1-proof-2} that $\lambda_1$ is the largest eigenvalue of $\hat\bv_1\hat\bv_1^H$ and $\bu_1$ is the eigenvector associated with $\lambda_1$. Since $\hat\bv_1\hat\bv_1^H$ is of rank one, then we have
\begin{equation}\label{svd-u1-v1-proof-4}
\hat\bv_1\hat\bv_1^H=\lambda_1\bu_1\bu_1^H,
\end{equation}namely,
\begin{equation}\label{svd-u1-v1-proof-5}
\bA\bv_1\bv_1^H\bA^H=\lambda_1\bu_1\bu_1^H.
\end{equation}This shows equality \eqref{svd-u1-v1-1}.
Similarly, we can prove \eqref{svd-u1-v1-2}. The proof is complete.
\endproof

\subsection{Proof of Proposition \ref{second-partial-derivative-1}}\label{proof-prop-second-partial-derivative-1}
\proof Let $\bR_1^{-\frac{1}{2}}\bQ=\bU\bSigma\bV^H$ be the singular value decomposition, and $\bu_1$ and $\bv_1$ be the first column vectors of $\bU$ and $\bV$, respectively. Observe that
\begin{equation}\label{the-objective-2}
f(\bQ,\bR_1)=\lambda_1(\bR_1^{-\frac{1}{2}}\bQ\bQ^{H}\bR_1^{-\frac{1}{2}})=\lambda_1(\bQ^{H}\bR_1^{-1}\bQ).
\end{equation}
Therefore, it follows from \cite{eigen-derivative} that
\begin{equation}\label{second-partial-derivative-val-2}
\frac{\partial f}{\partial \bR_1}=\frac{\bv_1^H\partial(\bQ^{H}\bR_1^{-1}\bQ)\bv_1}{\partial\bR_1}=\frac{\partial (\bv_1^H\bQ^{H}\bR_1^{-1}\bQ\bv_1)}{\partial\bR_1},
\end{equation}
which is further equal to
\begin{equation}\label{second-partial-derivative-val-3}
-(\bR_1^{-1}\bQ\bv_1\bv_1^H\bQ^{H}\bR_1^{-1})=-(\bR_1^{-\frac{1}{2}}\bR_1^{-\frac{1}{2}}\bQ\bv_1\bv_1^H\bQ^{H}\bR_1^{-\frac{1}{2}}\bR_1^{-\frac{1}{2}}).
\end{equation}
Due to \eqref{svd-u1-v1-1}, the latter in turn is equal to
\begin{equation}\label{second-partial-derivative-val-4}
-\lambda_1\bR_1^{-\frac{1}{2}}\bu_1\bu_1^H\bR_1^{-\frac{1}{2}}.
\end{equation}
In \eqref{second-partial-derivative-val-4}, $\lambda_1=\lambda_1(\bR_1^{-\frac{1}{2}}\bQ\bQ^{H}\bR_1^{-\frac{1}{2}})$. Therefore, we obtain
\begin{equation}\label{second-partial-derivative-val-5}
\frac{\partial f}{\partial \bR_1}\left|_{(\bQ^\star,\bR_1^\star)}\right.=-\lambda^\star\bR_1^{\star -\frac{1}{2}}\bu_1^\star\bu_1^{\star H}\bR_1^{\star -\frac{1}{2}}=-\lambda^\star\bw^\star\bw^{\star H}.
\end{equation}
Using the latter, we immediately obtain \eqref{second-partial-derivative-val-11}.
\endproof

\subsection{Proof of Proposition \ref{cvx-obj-fcn-prop}}\label{proof-prop-cvx-obj-fcn-prop}
\proof Assume that $(\bQ_i,\bR_{1i})\in\mathbb{C}^{N\times M}\times{\cal H}_+^N$, $i=1,2$, and $\alpha\in(0,1)$. We wish to show that
\begin{equation}\label{cvx-fcn}
\begin{array}{l}
\alpha h(\bQ_1,\bR_{11}) + (1-\alpha) h(\bQ_2,\bR_{12})\\
~~~~~~~~~~~~~~~~~~~~~\ge h(\alpha\bQ_1+(1-\alpha)\bQ_2,\alpha\bR_{11}+(1-\alpha)\bR_{12}),
\end{array}
\end{equation}i.e.,
\begin{equation}\label{cvx-fcn-1}
\begin{array}{l}
\frac{\alpha\|\bQ_1^H\bw\|^2}{\bw^{H}\bR_{11}\bw}+\frac{(1-\alpha)\|\bQ_2^H\bw\|^2}{\bw^{H}\bR_{12}\bw}\ge\frac{\|(\alpha\bQ_1+(1-\alpha)\bQ_2)^H\bw\|^2}{\bw^{H}(\alpha\bR_{11}+(1-\alpha)\bR_{12})\bw},
%~~~~~~~~~~~~~~~~~~~~~~~~~~~~~~~~~~~~~~~~~~\ge\frac{\|(\alpha\bQ_1+(1-\alpha)\bQ_2)\bw^\star\|^2}{\bw^{\star H}(\alpha\bR_{11}+(1-\alpha)\bR_{12})\bw^\star}.
\end{array}
\end{equation}which is equivalent to
\begin{equation}\label{cvx-fcn-2}
\begin{array}{l}
\bw^{H}(\alpha\bR_{11}+(1-\alpha)\bR_{12})\bw(\alpha\|\bQ_1^H\bw\|^2\bw^{H}\bR_{12}\bw\\
~~~~~~~~~~~~~~~~~~~~~~~~~~~~~~~~~~~~~+(1-\alpha)\|\bQ_2^H\bw\|^2\bw^{H}\bR_{11}\bw)\\
~~~~~~~~~\ge\|(\alpha\bQ_1+(1-\alpha)\bQ_2)^H\bw\|^2\bw^{H}\bR_{11}\bw\bw^{H}\bR_{12}\bw.
%~~~~~~~~~~~~~~~~~~~~~~~~~~~~~~~~~~~~~~~~~~\ge\frac{\|(\alpha\bQ_1+(1-\alpha)\bQ_2)\bw^\star\|^2}{\bw^{\star H}(\alpha\bR_{11}+(1-\alpha)\bR_{12})\bw^\star}.
\end{array}
\end{equation}
To prove \eqref{cvx-fcn-2}, let us recast the left-hand side of the inequality into
\begin{equation}\label{cvx-fcn-2-LHS}
\begin{array}{l}
(\alpha\bw^{H}\bR_{11}\bw+(1-\alpha)\bw^{H}\bR_{12}\bw)(\|\alpha\bQ_1^H\bw\|^2\frac{\bw^{H}\bR_{12}\bw}{\alpha}\\
~~~~~~~~~~~~~~~~~~~~~~~~~~~~~~~~~~~+\|(1-\alpha)\bQ_2^H\bw\|^2\frac{\bw^{H}\bR_{11}\bw}{(1-\alpha)}).
%~~~~~~~~~~~~~~~~~~~~~~~~~~~~~~~~~~~~~~~~~~\ge\frac{\|(\alpha\bQ_1+(1-\alpha)\bQ_2)\bw^\star\|^2}{\bw^{\star H}(\alpha\bR_{11}+(1-\alpha)\bR_{12})\bw^\star}.
\end{array}
\end{equation}
It follows from the Cauchy-Schwarz inequality that \eqref{cvx-fcn-2-LHS} is greater than or equal to
\begin{equation}\label{cvx-fcn-2-RHS}
\begin{array}{l}
((\|\alpha\bQ_1^H\bw\|+\|(1-\alpha)\bQ_2^H\bw\|)\sqrt{\bw^{H}\bR_{11}\bw \bw^{H}\bR_{12}\bw})^2,
%~~~~~~~~~~~~~~~~~~~~~~~~~~~~~~~~~~~~~~~~~~\ge\frac{\|(\alpha\bQ_1+(1-\alpha)\bQ_2)\bw^\star\|^2}{\bw^{\star H}(\alpha\bR_{11}+(1-\alpha)\bR_{12})\bw^\star}.
\end{array}
\end{equation}
which, by the triangle inequality, is greater than or equal to
\begin{equation}\label{cvx-fcn-2-RHS-2}
\begin{array}{l}
\|(\alpha\bQ_1+(1-\alpha)\bQ_2)^H\bw\|^2\bw^{H}\bR_{11}\bw \bw^{H}\bR_{12}\bw.
%~~~~~~~~~~~~~~~~~~~~~~~~~~~~~~~~~~~~~~~~~~\ge\frac{\|(\alpha\bQ_1+(1-\alpha)\bQ_2)\bw^\star\|^2}{\bw^{\star H}(\alpha\bR_{11}+(1-\alpha)\bR_{12})\bw^\star}.
\end{array}
\end{equation}
%coincides exactly with the right-hand side of \eqref{cvx-fcn-2}, and
Therefore, the inequality \eqref{cvx-fcn-2} holds, and we conclude that $h(\bQ,\bR_1)$ is convex over $\mathbb{C}^{N\times M}\times{\cal H}_+^N$.
\endproof

\subsection{Proof of Proposition \ref{cvx-problem-1-optsoln}}\label{proof-prop-cvx-problem-1-optsoln}
\proof Since $(\bQ^\star,\bR_1^\star)$ is optimal for \eqref{min-max-2018-general-rank-1-inner-max-dual-equv-1} and $\bw^\star$ is defined as in \eqref{define-w-star}, it follows from Proposition~\ref{optimality-conditions-thm} that $(\bw^\star,\bQ^\star,\bR_1^\star)$ satisfies the optimality condition \eqref{optimality-conditions-2}.

On the other hand, problem \eqref{cvx-problem-1} is convex, and therefore, applying \cite[Theorem 9.7]{Beck-book-2014} yields that
\begin{equation}\label{optimality-conditions-min-1}
\Re(\tr(g_1^{\star H}(\bQ-\bQ^\star))+\tr(g_2^{\star H}(\bR_1-\bR_1^\star)))\ge0,\forall\bQ\in\bar{\cal
B}_0,\bR_1\in\bar{\cal B}_1,
\end{equation}
where
\begin{equation}\label{two-partial-derivatives-h-fcn}
g_1^\star=\frac{\partial h}{\partial \bQ}\left|_{(\bQ^\star,\bR_1^\star)}\right.,\,g_2^\star=\frac{\partial h}{\partial \bR_1}\left|_{(\bQ^\star,\bR_1^\star)}\right.,
\end{equation}
and the function $h$ is specified to the objective function of problem \eqref{cvx-problem-1}:
\begin{equation}\label{obj-fcn-cvx-problem-1}
h(\bQ,\bR_1)=\frac{\bw^{\star H}\bQ\bQ^H\bw^\star}{\bw^{\star H}\bR_{1}\bw^\star}.
\end{equation}
It is easy to verify that
\begin{equation}\label{first-partial-derivatives-h-fcn}
\frac{\partial h}{\partial \bQ}=\frac{\partial\tr\left(\frac{\bw^\star\bw^{\star H}\bQ\bQ^H}{\bw^{\star H}\bR_1\bw^\star}\right)}{\partial\bQ}=2\frac{\bw^\star\bw^{\star H}\bQ}{\bw^{\star H}\bR_1\bw^\star},
\end{equation}
and
\begin{equation}\label{second-partial-derivatives-h-fcn}
\frac{\partial h}{\partial \bR_1}=\frac{\partial\tr\left(\frac{\bw^\star\bw^{\star H}\bQ\bQ^H}{\bw^{\star H}\bR_1\bw^\star}\right)}{\partial\bR_1}=-\frac{\bw^{\star H}\bQ\bQ^H\bw^\star}{(\bw^{\star H}\bR_1\bw^\star)^2}\bw^\star\bw^{\star H}.
\end{equation}
Using \eqref{two-partial-derivatives-h-fcn}, we have
\begin{equation}\label{first-partial-derivatives-h-fcn-1}
g_1^\star=2\bw^\star\bw^{\star H}\bQ^\star,
\end{equation}
where we employ $\bw^{\star H}\bR_1^\star\bw^\star=1$, and
\begin{equation}\label{second-partial-derivatives-h-fcn-1}
g_2^\star=-\lambda^\star\bw^\star\bw^{\star H},
\end{equation}
and apply $\lambda^\star=\bw^{\star H}\bQ^\star\bQ^{\star H}\bw^\star$ (see \eqref{w-Q-QH-w-all-star}).

Therefore, the optimality condition \eqref{optimality-conditions-min-1} can be rewritten as
\begin{equation}\label{optimality-conditions-min-2}
2\Re(\bw^{\star H}(\bQ-\bQ^\star)\bQ^{\star H}\bw^\star)-\lambda^\star\bw^{\star H}(\bR_1-\bR_1^\star)\bw^\star\ge0,
\end{equation}$\forall\bQ\in\bar{\cal B}_0,\bR_1\in\bar{\cal B}_1$, which  is equivalent to
\begin{equation}\label{optimality-conditions-min-3}
2\Re(\bw^{\star H}\bQ\bQ^{\star H}\bw^\star)-\bw^{\star H}(\lambda^\star(\bR_1-\bR_1^\star)+2\bQ^\star\bQ^{\star H})\bw^\star\ge0,
\end{equation}$\forall\bQ\in\bar{\cal B}_0,\bR_1\in\bar{\cal B}_1$. Note that the optimality condition \eqref{optimality-conditions-min-3} for problem \eqref{cvx-problem-1} is the same as \eqref{optimality-conditions-2}. In other words, the two convex problems \eqref{cvx-problem-1} and \eqref{min-max-2018-general-rank-1-inner-max-dual-equv-1} share the same optimality condition. Hence, $(\bQ^\star,\bR_1^\star)$, which solves \eqref{min-max-2018-general-rank-1-inner-max-dual-equv-1}, is optimal for \eqref{cvx-problem-1} with the optimal value
\begin{equation}\label{opt-val-cvx-problem-1}
h(\bQ^\star,\bR_1^\star)=\frac{\bw^{\star H}\bQ^\star\bQ^{\star H}\bw^\star}{\bw^{\star H}\bR_{1}^\star\bw^\star}=\bw^{\star H}\bQ^\star\bQ^{\star H}\bw^\star,
\end{equation}
and vice versa. The proof is complete.
\endproof

\end{document}